\pgfplotsset{compat=1.17} 
\newif\ifshowcomments
\newcommand{\docomment}[3]{\ifshowcomments \textcolor{#1}{[ #2 : #3 ]} \fi}
\newcommand{\bo}[1]{\docomment{brown}{Bo}{#1}}
\newcommand{\maneesha}[1]{\docomment{blue}{Maneesha}{#1}}
\newtheorem{theorem}{Theorem}
\newtheorem{prop}{Proposition}
\newtheorem*{prop*}{Proposition}
\newtheorem{lemma}{Lemma}
\newtheorem{corollary}{Corollary}
\newtheorem*{corollary*}{Corollary}
\newtheorem{example}{Example}
\newtheorem{definition}{Definition}
\theoremstyle{definition}\newtheorem{remark}{Remark}
\theoremstyle{definition}
\DeclareMathOperator*{\E}{\mathbb{E}}
\newcommand{\reals}{\mathbb{R}}
\renewcommand{\P}{\mathcal{P}}
\renewcommand{\bar}[1]{\overline{#1}}
\newcounter{hposcnt}
\renewcommand*{\thehposcnt}{hpos\number\value{hposcnt}}
\NewDocumentCommand{\lplabel}{o m}{%
  \stepcounter{hposcnt}%
  \zsaveposx{\thehposcnt l}%
  \zref@refused{\thehposcnt l}%
  \zref@refused{hpos0l}%
  \makebox[0pt][r]{\makebox[\dimexpr\zposx{\thehposcnt l}sp-\zposx{hpos0l}sp][l]{#2}}%
  \IfNoValueF{#1}
    {\def\@currentlabel{#2}\ltx@label{#1}}
}
\begin{document}

\title{Contracts with Information Acquisition, via Scoring Rules}
\author{Maneesha Papireddygari \footnote{University of Colorado, Boulder, \texttt{mapa2091@colorado.edu}} \and Bo Waggoner \footnote{University of Colorado, Boulder, \texttt{bwag@colorado.edu}}}  % TODO: update submission number

\maketitle

\begin{abstract}
  We consider a principal-agent problem where the agent may privately choose to acquire relevant information prior to taking a hidden action. 
  This model generalizes two special cases: a classic moral hazard setting, and a more recently studied problem of incentivizing information acquisition (IA).
  We show that all of these problems can be reduced to the design of a proper scoring rule.
  Under a limited liability condition, we consider the special cases separately and then the general problem.
  We give novel results for the special case of IA, giving a closed form ``pointed polyhedral cone'' solution for the general multidimensional problem.
  We also describe a geometric, scoring-rules based solution to the case of the classic contracts problem.
  Finally, we give an efficient algorithm for the general problem of Contracts with Information Acquisition.
\end{abstract}

\section{Introduction}
In principal-agent problems, a principal (e.g. employer) delegates a task to an agent (e.g. employee).
The agent in such cases generally has access to some information that the principal does not.
In one setting, the agent chooses an action that influences an outcome of interest.
The principal cannot directly observe the agent's actions and can only pay based on the observable outcome.
This is termed a \emph{hidden action} or \emph{moral hazard} model, and the design of contracts in this setting has been studied extensively in economics (e.g. \citet{mas1995microeconomic}) and, recently, algorithmic game theory (e.g. \citet{dutting2019simple}).
In a second setting, the agent can gather relevant information not available to the principal about the outcome of interest.
The principal again cannot directly observe this information, but can only judge its approximate quality based on the observable outcome.
Recent work has investigated the design of \emph{proper scoring rules}~\cite{savage1971elicitation,gneiting2007strictly} to incent the agent to \emph{acquire} information and report it~\citep{li2020optimization,chen2021optimal}.

\vskip1em
\noindent This paper considers a principal-agent problem featuring both types of asymmetry:
\begin{enumerate}
  \item A principal delegates a task to the agent.
  \item The agent can first choose to acquire costly information that is relevant to the final outcome and the best course of action.
  \item The agent agrees to a contract with the principal.
  \item Then, the agent selects and carries out a costly action.
  \item The principal does not observe the action or whether information acquisition takes place, but only observes a final outcome, a noisy function of both information and action.
        The principal pays the agent according to the contract, which is based on the observed outcome alone.
\end{enumerate}

To incent the agent, the principal offers in Step 1 a \emph{menu} of contracts, each a function mapping the observed outcome to the agent's payment.
The agents selects a contract from the menu in Step 3.
%Acquiring information, although costly, thus enables the agent to select a more beneficial contract.
The principal designs the menu in order to incentivize this hidden information acquisition as well as to incentivize the desired choice of hidden action conditioned on the information.

\paragraph{Examples.}
We modify two examples, given by \citet{dutting2019simple} for the pure hidden-action setting, to this setting that also involves information acquisition.
First consider a clothing company (principal) employing a marketing firm (agent).
The marketing firm can choose to conduct a costly survey to learn customer preferences.
Then, regardless of whether it acquired the information or not, it will run a marketing campaign, in which it has several choices of effort level and strategy, which may be influenced by the survey results.
The clothing company wishes to design contracts based only on objective observables, such as quantity of sales, to incent the firm to conduct the survey and select an appropriate strategy based on the survey results.

Second, consider an insurance company (principal) designing a contract with a vehicle owner (agent).
The owner can take a (costly) educational course, e.g. on safe driving or on maintenance.
Then, the owner makes repair and driving style decisions, possibly based on the information acquired in the course.
These influence the possible financial outcomes, which are borne by the insurance company.
The insurance company wishes to design contracts that incent the agent to learn and then practice good maintenance and safe driving.
%\maneesha{Not good - TV content example - Let a TV channel be only concerned with its user response to a show that air. It appoints a show producer and gives her a contract that pays certain amount as bonus if the show is a hit and only covers a bare minimum cost for time and effort in the show is a flop. As the show producer, lets say they believe ex-ante that subscribers might like female-led shows 70\% of the time and now they have a choice to take a time consuming survey whether people will resonate more to female lead ones shows. If she decides to take the survey and updates her belief that its 50\% actually. She then has the option to choose whether she thinks hiring a male lead is a better choice or hiring a female one. }

\paragraph{Limited liability.}
We will impose a standard \emph{limited liability} constraint on the principal's choice of contracts: the agent must never be required to make a payment to the principal.
Limited liability, often used in the contracts setting (e.g. \citet{bolton2004contract,tadelis2005lectures}), plays a role similar to risk aversion of the agent.
Without such an assumption, it is often possible for the principal to pay zero in expectation and transfer all risk onto the agent. However, these restrictions are often more realistic and give rise to a more interesting problem \citep{grossman1983analysis}.

%Limited liability has not yet been explored in the information acquisition literature.
%Instead, research so far has assumed boundedness of the payment~\citep{li2020optimization,chen2021optimal}.
%While boundedness is a reasonable restriction for the principal in some settings, we believe limited liability is often a more natural assumption in that setting.
%For grading students, for example, assigning negative scores is generally not possible.
%In monetary principal-agent settings, negative payments are again undesirable due to risk aversion of the agent or impracticality.

\subsection{Our results}
\paragraph{Reducing the design space: proper scoring rules.}
Because the design space for contract menus is large, our first step is to reduce it.
We observe (Section \ref{sec:scoring-rules}) that the general problem, perhaps surprisingly, reduces without loss of generality to the design of a \emph{proper scoring rule}: a function $s(p,\omega)$ that assigns a payment for a prediction $p$ when the true outcome turns out to be $\omega$.
While this observation has been made in the information acquisition literature at least as far back as \cite{babaioff2011only}, we have not found it in conjunction with hidden actions.
Scoring rules are discussed in Section \ref{sec:scoring-rules}.
The main intuition is that selecting a contract can be viewed as the agent revealing their belief or prediction (by revealed preference); by a revelation principle, any contract menu can therefore be simulated with a truthful one, i.e. a proper scoring rule.

%% Bo: I think this takes too much space and time for this part of the paper.
%The main intuition is this: for any choices and actions the agent makes, she has a resulting belief $p$ about the distribution of the obeserved outcome.
%This $p$ completely determines her preference over contracts from the menu.
%Her selection of a contract can be viewed as reporting a prediction $p$, and by a revelation principle, we can without loss of generality design a truthful mechanism for revealing $p$, i.e. a proper scoring rule.
%Therefore, although designed for settings in which agents acquire \emph{no} new information and take \emph{no} action, proper scoring rules are ``complete'' for a problem that involves both.
%Another interpretation is that, when rewards for accurate prediction are high, an agent is incentivized to take the hidden action that allows them to predict the best; and by shaping the rewards, we can incentivize a desirable hidden action.
%In fact, any contract design can be viewed in this way.

We then separately apply the scoring rule perspective to the two special cases.
\paragraph{Information acquisition.}
If there are no hidden actions, i.e. we remove Step 4, then our setting reduces to Information Acquisition (IA).
Here, the agent's choice of contract simply reveals the outcome of the signal they acquired, so this models a principal paying an agent to collect and forecast based on hidden information.
We show (Section \ref{sec:info_acq}) that in our setting, an optimal scoring rule takes the form of a \emph{polyhedral pointed cone}.
We furthermore give a closed-form solution, a maximum over a number of carefully chosen ``indicator'' contracts.
%We believe that this is a significant contribution as we give a precise closed form solution to the problem for any distribution of posteriors.

\paragraph{Hidden action.}
The second special case, the classic hidden-action problem or contracts with moral hazard (Section \ref{sec:contract-theory}), is well known.
However, we use the scoring-rule perspective to provide some geometric characterizations of feasible and optimal solutions that may be of interest.
In particular, while it is classical that some menu consisting of a single contract is always optimal, we show that this optimal contract must be a \emph{shifted subtangent} of the \emph{convexified cost curve}, and we geometrically characterize the set of contracts that can be added to the menu without compromising optimality.
These results shed some light on necessary conditions for the structure of the optimal solution to our general problem.

\paragraph{The general problem.}
Finally, we return to the general problem (Section \ref{sec:general}).
While it appears impossible to obtain a closed-form solution -- even in the contracts setting, the solution involves a linear program (LP) -- we give an efficient algorithm for solving it via an LP.
The key point of the proof is that proper scoring rules are captured by convex functions, and we are able to show that a piecewise linear convex function with a small number of pieces is optimal.
Utilizing insights from the special cases, we reduce the dimension of the LP considerably.

\subsection{Related work} \label{sec:related-work}
\paragraph{Information acquisition.}
Prior work of \citet{li2020optimization,chen2021optimal} considered similar problems, namely, designing proper scoring rules to maximize the incentive of an agent to acquire a signal before reporting a prediction.
Our problem is framed slightly differently, as the cost of acquiring the signal is given and the goal is minimizing expected payment.
More significant is that we have different objective.
\citet{chen2021optimal} maximizes worst-case payoff difference between acquiring signal and not.
\citet{li2020optimization}) maximizes the difference between expected score for acquiring and reporting the signal posterior and reporting the prior, while having strict upper and lower bounds on scores.
All of these have somewhat different use-cases.
Ours is aimed at a setting with strict limited liability constraints, but no upper bound on the maximum payment.
We mention that these works each provide a number of additional results and investigations, but omit further discussion because those are less related to this work.

%For peer-grading, \citet{li2020optimization} might be most appropriate as scores need to bounded but for cases where there are no such constraints, our framework might make more sense.

%LL is a more natural requirement in many settings, as those papers' solutions in general can assign unboundedly negative scores (payments) to the agent.
Both papers generally find that inverted cones are optimal, which will also be the case in our setting.
Some differences are that our result is more general than \citet{li2020optimization}, which provides a solution for multidimensional $\Omega$ only with a symmetric prior; and that the solution to our problem is more complex than in \citet{chen2021optimal}, where it is an inverted pyramid formed by a point at $(p_0,0)$ and corners at $(\delta_{\omega},1)$ for each $\omega$.
We visualize a comparison of the solutions obtained under the different assumptions in Figure \ref{fig:compare-related-work}.

\paragraph{Contract theory.}
There is of course a long line of work in economics on contract theory; we refer to e.g. \citet{mas1995microeconomic,grossman1983analysis} as a starting point.
Recent work has also investigated contracts from a robustness standpoint~ \citep{carroll2015robustness,dutting2019simple}; where agents have unknown types or parameters~\citep{alon2021contracts, guruganesh2021contracts}; and some complexity results~\citep{duetting2020complexity,azar2018computational}

While the focus of the work is not the original contract model, we do give an apparently-new perspective of contracts as scoring rules.
It is known, e.g. \citet{dutting2019simple}, that the minimum payment problem with limited liability can be solved by linear programming, but according to that source, ``Fairly little is known about the \emph{structure} of the optimal contracts that come out of this approach.''
We use the scoring rule perspective to geometrically characterize the feasible and optimal solutions to the contracts problem with LL, based on the notion of the \emph{convexified cost curve}.
%For example, we will see that optimal contracts arise from subgradients of this convex function at the point $p_{a^*}$, the distribution on outcomes induced by the desired action $a^*$.

\paragraph{Scoring rules and decisionmaking.}
\citet{oesterheld2020minimum} considers incentivizing one or more agents to acquire costly information and report it to a principal, who wishes to make a decision.
That paper takes a regret-based approach and considers robust contract design, e.g. linear contracts.
\citet{oesterheld2020decision} utilizes proper scoring rules to solve a different kind of principal-agent problem.
There, the agent provides the principal a prediction and a recommended plan, which the principal implements.
Unlike this paper (where the principal knows the entire prior distribution), for \citet{oesterheld2020decision} the principal knows nothing and must provide good incentives in the worst case over all priors.
Other papers that involve scoring rules and decisionmaking include \citet{boutilier2012eliciting}, where an expert who reports a prediction tries to influence the principal's resulting decision; and \citet{bacon2012predicting}, in which an agent predicts the time their task will be completed and also decides how much effort to exert on the task.

\section{ Preliminaries }\label{sec:prelim}

We define the \emph{Contracts with Information Acquisition} setting.
We first define the relevant variables, then give the order of events, then describe the information structure.
Then we formalize the \emph{minimum payment problem}.
Intuitively, the agent will be maximizing expected utility, and the problem will be, given a \emph{plan} for the agent, to design a contract menu that incentivizes the agent to follow that plan as cheaply as possible.

\paragraph{Variable definitions.}
Recall that the setting involves hidden information (the ``signal''), a hidden action, and an observable outcome (the ``event'').
The signal is a random variable $S$ taking values in $\Sigma$.
We often use $\sigma$ to denote realizations of $S$.
There is a cost $\kappa \geq 0$ for the agent to acquire the signal.
The actions available to the agent form a set of exhaustive and mutually exclusive actions $A$ with a generic action denoted $a \in A$.
Each action $a$ has an associated cost $c_a \geq 0$.
The observable event is a random variable $W$ taking values in a set $\Omega$ with a generic outcome denoted $\omega \in \Omega$.
We assume that $\Omega$, $\Sigma$, and $A$ are finite sets.

A \emph{contract} is a function $t: \Omega \to \reals$ that represents a commitment for the principal to pay the agent $t(W)$ after observing $W$.
%The principal cannot observe the signal $S$ or the action taken by the agent.
%% Bo: I think the below sentence is covered in the intro
%However, by offering the agent a menu of contracts, the agent's choice can reveal information about $S$; and the contracts can incent the agent to acquire the signal, reveal the signal, and/or take desirable actions.
Play proceeds as follows:
\begin{enumerate}
    \item The principal offers a menu (i.e. set) of contracts $T \subseteq \reals^{\Omega}$.
    \item The agent privately decides whether or not to acquire the signal $S$.
          If acquiring, they incur a cost of $\kappa$ and observe the realization of $S$; otherwise, nothing happens.
    \item The agent selects and signs a contract $t^* \in T$.
          The principal observes $t^*$.
    \item The agent privately selects an action $a^* \in A$, incurring a cost $c_{a^*}$.
    \item Both players observe the event $W$ and the principal pays the agent $t^*(W)$.
\end{enumerate}

\paragraph{Information structure.}
The set of probability distributions over a set $X$ is denoted $\Delta_X$.
%The sets $A,\Sigma,\Omega$ are common knowledge.
The signal $S$ is distributed according to a common-knowledge prior $q \in \Delta_{\Sigma}$.
Nature draws $S$ according to $q$ at the beginning of Step 2 (regardless of whether the agent observes it).

The event $W$ is drawn from a distribution that is determined by the action $a^*$ taken by the agent as well as the signal $S$.
For each $a \in A$, and $\sigma \in \Sigma$, when $S=\sigma$ and $a^* = a$, the event is distributed according to $p_{a,\sigma} \in \Delta_{\Omega}$.
%The distributions $\{p_{a,\sigma} : a \in A, \sigma \in \Sigma\}$ are common knowledge.
Nature draws a realization of $W$ from $p_{a^*,S}$ at the beginning of Step 5.

Under these assumptions, an agent who takes an action $a^*=a$ but did not observe the realization of $S$ has a belief $p_a := \E_{S} [p_{a,S}] = \sum_{\sigma \in \Sigma} q(\sigma) p_{a,\sigma}$.
(We assume the agent is Bayesian and rational and knows all parameters of the problem.)
Given the principal's menu $T$, the agent's objective is to maximize expected utility, which is expected payment minus expected costs incurred.
If the agent acquires the signal $S$ and chooses $t^*,a^*$ based on $S$, then her expected utility is
  \[ \text{agent utility} = \E_{S \sim q} \left[ \E_{W \sim p_{a^*,S}}[t^*(W)] - c_{a^*} - \kappa \right] . \]
If the agent does not acquire the signal and chooses $t^*,a^*$ independently of $S$, then her expected utility is
  \[ \text{agent utility} = \E_{S \sim q} \left[ \E_{W \sim p_{a^*,S}}[t^*(W)] \right] - c_{a^*} . \]

\paragraph{The minimum payment problem.}
Our goal is to incent the agent to follow a specific ``plan'' as cheaply as possible.
Formally, a \emph{plan} consists of a decision to acquire information or not, along with a function $f: \Sigma \to A$ indicating that when the signal realization is $\sigma$, the agent takes action $f(\sigma)$.
We assume the principal is interested in a plan that includes information acquisition.
%If the decision is to not acquire information, the $f$ must be a constant function, i.e. $f(\sigma) = a$ for all $\sigma$ and some fixed action $a$.
%We note that if $f$ is a constant function, e.g. $f(\sigma) = a$ for all $\sigma \in \Sigma$, then the agent need not acquire the signal to carry out the plan.\maneesha{not that obvious - needs looked into. \bo{True because the agent can always play $a$ and that follows the plan...}}
%\bo{do we need randomized plans? for now, commented it out}
%For analysis, we will allow randomized plans.
%Abusing notation, we also denote a randomized plan by $f$ and write $f(a;\sigma)$ for the probability of taking action $a$ when $S=\sigma$.
%Naturally, $f(a;\sigma) \geq 0$ and $\sum_{a \in A} f(a;\sigma) = 1$.
We say a plan is \emph{elicited} by a menu $T$ if \emph{(a)} the agent maximizes expected utility by following the plan instead of any other plan (incentive constraint); and \emph{(b)} the agent's expected utility is at least zero (participation constraint).
To recap, our goal is to solve the \emph{minimum payment problem}: given a plan, \emph{(1)} can it be elicited?
\emph{(2)} If so, what is the minimum expected payment to do so under limited liability (defined next)?

\paragraph{Limited liability.}
Formally, a menu $T$ satisfies \emph{limited liability} if for all $t \in T$ and all $\omega \in \Omega$, $t(\omega) \geq 0$. 
In other words, regardless of the contract selected and the outcome, the agent never makes a net payment to the principal.

\begin{example} \label{example:tv-producer}
Suppose a television company (principal) appoints a show producer (agent).
The company is only concerned with whether the show is a hit ($\omega = 1$) or not ($\omega = 0$).
The producer can choose to pay $\kappa > 0$ for a market research study (this is $S$); with 70\% probability (this is $q$), it will find that subscribers prefer shows with a woman lead ($S=w$), otherwise, it will find they do not ($S=m$).
The producer has a choice of two scripts, $a$ and $b$ (the actions).
If subscribers prefer woman leads, then the probability that $a$ is a hit is 80\%, i.e. $p_{aw} = (0.2,0.8)$.
Otherwise, the probability is only 40\%, i.e. $p_{am} = (0.6,0.4)$.
Similarly, the respective probabilities that $b$ is a hit are $p_{bw} = (0.5,0.5)$ and $p_{bm} = (0.7,0.3)$.
We see that regardless of the market research, show $a$ is more likely to be a hit than show $b$.
However, show $a$ is more costly, i.e. $c_a > c_a$.
For some other reasons like outreach or user-retention, perhaps the television company prefers a plan where the producer first conducts the research, and in the case $S=w$ produces script $a$, but in the case $S=m$ produces the cheaper script $b$.
While in this example the action is not completely hidden, the television company cannot observe whether the producer conducts research, so it cannot tell if she is following the plan.
Hence it may choose to treat the entire problem as Contracts with Information Acquisition.
Similarly, it can often be the case that actions are not completely hidden, but are impractical for the principal to verify. 
\end{example}

\paragraph{Special cases: information acquisition and contracts.}
Suppose we eliminate Step (4) of the game, where the agent privately selects an action.
After the agent chooses to acquire the signal $S$ or not and selects a contract from the menu, the observation is drawn from a distribution $p_S$ that only depends on $S$.
In this case, our model reduces precisely to \emph{information acquisition (IA)}.
The goal of the principal is simply to incent the agent to acquire the signal and report its realization.
The agent implicitly reports the realization in Step (3) by choosing a contract from the menu.
We will solve this special case first, in Section \ref{sec:info_acq}.

Now, instead, suppose we keep Step (4) and eliminate Step (2), where the agent decides whether or not to acquire the signal.
In this case, the principal proposes a menu, then the agent immediately selects a contract, then chooses an action $a^*$.
The event $W$ is drawn from a distribution $p_{a^*}$ that depends on $a^*$ alone.
In this case (studied in Section \ref{sec:contract-theory}), our model reduces precisely to a classical principal-agent contracts problem.

\subsection{Contract menus are proper scoring rules} \label{sec:scoring-rules}
The key observation underlying this work is that a menu of contracts can, without loss of generality, be represented as a \emph{proper scoring rule}, hence as a convex function.\footnote{Recall that $G$ is convex if for all $x,y$ and all $\alpha \in [0,1]$, $G(\alpha x + (1-\alpha)y) \leq \alpha G(x) + (1-\alpha) G(y)$.}

Used in other contexts to incent experts to make good predictions, a scoring rule\footnote{We disallow scores of $-\infty$ in this work, as we have required contracts to only pay off real numbers. This is without loss of generality, as paying $-\infty$ would violate limited liability in a rather extreme fashion. An implication is that, unlike in the most general scoring rule setting of e.g. \cite{gneiting2007strictly}, we only need to deal with convex functions that are subdifferentiable (see below), e.g. as in \cite{savage1971elicitation}.}~\citep{gneiting2007strictly} is a function $s: \Delta_{\Omega} \times \Omega \to \reals$.
When an expert makes a prediction $p \in \Delta_{\Omega}$ and the principal later observes an outcome $\omega \in \Omega$, the score is $s(p,\omega)$.
A scoring rule is \emph{(strictly) proper} if the agent (uniquely) maximizes expected score by reporting truthfully.
For example, the quadratic scoring rule $s(p,\omega) = 2p(\omega) - \sum_{\omega'} p(\omega')^2$ is strictly proper.

Observe that for fixed $p$, the function $s(p,\cdot)$ is analogous to a \emph{contract} $t(\cdot)$ in our setting.
A proper scoring rule $s$ can be represented as a menu of contracts $T = \{s(p,\cdot) : p \in \Delta_{\Omega}\}$.
Making a prediction $p$ is equivalent to selecting a contract $t$ from the menu, which implies a belief $p$ by revealed preference.
A well-known characterization~\citep{savage1971elicitation} states that a scoring rule $s$ is (strictly) proper if and only if there exists a (strictly) convex function $G: \Delta_{\Omega} \to \reals$ such that $s(p,\cdot)$ is a \emph{(strict) subtangent} of $G$ at $p$, where:
\begin{definition}[Expected payment, subtangent] \label{def:subtangent}
  Given a contract $t: \Omega \to \reals$, the \emph{expected payment function} is $\bar{t}: \Delta_{\Omega} \to \reals$ given by $\bar{t}(p) := \E_{W \sim p} t(W) = \sum_{\omega \in \Omega} p(\omega) t(\omega)$.
  We say $t$ is a \emph{(strict) subtangent} of a convex function $G: \Delta_{\Omega} \to \reals$ at a point $p$ if $\bar{t}(p) = G(p)$ and $\bar{t}(p')$ is (strictly) less than $G(p')$ for all $p' \neq p$.
\end{definition}
We may also use \emph{subtangent} to refer to any affine\footnote{A function is affine if it is linear plus a constant, i.e. is of the form $\hat{p} \mapsto \langle \hat{p}, v \rangle + \beta$ for some $v \in \reals^{\Omega}$ and some $\beta \in \reals$.} function that lies weakly below $G$ everywhere and is equal to $G$ at $p$.
We recall from convex analysis~\cite{hiriarturrut2001fundamentals}:
\begin{definition}[Subgradient, subdifferentiable]
  A vector $v$ is a \emph{subgradient} of a function $G$ at a point $p$ if, for all $p'$, we have $G(p') \geq G(p) + v \cdot (p' - p)$.
  A function $G$ is \emph{subdifferentiable} if it has at least one subgradient at every point in its domain.
\end{definition}
If $t$ is a subtangent contract of $G$ at $p$, then $\bar{t}$ can be expressed as
  \[ \bar{t}(p') = G(p) + v \cdot (p' - p) , \]
where $v$ is a subgradient of $G$ at $p$.
As subgradients are generalizations of gradients, $\bar{t}$ can be interpreted as a linear approximation to $G$ at $p$.
This is pictured in Figure \ref{fig:example-G-t}.

We recall that a convex $G$ is subdifferentiable if and only if it can be written as a pointwise maximum over a (possibly-infinite) set of affine functions, i.e. $G(p) = \max_{i \in \mathcal{I}} f_i(p)$.
(At each $p$, the gradient of any affine $f_i$ that achieves the max is a subgradient of $G$.)

\begin{figure}
\centering
\caption{Illustration of convexity, scoring rules, and contracts using Example \ref{example:tv-producer}. The horizontal axis is the probability that the show is a hit; e.g. if $S=w$ and the agent takes action $a$, the probability is $p_{aw}(1) = 0.8$. In black is a contract $t$ with $t(1) = 0.8$ (upper-right point), $t(0) = 0$ (lower-left point). The dotted line is the expected payment function $\bar{t}(p)$. In blue is the convex function $G$, which represents a menu. The contracts in the menu are derived from the subtangents of $G$. In particular, $\bar{t}$ is a subtangent of $G$ at $p_{aw}$, indicating that $t$ is an optimal contract for the agent to choose if $S=w$ and $a^*=a$. Optimality of $t$ follows because every other subtangent of $G$, say $\bar{t'}$, satisfies $\bar{t'}(p_{aw}) \leq \bar{t}(p_{aw})$, by convexity of $G$.}
\label{fig:example-G-t}
  \captionsetup{width=.9\linewidth,font=small}
\begin{tikzpicture} [scale=0.75]
\begin{axis}
    \addplot [mark=none,  black,   ultra thick, dotted] coordinates { (0,0) (1,0.8)};
    %\addplot [black, mark = *, nodes near coords=mylabel,every node near coord/.style={anchor=180}] coordinates {( 0.3, -0.2)};
\end{axis}
\draw[thick, blue, smooth] (0.5, 2.3)
      .. controls (2.1, 2.5) and (4.7, 3.2) .. (6, 5.5) ;
     \node [] at (5.3,5.2){ $G$}; 
     \node [] at (3,2){ $\bar{t}$};
     \node[] at (0.55, 1){\tiny $(0,t(0))$};
     \fill (0.55,0.5) circle (2pt);
     \node[] at (6.2, 4.5){\tiny $(1,t(1))$};
     \fill (6.2,5.15) circle (2pt);
     \fill (2.3,0) circle (2pt) ;
     \node[] at (2.1,0.3) {\tiny $p_{bm}$};
     \fill (2.85,0) circle (2pt) ;
     \node[] at (2.85,0.3) {\tiny $p_{am}$};
     \fill (3.4,0) circle (2pt) ;
     \node[] at (3.6,0.3) {\tiny $p_{bw}$};
     \fill (5.15,0) circle (2pt) ;
     \node[] at (5.15,0.3) {\tiny $p_{aw}$};
     %\node[label={180:{(2.3,0)}},circle,fill,inner sep=2pt] at (axis cs:2.3,0) {};
     
\end{tikzpicture}
\end{figure}
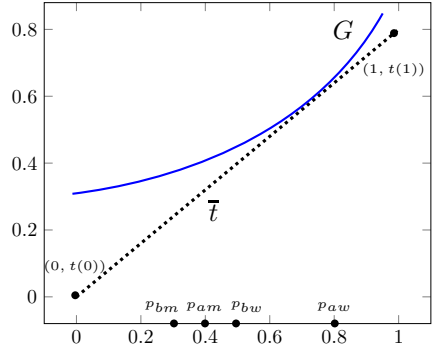

The same reasoning underlying the characterization applies in our setting (cf. \cite{frongillo2021general} for a very general version).
That is, we will show next that we can represent a menu $T$ as the set of subtangents of a subdifferentiable convex function $G: \P \to \reals$, where the domain $\P$ is defined as follows.
%In particular, the menu $T$ could in principle contain contracts that are not subtangents of $G$, but the agent will never pick them as they are strictly dominated.
%Therefore, without loss of generality, we only consider menus arising as subtangents of some convex $G$.
\begin{definition}[Convex hull, $\P$] \label{def:conv-hull-P}
  In a Contracts with Information Acquisition setting, define $\P = \text{convhull}(\{p_{a,\sigma} : a \in A, \sigma \in \Sigma\})$.
  Here, given a set of points $X \subseteq \reals^d$, their \emph{convex hull $\text{convhull}(X)$} is the set $\{\E_{x \sim r} x ~:~ r \in \Delta_X\}$.
\end{definition}
\begin{prop} \label{prop:menu-convex}
  In the setting of Contracts with Information Acquisition, without loss of generality, the menu $T$ is the set of subtangents of a subdifferentiable convex function $G: \P \to \reals$, which is the pointwise maximum of the respective expected payment functions.
\end{prop}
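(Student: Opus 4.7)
The plan is to start from an arbitrary menu $T$ and construct a canonical convex function $G$ whose subtangent menu induces identical incentives. The key insight is that the agent's valuation of a contract $t$ depends on her belief about $\omega$ only through the expected-payment function $\bar{t}$, which is affine in the belief. Given a belief $p \in \P$, the agent will select a contract that achieves $\sup_{t \in T} \bar{t}(p)$; everything else about $T$ (and the contracts that nobody would ever select) is incentive-irrelevant.

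Concretely, I would define $G: \P \to \reals$ by $G(p) = \sup_{t \in T} \bar{t}(p)$. As a pointwise supremum of affine functions $\{\bar{t}\}_{t \in T}$, $G$ is convex (and lower semicontinuous). I would then form a new menu $T' = \{t_p : p \in \P\}$ where, for each $p$, $t_p$ is the contract whose expected-payment function is a subtangent of $G$ at $p$; explicitly, if $v$ is a subgradient of $G$ at $p$, then $t_p(\omega) := G(p) + v \cdot (\delta_\omega - p)$, and its expected payment $\bar{t}_p(p') = G(p) + v \cdot (p' - p)$ is a subtangent by definition of subgradient. The verification step is then a short computation: for any belief $p \in \P$, the agent's best expected payment under $T'$ is $\sup_{p' \in \P} \bar{t}_{p'}(p)$, which equals $G(p)$ since $\bar{t}_p(p) = G(p)$ is attained while every other $\bar{t}_{p'}(p) \le G(p)$ by the subtangent inequality. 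So the agent's best expected payment as a function of belief is identical under $T$ and $T'$; since costs are independent of contract, every plan yields the same expected utility under $T$ as under $T'$, so $T'$ elicits exactly the same plans at exactly the same expected cost. Because each $\bar{t}_p$ lies weakly below $G$ and touches it at $p$, the pointwise maximum of the expected-payment functions in $T'$ equals $G$, which gives the final form claimed.

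The main obstacle is subdifferentiability: a convex function on a convex set need not admit a subgradient at every boundary point. Because $\P = \text{convhull}(\{p_{a,\sigma}\})$ is a compact polytope and each extreme point $p_{a,\sigma}$ arises from the original menu $T$ at finite value, one can take the supremum defining $G(p_{a,\sigma})$ to be attained (replacing $T$ by its closure if needed, which is WLOG given the finiteness of $A$ and $\Sigma$). This guarantees $G$ is finite everywhere on $\P$, and a standard convex-analysis argument then yields subgradients at every $p \in \P$ -- the subtangents at extreme points can, if needed, be taken to be the original contracts $t \in T$ that attain the supremum there. The other mild caveat is that limited liability is not automatically preserved by this construction, but LL is imposed separately in the optimization, not as part of this representation result.
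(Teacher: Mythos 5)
Your proposal is correct and follows essentially the same route as the paper: define $G(p)=\sup_{t\in T}\bar t(p)$ as a pointwise supremum of affine functions, note that the agent's best expected payment at belief $p$ is exactly $G(p)$, and replace $T$ by the subtangent menu of $G$ without changing incentives. You are in fact somewhat more careful than the paper on the technical point of subdifferentiability (the paper simply asserts it for a pointwise ``maximum'' of linear functions, which is a minor gap if $T$ is infinite and the supremum is not attained), and you correctly observe that limited liability is handled separately via the $M_G$ constraints rather than within this representation lemma.
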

\begin{proof}
  Let $T$ be any menu and take the set $T' = \cup_{p \in \Delta_{\Omega}} \arg\max_{t \in T} \bar{t}(p)$.
  In other words, $T'$ contains, for each possible belief $p$ the agent may have about $W$ at the time of selecting a contract, all contracts that maximize expected utility according to $p$.
  Any contract in $T \setminus T'$ is strictly dominated, i.e. has strictly lower expected utility for every possible belief.
  So without loss of generality, the menu of contracts is $T'$.
  
  Let $G(p) = \max_{t \in T} \bar{t}(p)$.
  Note that $G$ is convex and subdifferentiable, because it is a pointwise maximum of linear functions.
  And $T'$ is its set of subtangents, as each $t \in T'$ achieves the maximum at some $p$ by definition of $T'$, and their expected payment functions all lie weakly below $G$ by definition of $G$.
\end{proof}

Given this result, we will often represent a menu $T$ by a subdifferentiable convex function $G: \P \to \mathbb{R}$, where the set of subtangents is assumed to be specified with $G$.
The only contracts not included in such a representation are strictly dominated contracts, which are never selected by the agent.
We observe that such a $G$ can be extended to all of $\Delta_{\Omega}$, because each affine function in the expression $G(p) = \max_{i \in \mathcal{I}} f_i(p)$ can be extended.
So we will sometimes refer to $G$ as being defined on the entire simplex.
One important subtlety is that one must include with $G$ the specification of which subtangents are selected, when there are multiple subgradients at a given point.
%As we will see in the contracts setting, specifying the graph of $G$ alone does not always sufficiently solve the problem.

\paragraph{Useful facts.}
Given the scoring rule representation, we make a few more observations.
\begin{definition}[$M_G(\omega)$] \label{def:subtanG-minpay}
  Given a subdifferentiable convex $G: \Delta_{\Omega} \to \reals$, define the minimum payment at outcome $\omega$ to be $M_G(\omega) := \min_{t} t(\omega)$, where the minimum is over subtangent contracts of $G$.
  In other words,
  \[ M_G(\omega) = \min_{p \in \P} \left[ G(p) + v_p \cdot (\delta_{\omega} - p) \right] , \]
where $v_p$ is the given choice of subgradient of $G$ at $p$.
\end{definition}
The limited liability condition can now be expressed as $M_G(\omega) \geq 0 ~ (\forall \omega \in \Omega)$.
Note that limited liability requires that every contract in the menu has nonnegative payment for every possible observation.
Also, the following important fact about scoring rules carries over.
It says that $G(p)$ is the expected payment to an agent who believes $p$ and (optimally, truthfully) predicts it.
\begin{prop} \label{prop:expected-G}
  Given a menu of contracts $G$, and given that the signal is $S=\sigma$ and the agent plans to select action $a^*$, the optimal contract for the agent is the subtangent of $G$ at $p_{a^*,\sigma}$.
  Conditioned on $S$ and $a^*$, the expected payment to the agent if they choose the optimal contract is $G(p_{a^*,\sigma})$.
\end{prop}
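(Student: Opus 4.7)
The plan is to derive the proposition directly from Proposition \ref{prop:menu-convex}, which already established that any menu $T$ can be represented as the set of subtangents of a subdifferentiable convex function $G$ satisfying $G(p) = \max_{t \in T} \bar{t}(p)$. The argument essentially unpacks definitions, so I do not expect any serious technical obstacle; the proof is really a consequence of the scoring-rule reformulation combined with the information structure.

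First I would pin down the agent's belief at Step 3. Having observed $S=\sigma$ and having committed to the action $a^*$ to be played in Step 4, the agent's belief about the outcome $W$ is exactly $p_{a^*,\sigma}$ by the specification of the information structure. Hence, conditional on $(S,a^*)$, any contract $t \in T$ yields expected payment $\E_{W \sim p_{a^*,\sigma}} t(W) = \bar{t}(p_{a^*,\sigma})$, matching Definition \ref{def:subtangent}. Since the costs $\kappa$ and $c_{a^*}$ are invariant in the choice of contract at Step 3, an expected-utility-maximizing agent picks $t^* \in \arg\max_{t \in T} \bar{t}(p_{a^*,\sigma})$.

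Second I would invoke Proposition \ref{prop:menu-convex} to identify this maximum with $G(p_{a^*,\sigma})$ and to characterize any maximizer as a subtangent. Concretely, any optimal $t^*$ satisfies $\bar{t^*}(p_{a^*,\sigma}) = G(p_{a^*,\sigma})$, while $\bar{t^*}(p') \leq G(p')$ for all $p' \in \P$ since $G$ is the pointwise maximum over the expected-payment functions of contracts in the menu. By Definition \ref{def:subtangent}, this is exactly the statement that $t^*$ is a subtangent of $G$ at $p_{a^*,\sigma}$, and the conditional expected payment equals $G(p_{a^*,\sigma})$.

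The only subtlety worth flagging is that Step 3 (contract choice) formally precedes Step 4 (action choice), so one must justify using $p_{a^*,\sigma}$ rather than some marginal over actions as the agent's belief at Step 3. The justification is that a rational agent commits at Step 3 to the action they plan to carry out in Step 4 (indeed, this commitment is part of the \emph{plan} being elicited), so the relevant conditional belief is $p_{a^*,\sigma}$. If $G$ has multiple subgradients at $p_{a^*,\sigma}$, any of the corresponding subtangent contracts is optimal and all yield the same expected payment $G(p_{a^*,\sigma})$, so ``the'' optimal contract is well-defined up to this irrelevant choice.
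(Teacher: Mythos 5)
Your proof is correct and follows essentially the same approach as the paper: fix the conditional belief $p_{a^*,\sigma}$, note expected payment is $\bar{t}(p_{a^*,\sigma})$, and use the representation $G(p) = \max_{t \in T} \bar{t}(p)$ from Proposition \ref{prop:menu-convex} to conclude. The added remark about the Step 3 versus Step 4 timing is a reasonable clarification but does not change the argument.
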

\begin{proof}
  Under the givens, the conditional distribution of the outcome $W$ is $p_{a^*,\sigma}$.
  For any contract $t$, the expected payment is $\E_{W \sim p_{a^*,\sigma}} t(W) = \bar{t}(p_{a^*,\sigma})$.
  By definition, this value is less than or equal to $G(p_{a^*,\sigma})$, with equality if and only if $t$ is a subtangent of $G$ at $p_{a^*,\sigma}$.
\end{proof}

\section{ Information Acquisition } \label{sec:info_acq}
In this section, we consider the special case of our game where the agent only acquires information and does not take actions (equivalently, the agent's actions do not influence the true distribution on outcomes).
The principal's goal reduces to incenting the agent to acquire the signal and select a contract that reveals its' realization.
We call this the \emph{Information Acquisition (IA)} problem.
See Section \ref{sec:related-work} for discussion of related work on IA.

\subsection{Model and optimization problem}

In IA, the information structure consists of the distribution of the signal, $q \in \Delta_{\Sigma}$, and the set of posteriors on the outcome given each signal realization, $\{p_{\sigma} : \sigma \in \Sigma\}$.
We let the prior on the outcome be $p_0 = \E_{S \sim q} p_S$.
This can be realized as a special case of our general model by setting $c_a=0$ and $p_{a,\sigma}=p_{\sigma}$ and  $p_a=p_0$ for all $a \in A$ and $\sigma \in \Sigma$.
We assume that $p_0$ has full support on $\Omega$.
Otherwise, we can remove the zero-mass elements from $\Omega$ without loss.

The principal's minimum payment problem reduces to incentivizing information acquisition as cheaply as possible subject to limited liability.

\begin{prop} \label{prop:IA-opt-prob}
  In the Information Acquisition setting, the principal's minimum payment problem for incentivizing acquisition, given parameters $\kappa$, $q$, and $\{p_{\sigma} : \sigma \in \Sigma\}$, is solved by Program \ref{IA-main}.
\end{prop}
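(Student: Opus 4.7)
The plan is to specialize the general machinery of Section \ref{sec:scoring-rules} to the Information Acquisition setting and translate each of the principal's constraints into conditions on a convex function $G$ representing the menu. First I would invoke Proposition \ref{prop:menu-convex} to restrict attention, without loss of generality, to menus given by subdifferentiable convex functions $G: \mathcal{P} \to \mathbb{R}$. Because the outcome distribution here depends only on $S$, Definition \ref{def:conv-hull-P} gives $\mathcal{P} = \text{convhull}(\{p_\sigma : \sigma \in \Sigma\})$, and in particular $p_0 = \mathbb{E}_{S \sim q}[p_S] \in \mathcal{P}$, so subtangents at $p_0$ are available to the agent.

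Next I would compute the agent's expected utility under each strategy. By Proposition \ref{prop:expected-G}, an agent who acquires $S$ and, upon observing realization $\sigma$, picks the subtangent contract at $p_\sigma$, earns expected payment $G(p_\sigma)$; following the plan therefore yields utility $\mathbb{E}_{S \sim q}[G(p_S)] - \kappa$. An agent who does not acquire has belief $p_0$; the best contract in the menu for that belief is the subtangent at $p_0$, yielding expected payment $G(p_0)$. Within either branch, optimality of truthful reporting is immediate from properness: any other subtangent of $G$ evaluated at the true belief is weakly less than $G$ at that point by convexity.

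With these utilities in hand, I would assemble the constraints matching Program \ref{IA-main}: the incentive constraint $\mathbb{E}_{S \sim q}[G(p_S)] - \kappa \geq G(p_0)$ that encourages information acquisition (vs.\ not acquiring); the participation constraint $\mathbb{E}_{S \sim q}[G(p_S)] - \kappa \geq 0$ from the agent's outside option; and the limited liability condition $M_G(\omega) \geq 0$ for all $\omega \in \Omega$, using Definition \ref{def:subtanG-minpay}. The principal's objective, the expected payment given that the plan is followed, equals $\mathbb{E}_{S \sim q}[G(p_S)]$, so minimizing this subject to the three conditions gives Program \ref{IA-main}.

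The main subtlety, and the only place I expect to have to be careful, is the enumeration of deviations: I must rule out hybrid strategies such as acquiring the signal and then reporting some $p \neq p_\sigma$, or not acquiring and selecting a contract tailored to some other belief. For the first, properness handles it directly. For the second, I would argue that the quantity the agent maximizes is $\max_{p' \in \mathcal{P}} \bar{t}_{p'}(p_0)$ where $\bar{t}_{p'}$ is the expected payment function of the chosen subtangent; since each $\bar{t}_{p'}$ lies weakly below $G$ and equals $G$ at $p'$, the maximum is $G(p_0)$, attained at $p' = p_0$. Combining these observations with Proposition \ref{prop:menu-convex} and the objective expression above completes the identification with Program \ref{IA-main}.
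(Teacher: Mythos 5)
Your proposal follows the same structure as the paper's proof---invoke Proposition~\ref{prop:menu-convex}, then use Proposition~\ref{prop:expected-G} to compute the agent's utility for acquiring vs.\ not acquiring, and translate these into the incentive, participation, and limited liability conditions---but there is one discrepancy you should resolve. You write that you will "assemble the constraints matching Program~\ref{IA-main}" including a participation constraint $\E_{S \sim q}[G(p_S)] - \kappa \geq 0$, and then conclude that "minimizing this subject to the three conditions gives Program~\ref{IA-main}." However, Program~\ref{IA-main} as written contains only \emph{two} constraints beyond convexity: the incentive constraint~(\ref{eqn:IA-incent-constr}) and limited liability~(\ref{eqn:IA-LL-constr}). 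There is no explicit participation constraint. So the proposition you are trying to prove requires you to explain why dropping the participation constraint is harmless---otherwise you have not shown that the program captures the minimum payment problem, since the principal genuinely needs the agent to participate.

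The missing step is the observation, which the paper makes, that the participation constraint is \emph{implied} by the other two. Concretely: limited liability~(\ref{eqn:IA-LL-constr}) means every subtangent contract is pointwise nonnegative, hence its expected payment at any belief is nonnegative, so in particular $G(p_0) \geq 0$; combining this with the incentive constraint~(\ref{eqn:IA-incent-constr}) yields $\E_{S \sim q}[G(p_S)] - \kappa \geq G(p_0) \geq 0$, which is exactly participation. Once you add this reduction, your proof matches the paper's. Your final paragraph about hybrid deviations (acquiring but misreporting, or not acquiring and picking a contract tailored to $p' \neq p_0$) is a slight expansion of what is already packaged in Proposition~\ref{prop:expected-G}; it is fine to include, but the paper takes it as already covered by that proposition.
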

\begin{alignat}{2}
  \lplabel[IA-main]{(P1)} \text{} \quad \min_G ~ \E_{S \sim q}[G(p_{S})]  &           & \quad & \nonumber \\
  \text{s.t.} \quad  \text{  $G$ is subdifferentiable convex and } & &\quad&\nonumber \\
  \quad \E_{S \sim q}[G(p_{S})] - \kappa & \geq G(p_{0}) & \quad & \label{eqn:IA-incent-constr} \\
  M_G(\omega) & \geq 0,   & \quad & \forall \omega \in \Omega.  \label{eqn:IA-LL-constr}
\end{alignat}
\begin{proof}
  Recall by Proposition \ref{prop:menu-convex} that a menu can be represented WLOG by a subdifferentiable convex function $G: \P \to \reals$, where here $\P = \text{convhull}(\{p_{\sigma} : \sigma \in \Sigma\})$.
  Recall by Proposition \ref{prop:expected-G} that an agent with belief $p \in \P$ maximizes expected payment by selecting the contract $t$ that is a subtangent of $G$ at $p$, and her  expected payment in this case is $G(p)$.
  Therefore, the expected payment if the agent does not acquire the signal is $G(p_0)$.
  If she does acquire the signal, it is $\E_{S \sim q} G(p_S)$, and her  net utility in this case is $\E_{S \sim q} G(p_S) - \kappa$.
  The objective and the incentive constraint (\ref{eqn:IA-incent-constr}) follow immediately.
  The final constraint (\ref{eqn:IA-LL-constr}) is limited liability.
  The participation requirement is already implied in this setting by limited liability, as in particular constraint (\ref{eqn:IA-LL-constr}) implies $G(p_0) \geq 0$, so constraint (\ref{eqn:IA-incent-constr}) implies that the agent's net utility is nonnegative.
\end{proof}

We will consider \emph{nontrivial} settings.
\begin{definition}[Nontrivial IA] \label{def:IA-nontrivial}
  An Information Acquisition setting $\kappa$, $q$, $\{p_{\sigma} : \sigma \in \Sigma\}$ is \emph{nontrivial} if $\kappa > 0$ and there exists $\sigma$ with $p_{\sigma} \neq p_0$.
\end{definition}
If a setting has $\kappa = 0$, then the agent can costlessly acquire information and the principal's problem is trivial.
If a setting has $\kappa > 0$ but $p_{\sigma} = p_0$ for all $\sigma \in \Sigma$, then the signal is irrelevant to the observation and the principal cannot incentivize information acquisition by any means.

\subsection{Results}
We will give an optimal solution in closed form.
It will also be ``low detail'': it involves computing a convex function $\bar{G}^*$ that only depends on the prior $p_0$ and no other parameters; then scaling $\bar{G}^*$ by a constant that depends on just two numbers, the cost of information $\kappa$ and the ``value of information'' $\E_{S \sim q} \bar{G}^*(p_S)$.
Outside of the quantity $\E \bar{G}^*(p_S)$, the solution does not depend on the precise parameters $\{p_{\sigma} : \sigma \in \Sigma\}$ and $q \in \Delta_{\Sigma}$.
Geometrically, a nice property of the solution is that it is always of the following form.
\begin{definition}[Pointed polyhedral cone] \label{def:G-cone}
  We say a function $G: \Delta_{\Omega} \to \reals$ is a \emph{pointed polyhedral cone (PPC) at $p_0$} if $G$ can be written as the pointwise maximum over a finite number of affine functions, each of which passes through $(p_0,G(p_0))$.
\end{definition}
If $G$ is a PPC at $p_0$, then its epigraph (the set of points lying on or above the graph of $G$) is indeed polyhedral (an intersection of a finite number of closed halfspaces) and a cone (a set that, if it contains $x$, contains $\alpha x$ for all $\alpha \geq 0$), where the cone has been shifted so that its lowest point is $(p_0, G(p_0))$.
\begin{theorem} \label{thm:IIA-PPC}
  For any nontrivial IA setting $p_0,\kappa,q,\{p_\sigma ~:~ \sigma \in \Sigma\}$, there exists a PPC $G^*$ that optimally solves the principal's minimum payment problem.
  In particular, an optimal solution is as follows: define $\bar{G}^*(p) = \max_{\omega} \frac{p(\omega)}{p_0(\omega)}$, and let $G^* = \alpha \bar{G}^*$ where $\alpha = \kappa/\left(\E \bar{G}^*(p_S) - 1\right)$.
  Furthermore, this solution is ``low detail'' in that it only depends on $p_0,\kappa$, and the number $\E_{S \sim q} \bar{G}^*(p_S)$, but not otherwise on the details of $q$ or $\{p_{\sigma} ~:~ \sigma \in \Sigma\}$.
\end{theorem}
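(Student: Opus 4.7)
The plan is to split the proof into three pieces: (i) verify feasibility and the PPC structure of $G^*$; (ii) prove a pointwise bound lemma saying $\bar{G}^*$ dominates every LL-feasible convex $G$ with $G(p_0)\le 1$; (iii) combine (ii) with the IC to lower-bound any feasible objective.

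For (i), each term $p\mapsto p(\omega)/p_0(\omega)$ is affine in $p$ and evaluates to $1$ at $p=p_0$, so $\bar{G}^*$ is a pointwise max of $|\Omega|$ affine functions all passing through $(p_0,1)$; thus $\bar{G}^*$, and hence $G^*=\alpha\bar{G}^*$, is a PPC at $p_0$ in the sense of Definition \ref{def:G-cone}. The $\omega$-th affine piece of $G^*$ corresponds to the contract paying $\alpha/p_0(\omega)\ge 0$ on outcome $\omega$ and $0$ elsewhere, so LL is immediate. Nontriviality forces $\bar{G}^*(p_\sigma) > 1$ for any $\sigma$ with $p_\sigma\neq p_0$ (since $\sum_\omega p_\sigma(\omega)=\sum_\omega p_0(\omega)=1$ forces some $p_\sigma(\omega)/p_0(\omega)>1$), so $\alpha$ is well-defined, and a short calculation confirms the IC holds with equality by construction.

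For (ii), I invoke the LL characterization implicit in Proposition \ref{prop:menu-convex}: any feasible menu can be written $G(p)=\max_i\langle p,t_i\rangle$ with each $t_i\in\reals_{\ge 0}^\Omega$. The normalization $G(p_0)\le 1$ gives $\langle p_0,t_i\rangle\le 1$ for every $i$. Factoring each piece,
\[
  \langle p,t_i\rangle = \sum_\omega p_0(\omega)\,t_i(\omega)\cdot \tfrac{p(\omega)}{p_0(\omega)} \;\le\; \bar{G}^*(p)\cdot\langle p_0,t_i\rangle \;\le\; \bar{G}^*(p),
\]
and taking the max over $i$ yields the lemma. This is the crux of the argument: reading LL as nonnegativity of the coefficients $t_i$ turns the bound into a one-line rearrangement inequality, with $\bar{G}^*(p)=\max_\omega p(\omega)/p_0(\omega)$ emerging as the extremal ratio.

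For (iii), first rule out $G(p_0)=0$: the decomposition above then forces every $t_i\equiv 0$ (since $p_0$ is fully supported and $t_i\ge 0$), hence $G\equiv 0$ and $\E G(p_S)=0<\kappa$ violates IC. So $G(p_0)>0$; applying (ii) to $\tilde G = G/G(p_0)$ gives $\E G(p_S)/G(p_0)\le \E\bar{G}^*(p_S)$. Combining with IC $\E G(p_S)-G(p_0)\ge \kappa$ and rearranging yields $\E G(p_S)\ge \kappa\E\bar{G}^*(p_S)/(\E\bar{G}^*(p_S)-1)=\E G^*(p_S)$, establishing optimality. The closed-form and low-detail claims are immediate from $G^*=\alpha\bar{G}^*$. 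The main obstacle is step (ii); once the pointwise lemma is in place, the rest is a one-variable reduction driven by the IC.
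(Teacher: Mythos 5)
Your proof is correct, and it takes a genuinely more direct route than the paper's. The paper goes through an explicit ``dual'' maximization program (\ref{IA-dual}), proves a bijection $\phi(G)=G/(\E G(p_S)-\kappa)$ between feasible sets of \ref{IA-main} and \ref{IA-dual} (Lemma~\ref{lemma:IA-feasible-equiv}), proves $\phi$ preserves optimality (Lemma~\ref{lemma:IA-opt-equiv}), then solves \ref{IA-dual} (Theorem~\ref{thm:IA-opt-dual}). You instead inline the normalization --- dividing by $G(p_0)$ rather than $\E G(p_S)-\kappa$ --- and the one-variable rearrangement $v-v/M\ge\kappa$ in step (iii) plays exactly the role of Lemma~\ref{lemma:IA-opt-equiv}, without ever naming the second program or proving both directions of a bijection. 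Your step (ii) is the same pointwise-domination lemma as the paper's Theorem~\ref{thm:IA-opt-dual}, but the paper proves it geometrically via Lemma~\ref{lemma:p-p0-interior} (writing $p_0=\beta p+\sum_{\omega'\ne\omega}\beta_{\omega'}\delta_{\omega'}$ and applying a subtangent at $p$), whereas you prove it algebraically by factoring $\max_\omega p(\omega)/p_0(\omega)$ out of $\langle p,t_i\rangle=\sum_\omega p_0(\omega)t_i(\omega)\cdot\frac{p(\omega)}{p_0(\omega)}$ and using LL as nonnegativity of coefficients; the two proofs encode the same extremal-ratio idea, but your version is cleaner and reveals the mechanism more transparently. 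What the paper's longer route buys is the explicit duality picture and the guess-and-check narrative in Remark~\ref{remark:how-we-IA}; what yours buys is brevity and a self-contained argument. One small caveat you share with the paper (which itself has an unresolved ``TODO'' comment at precisely this point): strictness of $\E\bar{G}^*(p_S)>1$ requires some $\sigma$ with $q(\sigma)>0$ \emph{and} $p_\sigma\ne p_0$, which Definition~\ref{def:IA-nontrivial} does not literally enforce unless one assumes $q$ has full support; since $\bar{G}^*\ge 1$ on the whole simplex and $p_0=\E_q p_S$, the step is easily patched, but it deserves a sentence.
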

The proof is indirect and requires a number of steps.
We will present a \emph{maximization} problem, Program \ref{IA-dual}, whose objective function is the negative of that in \ref{IA-main}.
Remarkably, although their objectives are exactly opposed, the optimal solutions to \ref{IA-main} and \ref{IA-dual} will turn out to be simple rescalings of each other.
The intuition is that \ref{IA-main} involves ``squishing'' a convex function as flat as possible, while \ref{IA-dual} involves ``pulling'' it as tall as possible, and the same shape is optimal for both goals under their respective constraints.
%\maneesha{Can be elaborated? \bo{would be good to be as clear as possible and add some information if it helps. But we probably don't want to spend too much space on it, since we're about to give the whole proof...}}
After proving the equivalence, we produce a closed-form solution to \ref{IA-dual} in Theorem \ref{thm:IIA-PPC}.

We now define Program \ref{IA-dual}, with parameters $q$ and $\{p_{\sigma} : \sigma \in \Sigma\}$.
For clarity, we will always use $\bar{G}$ to represent a solution to \ref{IA-dual} and $G$ to represent a solution to \ref{IA-main}.

\begin{alignat}{2}
  \lplabel[IA-dual]{(P2)} \text{}   \quad \max_{\bar{G}} \E_{S \sim q}[\Bar{G}(p_{S})]  &           & \quad & \nonumber \\
  \text{s.t.} \quad  \text{  $\bar{G}$ is subdifferentiable convex and } & &\quad&\nonumber \\
  \quad \Bar{G}(p_0) & \leq 1 & \quad & \label{eqn:IA-dual-one-constr} \\
  \quad \E_{S \sim q}[\Bar{G}(p_{S})] & > 1 & \quad & \label{eqn:IA-dual-dumb-constr} \\
  M_{\bar{G}}(\omega) & \geq 0,   & \quad & \forall \omega \in \Omega.  \label{eqn:IA-dual-LL-constr}
\end{alignat}
The intuition of this optimization problem is to ``stretch'' $\bar{G}$ as tall as possible subject to having a low point (\ref{eqn:IA-dual-one-constr}) and limited liability (\ref{eqn:IA-dual-LL-constr}).
Limited liability ensures that $\bar{G}$ cannot be too tall, as then its subtangent planes would be very steep and would dip below $0$ at some corner of the simplex.

Constraint (\ref{eqn:IA-dual-dumb-constr}) does not affect the optimal solution.
It eliminates some trivial non-optimal feasible solutions.
By including Constraint (\ref{eqn:IA-dual-dumb-constr}), we obtain the following one-to-one relationship between feasible solutions of \ref{IA-dual} and \ref{IA-main}.

\begin{lemma} \label{lemma:IA-feasible-equiv}
  Let a nontrivial IA setting be given.
  Define $\phi(G) = \frac{G}{\E G(p_S) - \kappa}$ and define $\phi'(\bar{G}) = \frac{\kappa \bar{G}}{\E \bar{G}(p_S) - 1}$.
  Then $G$ is a feasible solution to \ref{IA-main} if and only if $\phi(G)$ is a feasible solution to \ref{IA-dual}; and $\bar{G}$ is feasible for \ref{IA-dual} if and only if $\phi'(\bar{G})$ is feasible for \ref{IA-main}; and $\phi^{-1} = \phi'$ on these feasible sets.
\end{lemma}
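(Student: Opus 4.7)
The plan is to verify that $\phi$ and $\phi'$ are mutually inverse positive rescalings that map the constraint systems of the two programs onto each other. Since multiplying $G$ by a positive constant preserves convexity, subdifferentiability, and scales every subtangent contract (hence every $M_G(\omega)$) by the same factor, the LL constraint transforms transparently. The substantive check is therefore that the incentive constraint of \ref{IA-main} corresponds under these maps to constraints (\ref{eqn:IA-dual-one-constr}) and (\ref{eqn:IA-dual-dumb-constr}) of \ref{IA-dual}, and vice versa.

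First I would check that the denominators appearing in $\phi$ and $\phi'$ are positive on their respective feasible sets. For feasible $G$ in \ref{IA-main}, LL implies $G(p_0) \geq 0$: indeed, the subtangent contract $t$ at $p_0 \in \mathcal{P}$ satisfies $t(\omega) \geq 0$ for all $\omega$ by (\ref{eqn:IA-LL-constr}), and taking expectation against $p_0$ gives $G(p_0) = \bar{t}(p_0) \geq 0$. Combined with (\ref{eqn:IA-incent-constr}), one obtains $\E G(p_S) - \kappa \geq G(p_0) \geq 0$. I would dispose of the degenerate boundary case $\E G(p_S) = \kappa$ (which forces $G(p_0) = 0$ and makes the agent indifferent) by treating the incentive constraint as strict on the feasible set under consideration, which is without loss for the subsequent optimization. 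For feasible $\bar{G}$ in \ref{IA-dual}, positivity $\E \bar{G}(p_S) - 1 > 0$ is immediate from (\ref{eqn:IA-dual-dumb-constr}).

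Next I would carry out the substitutions to see the constraints line up. For $G$ feasible in \ref{IA-main},
\[ \phi(G)(p_0) = \frac{G(p_0)}{\E G(p_S) - \kappa} \]
so the incentive constraint $G(p_0) \leq \E G(p_S) - \kappa$ is equivalent to $\phi(G)(p_0) \leq 1$, giving (\ref{eqn:IA-dual-one-constr}); and $\E \phi(G)(p_S) = \E G(p_S) / (\E G(p_S) - \kappa) > 1$ since $\kappa > 0$, giving (\ref{eqn:IA-dual-dumb-constr}). The reverse direction is symmetric: a brief calculation shows
\[ \E \phi'(\bar{G})(p_S) - \kappa \; = \; \frac{\kappa}{\E \bar{G}(p_S) - 1} , \]
so the incentive constraint of \ref{IA-main} for $\phi'(\bar{G})$ reduces precisely to $\bar{G}(p_0) \leq 1$.

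Finally I would verify $\phi' \circ \phi = \mathrm{id}$ and $\phi \circ \phi' = \mathrm{id}$ on feasible sets by direct algebra: substituting $\phi(G) = G/\alpha$ with $\alpha := \E G(p_S) - \kappa$ into $\phi'$ gives a numerator $\kappa G / \alpha$ and a denominator $\kappa/\alpha$ that cancel to yield $G$, and the analogous cancellation works in the other order. I do not anticipate any genuine obstacle beyond the edge case already addressed; all nonlinearity in $\phi, \phi'$ is a single scalar factor, and the main subtlety is just justifying that this factor is strictly positive on the feasible sets so the maps are well defined.
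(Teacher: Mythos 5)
Your overall plan matches the paper's proof: both treat $\phi$ and $\phi'$ as positive scalar rescalings, observe that convexity, subdifferentiability, and the LL constraint transform transparently under scaling, and then verify by substitution that the incentive constraint of \ref{IA-main} corresponds to constraints (\ref{eqn:IA-dual-one-constr})--(\ref{eqn:IA-dual-dumb-constr}) of \ref{IA-dual} and that the compositions cancel algebraically. Those parts are fine.

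There is, however, a genuine gap at exactly the place you flagged. You establish only $\E G(p_S) - \kappa \geq G(p_0) \geq 0$, then propose to ``dispose of the degenerate boundary case $\E G(p_S) = \kappa$ ... by treating the incentive constraint as strict ... which is without loss for the subsequent optimization.'' That does not prove the lemma as stated: the lemma is a claim about the full feasible set of \ref{IA-main}, and $\phi(G)$ is simply undefined if $\E G(p_S) - \kappa = 0$, so you must show this case cannot arise in a nontrivial setting rather than excise it from the feasible set. In fact it cannot arise, and proving this is the one nontrivial part of the feasibility argument. The paper's route: since $\kappa > 0$, the incentive constraint gives $\E G(p_S) > G(p_0)$, so some $\sigma$ in the support of $q$ has $G(p_\sigma) > G(p_0) \geq 0$; take the subtangent $t$ of $G$ at $p_\sigma$, use Lemma \ref{lemma:p-p0-interior} to write $p_0 = \beta p_\sigma + \sum_{\omega' \neq \omega}\beta_{\omega'}\delta_{\omega'}$ with $\beta > 0$, and use $t(\omega') \geq 0$ (LL) together with $\bar{t}(p_\sigma) = G(p_\sigma) > 0$ to conclude $G(p_0) \geq \bar{t}(p_0) \geq \beta\, G(p_\sigma) > 0$. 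This strict inequality, combined with the incentive constraint, gives $\E G(p_S) - \kappa > 0$, which is what makes $\phi$ well-defined. Your ``without loss'' reasoning would also need a separate density or limiting argument to be applicable to the later optimization lemma, whereas the direct proof that the denominator is strictly positive is shorter and closes the gap where it actually lives.
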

\begin{proof}
  We first show that if $G$ is feasible for \ref{IA-main}, then $\phi(G)$ is feasible for \ref{IA-dual}.
  We then show that if $\bar{G}$ is feasible for \ref{IA-dual}, then $\phi'(\bar{G})$ is feasible for \ref{IA-main}.
  We then show that $\phi$ and $\phi'$ are inverses on the feasible solution sets, which completes the proof.
  
  Let $G$ be feasible for \ref{IA-main}.
  First, we claim that $G(p_0) > 0$.
  By nontriviality and the incentive constraint (\ref{eqn:IA-incent-constr}), there exists $\sigma$ with $G(p_{\sigma}) > G(p_0)$.
  Consider the subtangent $t$ at $p_{\sigma}$ and write $p_0$ as a convex combination of $p_{\sigma}$ and $\{\delta_{\omega} : \omega \in \Omega\}$ with strictly positive weight on $p_{\sigma}$.
  (This is always possible; e.g. it follows from Lemma \ref{lemma:p-p0-interior}.)
  Because $\bar{t}(p_{\sigma}) > 0$ and $\bar{t}(\delta_{\omega}) \geq 0$ by limited liability, we obtain that $\bar{t}(p_0) > 0$.
  Because $t$ is a subtangent of $G$, $G(p_0) \geq \bar{t}(p_0) > 0$.
  
  So we have $G(p_0) > 0$, and the incentive constraint (\ref{eqn:IA-incent-constr}) gives $\E G(p_S) - \kappa \geq G(p_0) > 0$.
  For convenience, let $\alpha = \frac{1}{\E G(p_S) - \kappa}$, and let $\bar{G} = \phi(G) = \alpha G$.
  The point of the proof so far is that $\alpha$ is strictly positive and well-defined (i.e. the denominator is positive).
  Observe using the definition that $v$ is a subgradient of $G$ at $p$ if and only if $\alpha v$ is a subgradient of $\bar{G}$ at $p$.
  This implies that $\bar{G}$ is a subdifferentiable convex function.
  
  Now $\bar{G}(p_0) = \alpha G(p_0) \leq 1$ immediately from constraint (\ref{eqn:IA-incent-constr}) in \ref{IA-main}, showing that constraint (\ref{eqn:IA-dual-one-constr}) in \ref{IA-dual} is satisfied.
  Meanwhile, $\E \bar{G}(p_S) = \alpha \E G(p_S) > 1$ because $\kappa > 0$, giving constraint (\ref{eqn:IA-dual-dumb-constr}).
  
  Finally, for the limited liability constraint (\ref{eqn:IA-dual-LL-constr}), we use the definition of subtangents and subgradients to observe that if $t'$ is a subtangent of $\bar{G}$ at a point $p$, then $\frac{1}{\alpha} t'$ is a subtangent of $G$ at $p$.
  It follows immediately from Definition \ref{def:subtanG-minpay} (of $M_G$) that if $M_G(\omega) \geq 0$ then $M_{\bar{G}}(\omega) \geq 0$, completing the proof that $\bar{G}$ is feasible.
  
  For the second part of the proof, suppose $\bar{G}$ is feasible for \ref{IA-dual}.
  Let $\beta = \frac{\kappa}{\E \bar{G}(p_S) - 1}$, observing that $\beta \in (0,\infty)$ by nontriviality and constraint (\ref{eqn:IA-dual-dumb-constr}).
  Let $G(p) =\phi'(\bar{G})= \beta \bar{G}(p)$, a subdifferentiable convex function.
  Consider
  \begin{align*}
    \E G(p_S) - G(p_0)
    &=    \beta \left(\E \bar{G}(p_S) - \bar{G}(p_0)\right)  \\
    &\geq \beta \left(\E \bar{G}(p_S) - 1 \right)  & \text{using constraint (\ref{eqn:IA-dual-one-constr})} \\
    &=    \kappa ,
  \end{align*}
  so $G$ satisfies the incentive constraint (\ref{eqn:IA-incent-constr}).
  Now the limited liability argument is exactly the same as above, namely that $M_G(\omega) = \beta M_{\bar{G}}(\omega)$.
  So $G$ is feasible for \ref{IA-main}.
  
  Now we show that $\phi^{-1} = \phi'$ on the respective feasible sets.
  Let $G$ be given, feasible for \ref{IA-main}.
  Let $\bar{G} = \phi(G) = \alpha G$, where $\alpha = \frac{1}{\E G(p_S) - \kappa}$ for convenience..
  Then
  \begin{align*}
    \phi'(\bar{G})
    &= \frac{\kappa \bar{G}}{\E \bar{G}(p_S) - 1}  \\
    &= \frac{\kappa \alpha G}{\alpha \E G(p_S) - 1}  \\
    &= \frac{\kappa G}{\E G(p_S) - (1/\alpha)}  \\
    &= \frac{\kappa G}{\kappa}  \\
    &= G .
  \end{align*}
  This completes the proof.
\end{proof}

%\maneesha{explain why G is piecewise linear? - or can we have infinite hyperplanes in the proofs following?}
%\bo{Looks like we can handle infinite hyperplanes in the proofs. The key point we need though is subdifferentiability, i.e. $G$ is a pointwise max of affine functions.}

Now we show that the $\phi$ relationship preserves optimality.

\begin{lemma}\label{lemma:IA-opt-equiv}
  Let the program parameters $\kappa$, $q$, $\{p_{\sigma} : \sigma \in \Sigma\}$
  be fixed.
  Define $\phi(G) = \frac{G}{\E_{S \sim q}\left[G(p_S)\right] - \kappa}$.
  Then $G^*$ is an optimal solution to \ref{IA-main} if  and only if $\phi(G^*)$ is an optimal solution to \ref{IA-dual}.
\end{lemma}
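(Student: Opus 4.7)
The plan is to leverage Lemma \ref{lemma:IA-feasible-equiv}, which already gives us that $\phi$ is a bijection between the feasible sets of \ref{IA-main} and \ref{IA-dual}. Since optimality is only about comparing objective values within the feasible set, it suffices to show that $\phi$ is order-reversing with respect to the two objective functions: $G_1$ has a smaller \ref{IA-main}-objective than $G_2$ if and only if $\phi(G_1)$ has a larger \ref{IA-dual}-objective than $\phi(G_2)$.

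First I would compute the \ref{IA-dual}-objective of $\phi(G)$ for any feasible $G$. Writing $y = \E_{S \sim q}[G(p_S)]$, we have $\phi(G) = G/(y - \kappa)$, so
\[
  \E_{S \sim q}[\phi(G)(p_S)] = \frac{y}{y - \kappa} = 1 + \frac{\kappa}{y - \kappa}.
\]
From the proof of Lemma \ref{lemma:IA-feasible-equiv}, we know $G(p_0) > 0$ and the incentive constraint yields $y - \kappa \geq G(p_0) > 0$, so $y > \kappa$ and the expression above is well-defined and finite.

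Next, I would observe that on the interval $(\kappa, \infty)$, the function $y \mapsto \kappa/(y - \kappa)$ is strictly decreasing (since $\kappa > 0$ in any nontrivial setting). Hence the map $G \mapsto \E[\phi(G)(p_S)]$ is a strictly decreasing function of the \ref{IA-main}-objective $\E[G(p_S)]$. Combined with the bijection property of $\phi$ (with inverse $\phi'$) established in Lemma \ref{lemma:IA-feasible-equiv}, this immediately yields the two directions of the iff: if $G^*$ achieves the minimum of \ref{IA-main}, then for any $\bar{G}$ feasible for \ref{IA-dual}, letting $G' = \phi'(\bar{G})$ feasible for \ref{IA-main}, we have $\E[G^*(p_S)] \leq \E[G'(p_S)]$, so $\E[\phi(G^*)(p_S)] \geq \E[\bar{G}(p_S)]$ by strict monotonicity, establishing optimality of $\phi(G^*)$; and the converse follows symmetrically using $\phi' = \phi^{-1}$.

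I do not expect a real obstacle here — the work has essentially been done in Lemma \ref{lemma:IA-feasible-equiv}. The only mild subtlety is confirming that the denominator $y - \kappa$ is strictly positive (so that $\phi$ is well-defined and the monotonicity argument applies on the correct interval), but this was already shown in the feasibility argument of the prior lemma via the nontriviality assumption and limited liability. Everything else is a direct translation of the objectives through the strictly monotone scalar map $y \mapsto y/(y-\kappa)$.
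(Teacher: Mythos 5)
Your proof is correct and takes essentially the same approach as the paper: both rely on Lemma~\ref{lemma:IA-feasible-equiv} for the bijection between feasible sets and on the identity $\E[\phi(G)(p_S)] = 1 + \kappa/(\E[G(p_S)] - \kappa)$ to relate the two objective values. The only difference is presentational — you package the inequality-chasing into an explicit "strictly decreasing scalar map $\Rightarrow$ order-reversing bijection" observation, while the paper carries out the same algebra directly in each direction of the iff.
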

\begin{proof}
  Let $G^*$ be an optimal solution to \ref{IA-main} and let $\bar{G}^* = \phi(G^*)$.
  Lemma \ref{lemma:IA-feasible-equiv} implies that $\bar{G}^*$ is feasible for \ref{IA-dual}.
  Now let $\bar{G}$ be any other feasible solution to \ref{IA-dual} and let $G = \phi^{-1}(\bar{G})$, a feasible solution to \ref{IA-main} by Lemma \ref{lemma:IA-feasible-equiv}.
  
  We show that $\bar{G}^*$ has at least as high an objective:
      \begin{align*}
          \E[\Bar{G}^*(p_S)] &= \frac{\E[G^*(p_S)]}{\E[G^*(p_S)]-\kappa}\\
                  &= 1+ \frac{\kappa}{\E[G^*(p_S)]-\kappa}\\
                  &\geq 1+ \frac{\kappa}{\E[G(p_S)]-\kappa} &\text{using optimality of $G^*$} \\
                  &= \E[\Bar{G}(p_S)]\\
      \end{align*}
  This shows that $\Bar{G^*}$ has at least as high an objective function as $\bar{G}$ in \ref{IA-dual}.
  Since this holds for all feasible $\bar{G}$, we obtain that $\bar{G}^*$ is optimal.
  
  For the converse, let $\Bar{G}^*$ be an optimal solution for \ref{IA-dual} and define $G^* = \phi^{-1}(\bar{G}^*) = \frac{\kappa \bar{G}^*}{\E[\bar{G}^*(p_S)] - 1}$.
  By Lemma \ref{lemma:IA-feasible-equiv}, $G^*$ is feasible for \ref{IA-main}.
  Let $G$ be any other feasible solution and $\bar{G} = \phi(G)$.
  \begin{align*}
          \E[G^*(p_S)] &= \frac{\kappa\E[\Bar{G}^*(p_S)]}{\E[\Bar{G}^*(p_S)]-1}\\
                  &= \kappa+\frac{\kappa}{\E[\Bar{G}^*(p_S)]-1}\\
                  &\leq \kappa+\frac{\kappa}{\E[\Bar{G}(p_S)]-1} & \text{using optimality of $\bar{G}^*$}\\
                  &= \E[G(p_S)] .
  \end{align*}
  Because $G^*$ has a weakly lower objective value than any other feasible solution, it is optimal.
\end{proof}

We now present an optimal solution to \ref{IA-dual}.
For each $\omega \in \Omega$, define $h^*_{\omega}(p)$ to be the linear function that interpolates the points $(p_0,1)$ and $\{(\delta_{\omega'}, 0) : \omega' \in \Omega, \omega' \neq \omega\}$.
In particular, that function is
\begin{align*}
  h^*_{\omega}(p) &= \frac{1}{p_0(\omega)} \langle \delta_{\omega}, p \rangle  \\
          &= \frac{p(\omega)}{p_0(\omega)} .
\end{align*}
(Recall that we have assumed $p_0$ has full support.)
We see that $h^*_{\omega}(\delta_{\omega'}) = 0$ if $\omega' \neq \omega$, and that $h^*_{\omega}(p_0) = 1$, as required.
Now, let
  \[ \Bar{G}^*(p) = \max_{\omega} h_{\omega}^*(p) = \max_{\omega} \frac{p(\omega)}{p_0(\omega)} . \]
We observe that $\bar{G}^*$ is a polyhedral pointed cone (PPC) at $p_0$: it is the maximum over a finite number of hyperplanes, all of which pass through the point $(p_0,1)$.

\begin{theorem} \label{thm:IA-opt-dual}
  For any fixed prior $p_0$, the function $\Bar{G}^*(p) = \max_{\omega} \frac{p(\omega)}{p_0(\omega)}$ is an optimal solution to Program \ref{IA-dual}, regardless of the other parameters of the setting.
\end{theorem}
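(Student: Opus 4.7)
The plan is to prove Theorem \ref{thm:IA-opt-dual} in two stages: first verify that $\bar{G}^*$ is feasible for Program \ref{IA-dual}, and then establish the stronger fact that $\bar{G}(p) \leq \bar{G}^*(p)$ pointwise on $\P$ for every feasible $\bar{G}$. The pointwise envelope immediately implies $\E_{S \sim q}[\bar{G}(p_S)] \leq \E_{S \sim q}[\bar{G}^*(p_S)]$, so no further argument would be needed for optimality.

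Feasibility is a direct check. As a pointwise maximum of finitely many affine functions $h^*_\omega$, the function $\bar{G}^*$ is convex and subdifferentiable. Then $\bar{G}^*(p_0) = \max_\omega p_0(\omega)/p_0(\omega) = 1$, matching the upper-bound constraint. Since $\sum_\omega p(\omega) = \sum_\omega p_0(\omega) = 1$ for any $p \in \Delta_\Omega$, at least one $\omega$ satisfies $p(\omega) \geq p_0(\omega)$, so $\bar{G}^*(p) \geq 1$ everywhere, with strict inequality whenever $p \neq p_0$; nontriviality therefore forces $\E[\bar{G}^*(p_S)] > 1$. Each affine piece $h^*_\omega$ is the expected-payment function of the contract $t_\omega$ with $t_\omega(\omega) = 1/p_0(\omega)$ and $t_\omega(\omega') = 0$ for $\omega' \neq \omega$, and subgradients at points where several pieces tie are convex combinations of these; every subtangent contract is thus nonnegative on $\Omega$, giving limited liability.

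For the pointwise inequality, fix any feasible $\bar{G}$ and any $p \in \P$, and let $\bar{t}$ be a subtangent of $\bar{G}$ at $p$ with underlying contract $t$. Limited liability forces $t(\omega) \geq 0$ for every $\omega$, and the subtangent property together with $\bar{G}(p_0) \leq 1$ gives $\bar{t}(p_0) = \sum_\omega p_0(\omega)\, t(\omega) \leq \bar{G}(p_0) \leq 1$. Setting $M := \bar{G}^*(p) = \max_\omega p(\omega)/p_0(\omega)$ yields $p(\omega) \leq M \, p_0(\omega)$ coordinatewise, and nonnegativity of $t$ allows us to conclude
\[
  \bar{G}(p) \;=\; \bar{t}(p) \;=\; \sum_\omega p(\omega)\, t(\omega) \;\leq\; M \sum_\omega p_0(\omega)\, t(\omega) \;=\; M\, \bar{t}(p_0) \;\leq\; M \;=\; \bar{G}^*(p).
\]
Averaging over $S \sim q$ finishes the proof.

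The main conceptual obstacle is recognizing that this pointwise envelope bound follows from only two ingredients: the single scalar bound $\bar{G}(p_0) \leq 1$ and the coordinatewise nonnegativity of every subtangent contract of $\bar{G}$. There is a tempting detour through LP duality or a variational argument, but both are unnecessary once one notices that rescaling $p_0$ entrywise to dominate $p$ costs a multiplicative factor of exactly $M = \bar{G}^*(p)$; the only subtlety is ensuring that the sign of $t$ is right so that the inequality flips in the correct direction, which is precisely what limited liability supplies.
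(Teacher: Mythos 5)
Your proof is correct and takes essentially the same route as the paper: check feasibility, then show that $\bar{G}^* \geq \bar{G}$ pointwise for every feasible $\bar{G}$, with the pointwise dominance flowing from exactly the two constraints you identify, namely limited liability and $\bar{G}(p_0) \leq 1$. The paper reaches the pointwise bound by first establishing an auxiliary lemma (Lemma~\ref{lemma:p-p0-interior}) that writes $p_0$ as a convex combination $\beta p + \sum_{\omega' \neq \omega}\beta_{\omega'}\delta_{\omega'}$ with $\beta > 0$, and then applies the subtangent to that decomposition; the quantity $1/\beta$ there is precisely your $M = \bar{G}^*(p)$. Your version skips the lemma and instead uses the coordinatewise inequality $p(\omega) \leq M\,p_0(\omega)$ directly in the sum $\sum_\omega p(\omega) t(\omega)$, which is a cleaner packaging of the same arithmetic. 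You are also somewhat more explicit than the paper on two secondary points: the strictness of $\E[\bar{G}^*(p_S)] > 1$ (the paper just cites Jensen with a to-do note, while you argue from $\sum p(\omega) = \sum p_0(\omega) = 1$ that $p \neq p_0$ forces strict inequality), and the handling of subgradients at tied vertices of the pointwise maximum. No gaps.
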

\begin{proof}
  For feasibility: we observe that $\bar{G}^*(p_0) = 1$.
  We also have $\E \bar{G}^*(p_S) > \bar{G}^*(p_0) = 1$ by Jensen's inequality, which is strict by nontriviality of the IA setting and construction of $\bar{G}^*$.\bo{TODO come back and prove that in more detail.}
  Finally, we have $M_{\bar{G}^*}(\omega) \geq 0$ for all $\omega$ because $\bar{G}^*$ is the pointwise maximum over the subtangents $\{h^*_{\omega} : \omega \in \Omega\}$, each of which is nonnegative on $\Delta_{\Omega}$.
  
  For optimality: Let $\bar{G}$ be any other feasible solution to Program \ref{IA-dual}.
  We show that $\bar{G}^* \geq \bar{G}$ pointwise, which implies its objective value is weakly higher.
  
  Fix any $p \in \Delta_{\Omega}$.
  By Lemma \ref{lemma:p-p0-interior}, for some $\omega$ and some probability distribution consisting of the numbers $\beta, \{\beta_{\omega'} : \omega' \neq \omega\}$, we can write $p_0 = \beta p + \sum_{\omega' \neq \omega} \beta_{\omega'} \delta_{\omega'}$.
  Furthermore, $\beta > 0$.
  
  Because $\bar{G}$ is subdifferentiable, there exists an affine function $h$ that is a subtangent of $G$ at $p$, in particular $h(p) = G(p)$ and $h(p_0) \leq G(p_0)$.
  We have by feasibility of $\bar{G}$ that $h(p_0) \leq 1$ and $h(\delta_{\omega'}) \geq 0$ for all $\omega'$.
  Using these two inequalities:
  \begin{align*}
    1 &\geq h(p_0)  \\
      &=    h\left(\beta p + \sum_{\omega' \neq \omega} \beta_{\omega}' \delta_{\omega'}\right)  \\
      &=    \beta h(p) + \sum_{\omega' \neq \omega} \beta_{\omega'} h(\delta_{\omega'})  \\
      &\geq \beta h(p) \\
      &=    \beta \bar{G}(p).
  \end{align*}
  We conclude $\bar{G}(p) \leq \frac{1}{\beta}$.

  Meanwhile, by construction of $\bar{G}^*$,
  \begin{align*}
    1 &= h^*_{\omega}(p_0)  \\
      &= \beta h^*_{\omega}(p) + \sum_{\omega' \neq \omega} \beta_{\omega'} h^*_{\omega}(\delta_{\omega'})  \\
      &= \beta h^*_{\omega}(p)  \\
      &\leq \beta \bar{G}^*(p) .
  \end{align*}
  We conclude $\bar{G}^*(p) \geq \frac{1}{\beta} \geq \bar{G}(p)$, as desired.
\end{proof}

We can now prove our main result, a solution to the IA problem with limited liability.
\begin{proof}[Proof of Theorem \ref{thm:IIA-PPC}]
  By Theorem \ref{thm:IA-opt-dual}, given $p_0$, $\bar{G}^*$ is an optimal solution to Program \ref{IA-dual} regardless of $\kappa$ or $\{p_{\sigma} : \sigma \in \Sigma\}$.
  By Lemma \ref{lemma:IA-opt-equiv}, the function $G^* = \phi^{-1}(\bar{G}^*)$ is therefore optimal for Program \ref{IA-dual} regardless of $\{p_{\sigma} : \sigma \in \Sigma\}$.
  As $\bar{G}^*$ is a polyhedral pointed cone, and $G^*$ is a scaling of $\bar{G}^*$, it is also a PPC.
\end{proof}

\begin{remark} \label{remark:IA-sol-indicators}
  Our optimal solution $G^*$ to the IA problem with limited liability can be viewed as a menu of $|\Omega|$ ``indicator'' contracts $t_{\omega}$, each of which pays off a positive amount if $W=\omega$ and zero otherwise.
  Each indicator $t_{\omega}$ has been scaled so that the expected utility for choosing it if the agent does not observe $S$ is a constant, $G^*(p_0) = \bar{t}_{\omega}(p_0)$. 
  (We obtain this because $G^*$ is a rescaling of $\bar{G}^*$, whose subtangents $h_{\omega}^*$ are indicators.)
\end{remark}

\begin{remark} \label{remark:how-we-IA}
  In order to eventually discover the closed-form solution for $\bar{G}^*$, we first found a series of transformations that take a generic feasible solution $\bar{G}$ and transform it into a PPC at $p_0$, with the limited liability (\ref{eqn:IA-dual-LL-constr}) and upper-bound (\ref{eqn:IA-dual-one-constr}) constraints tight.
  Each transformation preserves feasbility and can only improve the objective.
  These transformations are illustrated in Appendix \ref{app:IA-how-opt}.
%  found that one can take a generic convex $\bar{G}$ and modify it to a PPC while maintaining feasibility and improving the objective value.
%  First, one can show that the optimal solution is a PPC by taking any subtangent $h$ and shifting it upward so that it passes through $(p_0,1)$\maneesha{need to explain why it is 1 and not anything less}.
%  This maintains feasibility in \ref{IA-dual} and increases the expected value of $\bar{G}(p_S)$.
%  Second, one can improve a PPC whose limited liability constraints are not all tight by adding to $\bar{G}$ extra subtangents, obtained by shifting the existing ones down until limited liability is tight, then applying a ``rotation'' such that they pass through $(p_0,1)$.
  These results gave a restricted enough space that we were able to guess-and-check the optimal solution by hypothesizing the steepest possible subtangents that satisfied limited liability and $h_{\omega}^*(p_0) \leq 1$.
\end{remark}

\section{Contracts for Hidden Actions} \label{sec:contract-theory}
In this section, we consider the special case of the problem without information acquisition.
This reduces to a classic contract theory problem with hidden actions and limited liability, which has been addressed classically.
However, to our knowledge, the proper scoring rule perspective is novel.

Classically there is no need to offer more than one contract.
However, we will see that there is no loss in offering an entire menu of contracts based on a scoring rule, and we show when and how an optimal menu can contain other contracts.
This may be useful for more general settings.

One interpretation of the scoring-rule solution is as follows: rather than directing the agent to take a particular action, the principal simply asks the agent to make a prediction about $W$ and offers a proper scoring rule reward.
The agent finds that they can optimize their forecast accuracy by \emph{changing} the distribution of $W$, i.e. taking the action that was desired by the principal in the first place.
Furthermore, the improvement in forecast accuracy (i.e. increase in payment) is worth the cost the agent incurs in taking the action.

\subsection{Model and convexified cost curve} \label{sec:cost-curve}

The classic contracts setting can be achieved from our general model by removing the impact of the signal, i.e. setting $\kappa=0$ and $p_{a,\sigma}=p_{a}$ and $f(\sigma) = a^*$, for all $a \in A$ and $\sigma \in \Sigma$.
In this case, the principal's problem reduces to incentivizing a fixed action $a^*$.

We note that the classic contracts setting generally takes $\Omega$ to be a set of numerical outcomes denoting utility for the principal.
However, we do not make this assumption, and assume that the principal's preferences are encoded by the desired plan or action $a^*$.
With a solution to our problem, one can address those settings by iterating over the finitely many actions, solving the minimum payment problem for each, and then choosing the action that optimizes net utility.
(This is also the algorithm described in \citet{dutting2019simple} for the utility optimization problem.)

\begin{prop} \label{prop:contracts-program}
  In the contracts setting, the principal's minimum payment problem for incentivizing an action $a^*$, given parameters $\{c_a : a \in A\}, \{p_a : a \in A\}$, is solved by Program \ref{contracts-main}.
\end{prop}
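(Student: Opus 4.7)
The plan is to mirror the proof of Proposition \ref{prop:IA-opt-prob}, invoking the scoring-rule reduction to translate statements about contract menus into statements about subdifferentiable convex functions $G : \P \to \reals$. In this special case the agent has no signal, so the relevant simplex is $\P = \text{convhull}(\{p_a : a \in A\})$, and the agent's belief about $W$ at the moment of selecting a contract depends only on the action she plans to take.

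First I would invoke Proposition \ref{prop:menu-convex} to assert that without loss of generality the principal's menu is represented by a subdifferentiable convex $G$ on $\P$, so that the menu consists of the subtangents of $G$. Next I would apply Proposition \ref{prop:expected-G}: if the agent plans action $a$ at the moment of selecting a contract, her belief about $W$ is $p_a$, and her optimal contract is a subtangent of $G$ at $p_a$, which yields expected payment $G(p_a)$ and net utility $G(p_a) - c_a$. Because the agent's joint contract-and-action deviation is optimized, and the contract is chosen optimally conditional on $a$, the agent's effectively chosen action is the maximizer of $G(p_a) - c_a$ over $a \in A$.

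From these two observations the terms of Program \ref{contracts-main} should fall out directly. The principal's expected payment when the agent plays $a^*$ is $G(p_{a^*})$, giving the objective. The incentive constraint $G(p_{a^*}) - c_{a^*} \geq G(p_a) - c_a$ for each $a \in A$ exactly encodes that $a^*$ is the agent's utility-maximizing action. The participation constraint is $G(p_{a^*}) - c_{a^*} \geq 0$, and the limited-liability requirement becomes $M_G(\omega) \geq 0$ for all $\omega \in \Omega$ via Definition \ref{def:subtanG-minpay}. Each of these, with the convention that $G$ is subdifferentiable convex, transcribes the conditions on the menu $T$.

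The main subtlety, as opposed to a genuine obstacle, is verifying that the incentive constraint need only compare the conditionally-optimal contracts under each action, rather than some more elaborate joint deviation of contract and action. This is exactly what Proposition \ref{prop:expected-G} delivers: once we fix the planned action $a$, the agent's expected payment under any contract from the menu is at most $G(p_a)$, with equality at the appropriate subtangent, so comparing $G(p_a) - c_a$ across $a$ is sufficient. A second small point worth flagging is that, unlike in the IA setting, participation is no longer automatically implied by limited liability (since $c_{a^*}$ may be strictly positive while $G(p_{a^*})$ is small), so if Program \ref{contracts-main} lists the participation constraint separately this must be justified explicitly; otherwise it must be noted that it may bind.
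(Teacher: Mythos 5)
Your proof is correct and takes essentially the same route as the paper: invoke Proposition~\ref{prop:menu-convex} to reduce to a subdifferentiable convex $G$, use Proposition~\ref{prop:expected-G} to identify $G(p_a)$ as the conditional optimal expected payment for planned action $a$, and then read off the objective and the three constraint families. Your closing observations (that it suffices to compare conditionally-optimal contracts across actions, and that participation is not implied by limited liability once $c_{a^*}>0$) are correct and explain why the paper lists participation separately in Program~\ref{contracts-main}.
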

%The principal's problem in this case reduces to the following: Given $\{p_a : a \in A\}$, $\{c_a : a \in A\}$, and a desired action (plan) $a^*$, choose $G$ to solve:

\begin{alignat}{2}
                          \text{}   \quad \min_{G} G(p_{a^*})  &           & \quad & \nonumber \\
  \lplabel[contracts-main]{(P3)}\text{s.t.} \quad  \text{  G is subdifferentiable convex and } & &\quad&\nonumber \\
                                    \quad G(p_{a^*})- c_{a^*} & \geq G(p_a)-c_a & \quad & \forall a \in A  \label{eqn:contracts-incent-constr} \\
                                    \quad G(p_{a^*}) - c_{a^*} & \geq 0 & \quad & \quad  \label{eqn:contracts-partic-constr} \\
                                        M_G(\omega) & \geq 0,   & \quad & \forall \omega \in \Omega.  \label{eqn:contracts-LL-constr}
\end{alignat}
\begin{proof}
  By Proposition \ref{prop:menu-convex}, a menu can be represented WLOG by a subdifferentiable convex function $G$.
  By Proposition \ref{prop:expected-G}, an agent who intends to take action $a$, resulting in a belief $p_a$ about the distribution of $W$, maximizes expected payment by selecting the contract $t$ that is a subtangent of $G$ at $p_a$, and her  expected payment in this case is $G(p_a)$.
  The objective is thus to minimize expected payment when the agent takes $a^*$; constraint (\ref{eqn:contracts-incent-constr}) ensures the agent prefers taking action $a^*$ and selecting the optimal corresponding contract to taking any other action $a$; constraint (\ref{eqn:contracts-partic-constr}) ensures participation; and constraint (\ref{eqn:contracts-LL-constr}) is limited liability.
\end{proof}

Recall that $\P$ is the convex hull of the possible posterior beliefs, i.e. here $\P = \text{convhull}(\{p_a : a \in A\})$.
\begin{definition}[Convexified cost curve] \label{def:cost-curve}
  Given a contracts setting defined by $X = \{(p_a, c_a) : a \in A\}$, the \emph{convexified cost curve} is the function $c: \P \to \reals$ whose graph is the minimum of the convex hull of $X$, i.e.
  %$c(p) = \min \{\sum_{a\in A} \lambda_ac_a : p = \sum_{a\in A}\lambda_ap_a, \sum_{a\in A}\lambda_a = 1,\lambda_a \geq 0 ,\forall a \in A\}$.
  $c(p) = \min \{\E_{a \sim \lambda} c_a ~:~ p = \E_{a\sim \lambda} p_a, ~ \lambda \in \Delta_A\}$.
\end{definition}
%Note that by the fact that we are defining $c$ for points in $\P$, such a sequence of $\lambda$s are going to exist.
The convexified cost curve can equivalently be defined as the \emph{lower convex envelope} of the points in $X$, i.e. as the pointwise maximum of all affine functions that lie below all these points.

The intuition is that, given a probability distribution $p \in \P$ over the outcomes, $c(p)$ represents the lowest possible cost to the agent to implement $p$ via a randomized choice of actions.
In general, if we can write $p = \sum_a \lambda(a) p_a$ for some probability distribution $\lambda$, then the agent can cause $W \sim p$ by first picking $a$ from $\lambda$, then performing action $a$.
The cost for doing so is $\E_{a\sim \lambda} c_a$; and $c(p)$ minimizes this cost over all such possible $\lambda$.
Thus, $c$ ``convexifies'' the agent's action space.
We will formalize the intuition that, if $c_a > c(p_a)$, then the principal cannot incentivize action $a$.
Intuitively, rather than playing $a$, the agent could more cheaply draw an action from a distribution $\lambda$ and perform that action.
This would result in the same distribution $p_a$ over outcomes $\Omega$ as performing that action, so the expected payment would be the same but the cost would be smaller.

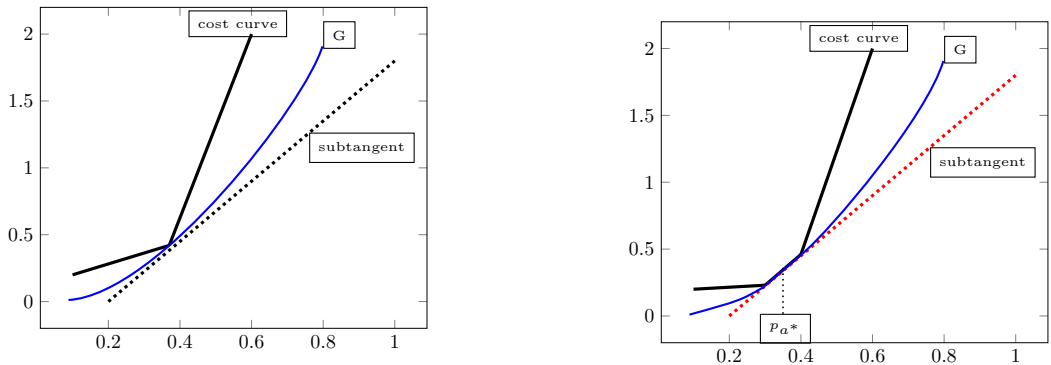
\begin{figure}[H]
\centering
\caption{The convexified cost curve; strict and weak elicitability.}
\label{fig:strict-weak-elicit}
\begin{subfigure}{.5\textwidth}
  \centering
  \captionsetup{width=.9\linewidth,font=small}
\begin{tikzpicture} [scale=0.75]
\begin{axis}
    %\draw[ultra thick,purple,domain=10:80,smooth] plot (\x,{(0.1*\x)^2-0.1*(\x)}) ;
    \addplot [mark=none,  black,   ultra thick] coordinates { (0.1,0.2) (0.37, 0.42)} ;
    \addplot [mark=none,  black,   ultra thick] coordinates { (0.37, 0.42) (0.6,2)};
    %\node [draw] at (0.6,2.5){cost curve};
    \addplot [mark=none,  black,   ultra thick, dotted] coordinates { (0.2,0) (1,1.8)};
    
\end{axis}
\draw[thick, blue, smooth] (0.5, 0.5)
      .. controls (2.1, 0.6) and (4.7, 4) .. (5, 5) ;
     \node [draw] at (5.3,5.2){\tiny G}; 
     \node [draw] at (3.5,5.4){\tiny cost curve}; 
     \node [draw] at (5.7,3.2){\tiny subtangent};
\end{tikzpicture}
\caption{Here $G$ is ``sandwiched'' between the cost curve and subtangent of cost curve at the desired action. $G$, or an upward shift (to satisfy limited liability), is a feasible menu. Note that $G$, with the pictured subtangent contract, strictly elicits the action at the pictured vertex of the cost curve.}
\label{fig:strict-sandwich}
\end{subfigure}%
\begin{subfigure}{.5\textwidth}
  \centering
  \captionsetup{width=.9\linewidth,font=small}
\begin{tikzpicture} [scale=0.75]
\begin{axis}
    %\draw[ultra thick,purple,domain=10:80,smooth] plot (\x,{(0.1*\x)^2-0.1*(\x)}) ;
    \addplot [mark=none,  black,   ultra thick] coordinates { (0.1,0.2) (0.3, 0.23)} ;
    \addplot [mark=none,  black,   ultra thick] coordinates { (0.3, 0.23) (0.4,0.46)};
    \addplot [mark=none,  black,   ultra thick] coordinates { (0.4, 0.46) (0.6,2)};
    %\node [draw] at (0.6,2.5){cost curve};
    \addplot [mark=none,  red,   ultra thick, dotted] coordinates { (0.2,0) (1,1.8)};
    \addplot [mark=none,  black, thick ,dotted] coordinates { (0.35,0.36) (0.35,0)};
\end{axis}
\draw[thick, blue, smooth] (0.5, 0.5) .. controls (1.4, 0.75) and (1.4,0.75) .. (1.8, 0.98) ;
\draw[thick, blue, smooth] (1.8, 0.98) .. controls (2.15, 1.28) and (2.15,1.28) .. (2.5, 1.58) ;
\draw[thick, blue, smooth] (2.5, 1.58) .. controls (3, 2) and (4.7, 4) .. (5, 5) ;
      %\draw[thick, blue, smooth] (0.5, 0.5)
      %.. controls (2.1, 0.6) and (4.7, 4) .. (5, 5) ;
     \node [draw] at (5.3,5.2){\tiny G}; 
     \node [draw] at (3.5,5.4){\tiny cost curve}; 
     \node [draw] at (5.7,3.2){\tiny subtangent};
     \node[draw] at (2.2,0.25){\tiny $p_{a^*}$};
\end{tikzpicture}
\caption{A case where $a^*$ is elicitable, but not strictly elicitable. Any menu $G$ eliciting $a^*$ must have the dotted line as a subtangent, up to a vertical shift. This subtangent contract will weakly incentivize the other actions at the neighboring vertices of the cost curve.}
\label{fig:weak-sandwich}
\end{subfigure}%
\end{figure}

\subsection{Results}
We first characterize feasible solutions to the minimum payment problem, then optimal ones.
Actually, we begin by characterizing feasible solutions in the absence of the limited liability constraint.

\begin{definition}[Elicits, elicitable]\label{def:elicit}
  We say that a menu of contracts, presented as a subdifferentiable convex $G$, \emph{elicits} action $a^*$ if $G$ satisfies the incentive (\ref{eqn:contracts-incent-constr}) and participation (\ref{eqn:contracts-partic-constr}) constraints in Program \ref{contracts-main}.
  If such a $G$ exists, we say $a^*$ is \emph{elicitable}.
\end{definition}

All results and proofs appear in Appendix \ref{app:contracts}.
The intuition for this characterization was described above.
\begin{corollary*}[Corollary \ref{cor:contracts-elicitable}]
  An action $a^*$ is elicitable if and only if $c(p_{a^*}) = c_{a^*}$, i.e. if the point $(p_{a^*},c_{a^*})$ lies on the lower boundary of the convex hull of $\{(p_a,c_a) : a \in A\}$.\footnote{We note that this is similar to Proposition 2 of Appendix A2 in \citet{dutting2019simple} paper.}
\end{corollary*}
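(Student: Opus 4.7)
The plan is to prove both directions via the convexity of $G$ and the geometric meaning of $c(p_{a^*}) = c_{a^*}$ as ``$(p_{a^*}, c_{a^*})$ lies on the lower convex envelope.''

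For the forward ($\Rightarrow$) direction, suppose $a^*$ is elicited by some subdifferentiable convex $G$. The inequality $c(p_{a^*}) \leq c_{a^*}$ is immediate by taking $\lambda$ to be the point mass on $a^*$ in Definition \ref{def:cost-curve}. For the reverse inequality, I would fix any $\lambda \in \Delta_A$ with $p_{a^*} = \sum_a \lambda(a) p_a$ and apply Jensen's inequality to get $G(p_{a^*}) \leq \sum_a \lambda(a) G(p_a)$. Then I substitute the incentive constraint (\ref{eqn:contracts-incent-constr}) in the form $G(p_a) \leq G(p_{a^*}) - c_{a^*} + c_a$ into the right-hand side; the $G(p_{a^*})$ terms cancel and I obtain $c_{a^*} \leq \sum_a \lambda(a) c_a$. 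Taking the infimum over $\lambda$ gives $c_{a^*} \leq c(p_{a^*})$.

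For the reverse ($\Leftarrow$) direction, assume $c(p_{a^*}) = c_{a^*}$, which geometrically says $(p_{a^*}, c_{a^*})$ is on the lower boundary of $\text{convhull}(\{(p_a, c_a) : a \in A\}) \subseteq \reals^{\Omega} \times \reals$. I would invoke the standard fact that the lower convex envelope is the pointwise supremum of all affine functions lying weakly below every point $(p_a, c_a)$, so at the boundary point there is a supporting affine function $h: \Delta_\Omega \to \reals$ with $h(p_a) \leq c_a$ for all $a \in A$ and $h(p_{a^*}) = c_{a^*}$. Setting $G := h$ gives a (trivially) subdifferentiable convex menu; the incentive constraint is $G(p_{a^*}) - c_{a^*} = 0 \geq h(p_a) - c_a = G(p_a) - c_a$, and participation holds with equality. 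By Proposition \ref{prop:menu-convex} this is a legitimate menu (in fact a single contract $t$ given by $t(\omega) = h(\delta_\omega)$).

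The main subtlety is in the reverse direction: I have to ensure the supporting hyperplane is \emph{non-vertical} in the $(p, c)$-space, i.e.\ is the graph of an affine function of $p$. This is automatic because ``lies on the lower boundary'' of the convex hull means there is a halfspace of the form $\{(p, y) : y \geq h(p)\}$ supporting it, not merely an arbitrary halfspace; equivalently, the lower convex envelope is always realized as a supremum of affine functions of $p$. Everything else is bookkeeping: the forward direction is a clean application of Jensen plus incentive compatibility, and the reverse direction reduces to exhibiting the single supporting-hyperplane contract.
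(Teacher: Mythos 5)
Your proof is correct, and it takes a genuinely different route from the paper's. The paper's proof of this corollary is a one-line consequence of an earlier structural result (Proposition~\ref{prop:contracts-elicitable}): in the forward direction the paper subtracts $\beta = G(p_{a^*}) - c_{a^*}$ to obtain a convex $G'$ lying below every data point $(p_a, c_a)$, then uses that $c$ is the pointwise supremum of all such functions to conclude $c_{a^*} = G'(p_{a^*}) \le c(p_{a^*}) \le c_{a^*}$; in the reverse direction the paper simply takes $G = c$ and observes it satisfies the proposition's conditions with $\beta = 0$. Your forward direction bypasses the shift-and-envelope machinery entirely with a direct Jensen-plus-IC computation, which is more elementary and self-contained but does not produce the intermediate shifted-menu structure the paper reuses in Proposition~\ref{prop:contracts-minimal-opt} and beyond. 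Your reverse direction exhibits a single supporting affine contract $h$ instead of the whole convexified cost curve; either works, and yours has the small added benefit of showing a singleton menu suffices (a fact the paper establishes separately). On the ``non-vertical hyperplane'' subtlety you flag: the cleanest justification is simply that $c$, being the lower convex envelope of finitely many points, is a polyhedral convex function on the polytope $\P$ and hence has a nonempty subdifferential at every point of its domain, so the subgradient at $p_{a^*}$ furnishes the desired affine $h$. Your phrasing is a bit circular (``lies on the lower boundary means there is such a halfspace'') but the underlying fact is standard and correct.
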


Building on Corollary \ref{cor:contracts-elicitable}, we give the following algorithm for computing an optimal solution to the classic contracts problem.
The main idea is to find an optimal subtangent of the convexified cost curve $c$ at the point $p_{a^*}$, then shift it to satisfy participation and limited liability.

\begin{prop*}[Proposition \ref{prop:contracts-minimal-opt}]
  If $a^*$ is elicitable, then Algorithm \ref{alg:contracts-minimal} computes an optimal solution to Program \ref{contracts-main} consisting of a single contract.
\end{prop*}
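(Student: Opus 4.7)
The plan is to prove optimality in two reductions: first, that restricting to single-contract menus is without loss of generality, and second, that an optimal single contract is determined by a subgradient of the convexified cost curve $c$ at $p_{a^*}$ together with an additive shift satisfying participation and limited liability. Algorithm \ref{alg:contracts-minimal} can then be verified to compute this object via a finite linear program.

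For the first reduction, take any feasible $G$ and let $t^*$ be its subtangent contract at $p_{a^*}$, so that $\bar{t^*}(p_{a^*}) = G(p_{a^*})$ and $\bar{t^*} \leq G$ pointwise. The singleton menu $\{t^*\}$ is represented by the affine function $\bar{t^*}$, and I claim all three constraints of Program \ref{contracts-main} carry over. The incentive constraint holds because for every $a \in A$, $\bar{t^*}(p_a) \leq G(p_a) \leq G(p_{a^*}) - c_{a^*} + c_a = \bar{t^*}(p_{a^*}) - c_{a^*} + c_a$; participation is inherited from $G$; and limited liability holds because $t^*(\omega) \geq M_G(\omega) \geq 0$ by Definition \ref{def:subtanG-minpay}. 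The objective value is preserved, so it suffices to search over single contracts.

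For the second reduction, parametrize a single contract by $G(p) = v \cdot p + \beta$. The incentive constraint becomes $v \cdot (p_a - p_{a^*}) \leq c_a - c_{a^*}$ for every $a$, which, using $c_{a^*} = c(p_{a^*})$ from Corollary \ref{cor:contracts-elicitable}, says exactly that the affine map $p \mapsto c(p_{a^*}) + v \cdot (p - p_{a^*})$ lies weakly below each $(p_a, c_a)$, hence below the lower envelope $c$, with equality at $p_{a^*}$. Thus $v \in \partial c(p_{a^*})$. Holding $v$ fixed, the participation and limited-liability constraints each push $\beta$ from below, so the minimum of $G(p_{a^*}) = v \cdot p_{a^*} + \beta$ is $\max\bigl(c_{a^*},\ \max_{\omega}\ v \cdot (p_{a^*} - \delta_{\omega})\bigr)$, which we then minimize over $v \in \partial c(p_{a^*})$. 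Because $c$ is the lower envelope of finitely many points, $\partial c(p_{a^*})$ is a polyhedron and this optimization is a finite LP.

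The main obstacle is handling cleanly the coupling between the subgradient choice $v$ and the two competing lower bounds on $\beta$ — one must verify that the linear program the algorithm solves correctly captures the $\max$ appearing after eliminating $\beta$. The cleanest route is not to eliminate $\beta$ at all, but to solve a single LP in variables $(v,\beta)$ minimizing $v \cdot p_{a^*} + \beta$ subject to $v \cdot (p_a - p_{a^*}) \leq c_a - c_{a^*}$ for all $a$, $\ v \cdot p_{a^*} + \beta \geq c_{a^*}$, and $v(\omega) + \beta \geq 0$ for all $\omega \in \Omega$, then verify that Algorithm \ref{alg:contracts-minimal} produces an optimizer. Feasibility of this LP follows from elicitability via Corollary \ref{cor:contracts-elicitable}, and optimality over the original program follows from the single-contract reduction above.
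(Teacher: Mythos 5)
Your proposal is correct and follows essentially the same route as the paper's proof: reduce to a single contract by taking the subtangent of any feasible $G$ at $p_{a^*}$ and checking constraints directly, observe that this contract must be a $\beta$-shift of a subtangent of $c$ at $p_{a^*}$ for some $v \in \partial c(p_{a^*})$ (using $c_{a^*}=c(p_{a^*})$ from Corollary~\ref{cor:contracts-elicitable}), and then minimize over $(v,\beta)$. The coupling concern you flag at the end closes cleanly: after eliminating $\beta$, the objective you derive is $\max\bigl(c_{a^*},\, v\cdot p_{a^*}-\min_\omega v(\omega)\bigr)$, which is nondecreasing in $v\cdot p_{a^*}-\min_\omega v(\omega)$; hence maximizing $\min_\omega v(\omega)-v\cdot p_{a^*}$ over $\partial c(p_{a^*})$ --- exactly the algorithm's definition of $V_{a^*}$ --- is optimal regardless of which of the two lower bounds on $\beta$ is binding, and the $\max\{0,\beta\}$ in the algorithm reproduces the outer $\max$ in the eliminated form.
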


\begin{algorithm}
  \caption{Computing a a minimal optimal contracts menu $T$.}
  \label{alg:contracts-minimal}
  Given a contracts problem, let $c$ be the convexified cost curve and $a^*$ the desired elicitable action. \\
  Define $V_{a^*} = \arg\max_{v \in \partial c(p_{a^*})} \min_{\omega}   v(\omega) - v \cdot p_{a^*} .$ \tcp*{the optimal subgradients of $c$ at $p_{a^*}$}
  Let $v \in V_{a^*}$. \\
  Define $\beta = -\left( c(p_{a^*}) + \min_{\omega} v(\omega) - v \cdot p_{a^*} \right)$. \tcp*{the shift required for limited liability}
  Let $t(\omega) = c(p_{a^*}) + v \cdot (\delta_{\omega} - p_{a^*}) + \max\{0, \beta\}$. \\
  Let $T = \{t\}$ and $G = \bar{t}$. \tcp*{$\bar{t}$ is the expected payment function, Definition \ref{def:subtangent}}
\end{algorithm}

\begin{remark} \label{remark:contracts-LP}
  Finding a member of $V_{a^*}$ is a linear programming problem, as $\partial c(p_{a^*})$ is a closed convex polytope.
  However, it is not fundamentally different than a standard linear program for the minimum payment problem in the literature (\emph{e.g.} mentioned in \citet{dutting2019simple}, Section 2.).
  So while we believe our results add useful geometric intuition, they do not appear to enable faster algorithms.
\end{remark}

Additionally, one can generally add a number of additional contracts to the menu without compromising optimality.
We can view this process as taking an optimal subtangent, an affine $G$, and ``convexifying'' it further by adding other contracts that do not violate limited liability and do not lie above $c$ (after shifting down by the appropriate $\beta$).
Sometimes, one can substitute suboptimal subtangents of $c$ at $p_{a^*}$ instead as well, because participation rather than limited liability is the binding constraint.
The result is Algorithm \ref{alg:contracts-full}, given in Appendix \ref{app:contracts}.

\begin{prop*}[Proposition \ref{prop:contracts-all-opt}]
  If $a^*$ is elicitable, then every optimal solution to Program \ref{contracts-main} is computed by Algorithm \ref{alg:contracts-full} (given in Appendix \ref{app:contracts}), for some choices of tiebreakers.
\end{prop*}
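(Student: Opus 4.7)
The plan is to take an arbitrary optimal solution $G^*$ to Program \ref{contracts-main} and reconstruct it as an output of Algorithm \ref{alg:contracts-full} under suitable tiebreaks. By Proposition \ref{prop:menu-convex}, $G^*$ is the pointwise maximum of the expected payment functions of its subtangents, so it suffices to characterize this entire collection of subtangents and match them against what the algorithm can produce.

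First, I would isolate a ``distinguished'' subtangent $t^*$ with $\bar{t}^*(p_{a^*}) = G^*(p_{a^*})$; this exists by subdifferentiability of $G^*$ at $p_{a^*}$. Let $v$ denote the gradient of $\bar{t}^*$. The key structural claim is that $v$ must be a subgradient of the convexified cost curve $c$ at $p_{a^*}$. To see this, combine the incentive constraint $(\ref{eqn:contracts-incent-constr})$ with $G^*(p_a) \geq \bar{t}^*(p_a)$ to get $\bar{t}^*(p_a) - c_a \leq \bar{t}^*(p_{a^*}) - c_{a^*}$ for every $a \in A$. Writing any $p \in \P$ as $p = \E_{a \sim \lambda} p_a$ with $\lambda$ attaining $c(p) = \E_{a \sim \lambda} c_a$, and exploiting affinity of $\bar{t}^*$, this extends to $\bar{t}^*(p) - c(p) \leq \bar{t}^*(p_{a^*}) - c(p_{a^*})$, using Corollary \ref{cor:contracts-elicitable} to identify $c_{a^*}$ with $c(p_{a^*})$. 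Hence $\bar{t}^* - c$ is maximized over $\P$ at $p_{a^*}$, which is precisely $v \in \partial c(p_{a^*})$.

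Second, I would argue the constant offset of $t^*$ matches a possible output of Algorithm \ref{alg:contracts-full}. By Proposition \ref{prop:contracts-minimal-opt}, the optimal value is $G^*(p_{a^*}) = c_{a^*} + \max\{0, \beta^*\}$ for the $\beta^*$ computed from any $v^* \in V_{a^*}$. If limited liability is binding then $\max\{0, \beta^*\} > 0$, which forces $v \in V_{a^*}$ (otherwise $t^*$ would have $\min_\omega t^*(\omega) < 0$, violating $(\ref{eqn:contracts-LL-constr})$, or we would contradict optimality of $\beta^*$); the algorithm's tiebreak over $V_{a^*}$ can then select $v$. If instead participation is binding, $\max\{0, \beta^*\} = 0$ and $G^*(p_{a^*}) = c_{a^*}$, so $v$ may be any subgradient in $\partial c(p_{a^*})$ compatible with $t^*(\omega) \geq 0$; this broader tiebreak is exactly what Algorithm \ref{alg:contracts-full} should expose in Appendix \ref{app:contracts}.

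Third, for every other subtangent $t'$ of $G^*$, I would verify that $t'$ lies in the algorithm's admission set. Limited liability of $G^*$ directly yields $t'(\omega) \geq 0$ for all $\omega$. Moreover, $\bar{t}'(p) \leq G^*(p) \leq G^*(p_{a^*}) + c(p) - c_{a^*}$ for all $p \in \P$, where the second inequality is the incentive bound applied at the vertices $p_a$ and then extended by affinity of $\bar{t}'$ and the convex-combination representation of $c$. These are exactly the ``do not lie above $c$ (after shift) and do not violate LL'' conditions described before Algorithm \ref{alg:contracts-full}, so $t'$ is an admissible augmentation. Conversely, any $G$ built by the algorithm with the chosen $v$ and any admissible set of extra contracts is feasible and attains the optimum $G^*(p_{a^*})$, closing the equivalence.

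The main obstacle is handling tiebreaks and degeneracies cleanly. When $p_{a^*}$ lies at a high-dimensional vertex of the cost curve, $\partial c(p_{a^*})$ and $V_{a^*}$ can both be large, and when participation and limited liability both bind simultaneously, the algorithm must select among many equivalent $v$. Matching the algorithm's pseudocode (in the appendix) precisely to every choice that can arise in an optimal $G^*$ is where the care is needed; the rest of the proof is a matter of translating convex-analytic statements between $G^*$ and its subtangent representation.
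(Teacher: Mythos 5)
Your proof is correct and follows essentially the same strategy as the paper's: isolate the distinguished subtangent at $p_{a^*}$, show its slope lies in $\partial c(p_{a^*})$ (and in $V_{a^*}$ precisely when the limited-liability shift is positive, by the same contradiction argument), match the offset to $\max\{0,\beta\}$, and then verify the remaining subtangents are exactly the admissible ones. The only real difference is organizational: the paper factors the ``extra contracts'' verification into a separate lemma (Lemma~\ref{lemma:contracts-alg-subset}), whereas you handle it inline.
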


\subsection{Strict elicitability}
It can happen that an action $a^*$ is elicitable, but there does not exist any menu that entices the agent to \emph{strictly} prefer $a^*$ (Figure \ref{fig:strict-weak-elicit}).
For completeness, we give a geometric characterization of strict elicitability and extend Algorithm \ref{alg:contracts-minimal} to produce menus that are arbitrarily close to optimal while still providing strict incentives.

\begin{definition} \label{def:strict-elicit}
  We say that a menu of contracts, presented as a subdifferentiable convex $G$, \emph{strictly elicits} action $a^*$ if $G$ satisfies the participation constraint (\ref{eqn:contracts-partic-constr}) in Program \ref{contracts-main} and satisfies the incentive constraint (\ref{eqn:contracts-incent-constr}) with strict inequality for all $a \neq a^*$.
  If such a $G$ exists, we say $a^*$ is \emph{strictly elicitable}.
  
  Related, we say a point $(p_{a^*}, c_{a^*})$ is a \emph{lower vertex} of $\mathrm{convhull}(\{(p_a,c_a) : a \in A\})$ if, for any $\lambda \in \Delta_{A \setminus \{a^*\}}$ such that $\E_{a \sim \lambda} p_a = p_{a^*}$, $\E_{a \sim \lambda} c_a > c_{a^*}$.
\end{definition}
In other words, the point $(p_{a^*},c_{a^*})$ is a lower vertex if it is a vertex (i.e. extreme point) of the convex hull, and additionally, it lies on the ``bottom'', i.e. no elements of the convex hull are below it.

We summarize the results here and refer the reader to Appendix \ref{app:contracts-strict} for details, algorithm definition, and proofs.

\begin{corollary*}[Corollary \ref{cor:strict-char}]
  An action $a^*$ is strictly elicitable if and only if $(p_{a^*}, c_{a^*})$ is a lower vertex of $\mathrm{convhull}(\{(p_a,c_a) : a \in A\})$.
\end{corollary*}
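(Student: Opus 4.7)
The plan is to prove both directions of the equivalence separately, leveraging the convex geometry of $H := \mathrm{convhull}(\{(p_a, c_a) : a \in A\})$.

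For the forward direction (strict elicitability implies lower vertex), I would argue by contrapositive. Suppose $(p_{a^*}, c_{a^*})$ is not a lower vertex, so there exists $\lambda \in \Delta_{A \setminus \{a^*\}}$ with $\sum_a \lambda_a p_a = p_{a^*}$ and $\sum_a \lambda_a c_a \leq c_{a^*}$. For any subdifferentiable convex $G$, convexity gives $G(p_{a^*}) \leq \sum_a \lambda_a G(p_a)$. Subtracting $c_{a^*}$ on the left and the no-larger quantity $\sum_a \lambda_a c_a$ on the right yields
\[
  G(p_{a^*}) - c_{a^*} \;\leq\; \sum_a \lambda_a \bigl(G(p_a) - c_a\bigr).
\]
If $G$ strictly elicited $a^*$, each term $G(p_a) - c_a$ (with $a \neq a^*$) would be strictly less than $G(p_{a^*}) - c_{a^*}$, and since $\lambda$ is a probability distribution on $A \setminus \{a^*\}$ the convex combination is also strictly less, a contradiction.

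For the reverse direction (lower vertex implies strict elicitability), I would construct an explicit affine menu. The key tool is LP duality / Farkas' lemma: the lower vertex condition is equivalent to the existence of $\alpha \in \reals^{\Omega}$ with $\alpha \cdot (p_a - p_{a^*}) < c_a - c_{a^*}$ for every $a \neq a^*$. Given such $\alpha$, set $t(p) = \alpha \cdot p + (c_{a^*} - \alpha \cdot p_{a^*})$, so that $t(p_{a^*}) = c_{a^*}$ and $t(p_a) < c_a$ for all $a \neq a^*$. Define $G(p) := t(p) + C$ where $C = \max\{0, -\min_{\omega} t(\omega)\}$. Then $G$ is affine (hence subdifferentiable convex); the single subtangent contract $\omega \mapsto t(\omega) + C$ is nonnegative, giving limited liability; the shift preserves differences, so the incentive constraints remain strict; and $G(p_{a^*}) - c_{a^*} = C \geq 0$ gives participation.

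The main obstacle will be verifying the Farkas equivalence used in the reverse direction. I would apply strong LP duality to
\[
  \max_{\alpha, \epsilon} \epsilon \quad \text{s.t.} \quad \alpha \cdot (p_a - p_{a^*}) + \epsilon \leq c_a - c_{a^*} \;\; \forall a \neq a^*;
\]
the dual optimizes $\sum_a \lambda_a (c_a - c_{a^*})$ over $\lambda \in \Delta_{A \setminus \{a^*\}}$ with $\sum_a \lambda_a p_a = p_{a^*}$, and by strong duality the primal value exceeds zero precisely when every such $\lambda$ satisfies $\sum_a \lambda_a c_a > c_{a^*}$, which is the lower vertex condition. A minor subtlety is the degenerate case where $p_{a^*}$ cannot be written as a convex combination of $\{p_a : a \neq a^*\}$: the dual is infeasible, the primal is unbounded, so strict feasibility of the $\alpha$-system holds trivially and the lower vertex condition is vacuously satisfied, keeping the equivalence intact.
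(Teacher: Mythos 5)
Your proof is correct, but it takes a genuinely different route from the paper's. The paper routes both directions through Lemma~\ref{lemma:contracts-strict-iff}, which ties strict elicitability to strict subgradients $\bar{\partial} c(p_{a^*})$ of the \emph{convexified cost curve} $c$; the reverse direction then invokes a textbook fact (citing \citet{bertsimas1997introduction}) that a lower vertex of a polytope admits a strictly supporting hyperplane, i.e.\ $\bar{\partial} c(p_{a^*}) \neq \emptyset$. You bypass $c$ and its subdifferential entirely. Your forward direction is a direct Jensen/convexity argument applied to an arbitrary candidate $G$: if $p_{a^*}$ is a convex combination of $\{p_a : a \neq a^*\}$ with average cost at most $c_{a^*}$, then no convex $G$ can make every term $G(p_a) - c_a$ strictly below $G(p_{a^*}) - c_{a^*}$ while being $\le$ their convex combination. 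Your reverse direction manufactures a single affine contract via LP duality (Farkas), exactly producing the strict separating hyperplane that the paper gets from the strict-subgradient fact. The tradeoff: the paper's version is shorter \emph{given} its Lemma~\ref{lemma:contracts-strict-iff} machinery and the cited subgradient fact, and it yields the structural insight that optimal strict contracts are strict subtangents of $c$; your version is more self-contained and elementary, needing only convexity of $G$ and LP strong duality rather than the convexified cost curve or prior elicitability results. One small observation applicable to both arguments: if $p_a = p_{a^*}$ for some $a \neq a^*$, the lower-vertex condition (taking $\lambda = \delta_a$) forces $c_a > c_{a^*}$, which is exactly what makes the constraint $\alpha \cdot (p_a - p_{a^*}) < c_a - c_{a^*}$ (i.e.\ $0 < c_a - c_{a^*}$) satisfiable in your LP; the paper's ``$\bar t(p_a) < c(p_a)$'' step implicitly leans on the same fact. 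Also note that your limited-liability shift by $C$ is harmless but unnecessary here, since the paper's Definition~\ref{def:strict-elicit} of ``strictly elicits'' requires only participation and strict incentive constraints, not limited liability.
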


\begin{prop*}[Proposition \ref{prop:contracts-compute-strict}]
  In a contracts setting, if $a^*$ is strictly elicitable and $\epsilon > 0$, then Algorithm \ref{alg:contracts-strict} in Appendix \ref{app:contracts-strict} computes a menu $G$ that is feasible, that is $\epsilon$-optimal, and that strictly elicits $a^*$.
\end{prop*}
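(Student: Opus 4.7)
The plan is to perturb the single-contract solution produced by Algorithm \ref{alg:contracts-minimal} by mixing it with a ``strictly eliciting'' contract derived from the lower-vertex structure. Let $t^*$ denote the contract returned by Algorithm \ref{alg:contracts-minimal} and $\bar{t}^*$ its expected payment function. I will construct a contract $\ell'$ that strictly elicits $a^*$ (though generally not optimally), and then show that for an appropriately chosen mixing parameter $\lambda \in (0,1)$, the single contract $t_\epsilon := (1-\lambda)\,t^* + \lambda\,\ell'$ is feasible, strictly elicits $a^*$, and is $\epsilon$-optimal. I expect Algorithm \ref{alg:contracts-strict} to be essentially this construction, packaged with a linear program for $\ell'$.

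To produce $\ell'$, I invoke Corollary \ref{cor:strict-char}: strict elicitability means $(p_{a^*}, c_{a^*})$ is a lower vertex of $\mathrm{convhull}(\{(p_a, c_a) : a \in A\})$. By separation for vertices of a polytope, there exists an affine $\ell : \P \to \reals$ with $\ell(p_{a^*}) = c_{a^*}$ and $\ell(p_a) < c_a$ for every $a \neq a^*$. This can be computed by a linear program: maximize $\gamma$ subject to $\ell$ affine, $\ell(p_{a^*}) = c_{a^*}$, and $\ell(p_a) \leq c_a - \gamma$ for $a \neq a^*$; the lower-vertex property guarantees the optimum $\gamma^* > 0$. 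I then shift $\ell$ upward by $\max\{0, -\min_\omega \ell(\delta_\omega)\}$ to obtain $\ell'$ with $\ell'(\omega) \geq 0$ for every $\omega$; this shift preserves the strict inequalities $\ell'(p_{a^*}) - c_{a^*} > \ell'(p_a) - c_a$.

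Set $G_\epsilon(p) = (1-\lambda)\,\bar{t}^*(p) + \lambda\,\bar{\ell'}(p)$, which is the expected payment function of the single contract $t_\epsilon$. Feasibility is immediate: an affine function is trivially subdifferentiable convex, and limited liability is preserved under nonnegative combinations of nonnegative contracts. For strict elicitation, I compute
\begin{align*}
\bigl[G_\epsilon(p_{a^*}) - c_{a^*}\bigr] - \bigl[G_\epsilon(p_a) - c_a\bigr]
&= (1-\lambda)\bigl[\bar{t}^*(p_{a^*}) - c_{a^*} - \bar{t}^*(p_a) + c_a\bigr] \\
&\quad + \lambda\bigl[\bar{\ell'}(p_{a^*}) - c_{a^*} - \bar{\ell'}(p_a) + c_a\bigr].
\end{align*}
The first bracket is $\geq 0$ since $t^*$ elicits $a^*$; the second is $> 0$ by construction of $\ell'$. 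Thus the whole expression is strictly positive for every $\lambda > 0$. For $\epsilon$-optimality, the excess over OPT is
\[
G_\epsilon(p_{a^*}) - \bar{t}^*(p_{a^*}) \;=\; \lambda\bigl(\bar{\ell'}(p_{a^*}) - \bar{t}^*(p_{a^*})\bigr),
\]
which is linear in $\lambda$, so taking $\lambda = \min\bigl\{1,\, \epsilon/|\bar{\ell'}(p_{a^*}) - \bar{t}^*(p_{a^*})|\bigr\}$ (handling the zero case separately, which is trivial) yields $\epsilon$-optimality.

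The main obstacle is purely the construction of $\ell'$; everything downstream is a short computation. The LP above dispatches this, but one must verify that strict elicitability of $a^*$ corresponds exactly to $\gamma^* > 0$. This follows from Corollary \ref{cor:strict-char}: if $\gamma^* = 0$, then by LP duality one obtains a distribution $\lambda$ over $A \setminus \{a^*\}$ expressing $(p_{a^*}, c_{a^*})$ as a convex combination of $\{(p_a, c_a)\}_{a \neq a^*}$, contradicting the lower-vertex property.
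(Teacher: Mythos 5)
Your construction matches the paper's Algorithm~\ref{alg:contracts-strict} in its essential structure: take the optimal single contract from Algorithm~\ref{alg:contracts-minimal}, build an auxiliary contract that strictly elicits $a^*$, form a convex combination with small positive weight on the auxiliary contract, and show this mixture is feasible (convexity and nonnegativity are closed under convex combinations), strictly elicits (positive weight on the strictly eliciting contract), and is $\epsilon$-optimal (the excess is linear in the mixing weight). The one place you deviate is in how the auxiliary contract is produced. The paper's algorithm picks a \emph{strict subgradient} $v_2 \in \bar{\partial}c(p_{a^*})$ directly (Remark~\ref{remark:contracts-strict-alg} notes that any point in the relative interior of the polytope $\partial c(p_{a^*})$ works), then shifts for limited liability; you instead pose a separation LP that maximizes the margin $\gamma$ subject to $\ell(p_{a^*}) = c_{a^*}$ and $\ell(p_a) \le c_a - \gamma$. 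These are morally the same — your feasible set for $\ell$ is exactly the set of subtangents of $c$ at $p_{a^*}$, and your $\gamma > 0$ certificate is the strict-subgradient criterion of Lemma~\ref{lemma:contracts-strict-iff}. One small caution on your LP: if $p_{a^*}$ is an extreme point of $\mathrm{convhull}(\{p_a\})$, your LP is unbounded (scale the slope in a separating direction), so you cannot speak of ``the optimum $\gamma^*$''; you only need \emph{some} feasible $\ell$ with $\gamma > 0$, or you should normalize the slope. Your choice $\lambda = \min\{1, \epsilon/|\cdots|\}$ is slightly cleaner than the paper's $\alpha = \min\{1/2, \epsilon/(\beta_2-\beta_1)\}$ since the absolute value (really unnecessary, as $\bar{\ell'}(p_{a^*}) \geq \bar{t}^*(p_{a^*})$ by optimality of $t^*$) preempts the $\beta_2 = \beta_1$ edge case the paper must handle separately.
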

The idea behind Algorithm \ref{alg:contracts-strict} is straightforward.
We compute an optimal single contract $t_1$ as in Algorithm \ref{alg:contracts-minimal}.
Then, we use a \emph{strict subgradient} of $c({p_a^*})$, which must exist, to construct a contract $t_2$ that strictly elicits $a^*$.
That is, $t_2$ is a different, possibly suboptimal subtangent of $c(p_{a^*})$.
Now an arbitrary positive convex combination of $t_1$ and $t_2$ continues to strictly elicit $a^*$, and can be arbitrarily close to optimal.

\maneesha{Y did it skip figures 2, 4,6,8? \bo{dunno}}

\section{The General Problem} \label{sec:general}
Here, we address our general problem of Contracts with Information Acquisition.
Recall that the principal wishes to implement a plan $f: \Sigma \to A$ as cheaply as possible under limited liability.
That is, the principal must incentivize the agent to choose to acquire the signal $S$ and, for each signal realization $\sigma$, take the action $a=f(\sigma)$.

Recall that if the principal succeeds, i.e. the agent adopts the plan of acquiring $S$ and taking action $f(S)$, then after observing realization $S=\sigma$, the agent's belief about $W$ will be $p_{f(\sigma),\sigma}$.

%% Bo: moved something like this into the definition above.
%Hence the limited liability condition looks like $M_G(\omega) = \min_{\sigma \in \Sigma} G(p_{f(\sigma),\sigma}) + dG|_{p_{f(\sigma),\sigma}}(\delta_{\omega}-p_{f(\sigma),\sigma}) \geq 0 , \forall \omega \in \Omega$.

\begin{prop} \label{prop:general-program}
  In the general problem of Contracts with Information Acquisition, the minimum payment problem of the principal is this: given the priors $q \in \Delta_{\Sigma}$ and $\{p_{a,\sigma} \in \Delta_{\Omega} : a \in A, \sigma \in \Sigma\}$, design a function $G: \Delta_{\Omega} \to \reals$ to solve Program \ref{general-main}.
\end{prop}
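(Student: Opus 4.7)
The plan is to verify that Program \ref{general-main} correctly encodes the principal's minimum payment problem by combining the reductions used in Propositions \ref{prop:IA-opt-prob} and \ref{prop:contracts-program}. The argument is a two-direction characterization: every feasible menu yields a feasible $G$ with matching objective, and every feasible $G$ corresponds to a menu implementing the plan.

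The first step is to invoke Proposition \ref{prop:menu-convex} to represent the menu without loss of generality as a subdifferentiable convex $G : \P \to \reals$, where $\P$ is the convex hull of $\{p_{a,\sigma}\}$ from Definition \ref{def:conv-hull-P}. Then Proposition \ref{prop:expected-G} implies that an agent whose posterior on $W$ is $p_{a,\sigma}$ (because she observed $S=\sigma$ and intends to play action $a$) optimally chooses the subtangent contract at $p_{a,\sigma}$ and receives expected payment $G(p_{a,\sigma})$. If she has not observed $S$ but plans action $a$, her belief is $p_a = \E_{S \sim q} p_{a,S}$, so her expected payment is $G(p_a)$. This lets me translate each possible agent strategy into a utility expression.

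Next, I would enumerate the constraints by listing every deviation. Given a target plan $f : \Sigma \to A$ that includes acquisition, the agent's utility under the plan is $\E_{S \sim q}\!\left[ G(p_{f(S),S}) - c_{f(S)} \right] - \kappa$, which is the objective (minus a constant cost term). The deviations split into two families:
\begin{enumerate}
    \item \emph{Acquire, then deviate on some $\sigma$.} Because the agent best-responds to each realization of $S$ after acquisition, the plan is preferred to \emph{any} such deviation iff it is preferred pointwise in $\sigma$. This yields, for every $\sigma \in \Sigma$ and every $a \in A$,
      \[ G(p_{f(\sigma),\sigma}) - c_{f(\sigma)} \geq G(p_{a,\sigma}) - c_a . \]
    \item \emph{Do not acquire, pick some action $a$.} This gives a single constraint for each $a \in A$:
      \[ \E_{S \sim q}\!\left[ G(p_{f(S),S}) - c_{f(S)}\right] - \kappa \geq G(p_a) - c_a . \]
\end{enumerate}
Add the participation constraint (nonnegative expected utility under the plan) and limited liability $M_G(\omega) \geq 0$ from Definition \ref{def:subtanG-minpay}.

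Finally, I would check both directions formally. For the forward direction, any menu eliciting the plan satisfies all the above inequalities via Propositions \ref{prop:menu-convex} and \ref{prop:expected-G}, so its $G$ is feasible for Program \ref{general-main} with objective equal to the principal's expected payment. For the converse, any feasible $G$ defines a menu (its set of subtangents); the pointwise action-incentive constraints ensure action $f(\sigma)$ is optimal conditional on $\sigma$, the acquisition constraints ensure acquiring dominates every non-acquiring strategy (including the best one), and limited liability plus participation hold by construction. The expected payment to the agent under the elicited plan is $\E_{S \sim q}[G(p_{f(S),S})]$, matching the objective. The only subtle step is checking that it suffices to enforce the acquisition constraint against \emph{each} no-acquisition action separately rather than against $\max_a [G(p_a) - c_a]$; this is immediate since the two formulations are equivalent once all per-$a$ inequalities are imposed. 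No real obstacle arises beyond careful bookkeeping; the proposition is essentially a merger of the two special-case programs already derived.
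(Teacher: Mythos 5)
Your proposal is correct and follows the same route as the paper's proof: reduce to a subdifferentiable convex $G$ via Proposition \ref{prop:menu-convex}, use Proposition \ref{prop:expected-G} to translate each agent strategy (acquire then play $a$ given $\sigma$, or not acquire and play $a$) into expected payment $G(p_{a,\sigma})$ or $G(p_a)$, then enumerate deviations to obtain exactly constraints (\ref{eqn:general-nosig-constr})--(\ref{eqn:general-LL-constr}). The paper's version is more terse and does not spell out the two directions or the pointwise-vs-aggregate observation, but the substance matches.
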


\begin{alignat}{2}
  \text{}   \quad \min_{G}\E_{S \sim q}[G(p_{f(S),S})]  &           & \quad & \nonumber \\
  \lplabel[general-main]{(P4)}\text{s.t.} \quad  \text{  $G$ is subdifferentiable convex and } & &\quad&\nonumber \\
    \quad \E_{S \sim q}[G(p_{f(S),S}) - c_{f(S)}] - \kappa & \geq G(p_a) - c_a & \quad &\forall a \in A  \label{eqn:general-nosig-constr} \\
    \quad G(p_{f(\sigma),\sigma})- c_{f(\sigma)} &  \geq G(p_{a,\sigma}) - c_{a} & \quad &\forall a \in A , \forall \sigma \in \Sigma \label{eqn:general-conditional-constr} \\
    \quad \E_{S \sim q}[G(p_{f(S),S}) - c_{f(S)}] -\kappa & \geq 0 & \quad &\quad  \label{eqn:general-partic-constr} \\
    M_G(\omega) & \geq 0,   & \quad & \forall \omega \in \Omega. \label{eqn:general-LL-constr}
\end{alignat}

\begin{proof}
  By Proposition \ref{prop:menu-convex}, a menu can be represented WLOG by a subdifferentiable convex $G$.
  By Proposition \ref{prop:expected-G},
  the agent's expected payment when $S=\sigma$ and $a^* = a$ is $G(p_{a\sigma})$.
  Hence, the objective is the principal's payment when the agent acquires information and follows the plan.
  The agent's utility for following the plan is therefore $\E_{S \sim q} \left[ G(p_{f(S),S}) - c_{f(S)}\right] - \kappa$.
  
  On the other hand, the agent has the following alternatives available.
  First, she can not acquire information and simply pick an action $a$, for a net expected utility of $G(p_a) - c_a$.
  Constraint (\ref{eqn:general-nosig-constr}) is the requirement that following the plan is preferred, for all $a$.
  Second, she can acquire information, but choose different actions than the plan specifies.
  In particular, conditioned on a realization $S=\sigma$, the expected utility for following the plan is $G(p_{f(\sigma),\sigma}) - c_{f(\sigma)}$, while the expected utility for choosing action $a$ is $G(p_{a,\sigma}) - c_a$.
  Constraint (\ref{eqn:general-conditional-constr}) is the requirement that the agent prefer following the plan, for all $\sigma$ and $a$.
  Next, constraint (\ref{eqn:general-partic-constr}) is the participation constraint, and (\ref{eqn:general-LL-constr}) is limited liability.
\end{proof}

%\begin{equation}\label{eq1}
%  \text{(P1)}  \min_{G}\E_{S \sim q}[G(p_{f(S),S})]
%\end{equation}
%s.t. $G$ is convex and 
%\begin{align*}
%    \E_{S \sim q}[G(p_{f(S),S}) - c_{f(S)}] - \kappa &\geq G(p_a) - c_a  &\forall a \in A  \\
%    G(p_{f(\sigma),\sigma})- c_{f(\sigma)} &\geq  G(p_{a,\sigma}) - c_{a} &\forall a \in A , \forall \sigma \in \Sigma  \\
%    M_G(\omega) &\geq 0  &\forall \omega \in \Omega
%\end{align*}

\subsection{A linear program}
While we do not give a closed-form or simply-structured solution, here we give a polynomial time algorithm to solve Program \ref{general-main}. 
We hope that future work can identify more useful structure in the solution.
%For convenience, let $\bar{\Sigma} = \Sigma \cup \{\bot\}$, where the $\bot$ symbol represents that the agent does not know $S$.
%We let $p_{a\bot} = p_a$, the agent's belief if she takes action $a$ but does not observe $S$.
%In this way, the set of possible posterior beliefs induced by actions and possibly signals is $\{p_{a \sigma} : a \in A, \sigma \in \bar{\Sigma}\}$.

The key idea is that a piecewise linear $G$ suffices to solve \ref{general-main}.
We observe that the only relevant values of $G$ in Program \ref{general-main} are $G(p_{a,\sigma})$ for $a \in A, \sigma \in \Sigma$ along with $G(p_a)$ for $a \in A$.
Therefore, it turns out, one can take WLOG a piecewise linear $G$ with only $(|\Sigma|+1) |A|$ pieces, each of which is a subtangent contract at one of these beliefs.
For convenience, let us define $\bar{\Sigma} = \Sigma \cup \bot$ where $\bot$ is a dummy signal and $p_{a,\bot} = p_a$.
Now note that each subtangent contract at $p_{a\sigma}$, for some $a \in A,\sigma \in \bar{\Sigma}$, is an affine function of the form $h_{a,\sigma}(p) = x_{a,\sigma} \cdot p - y_{a,\sigma}$ for some $x_{a,\sigma} \in \reals^{\Omega}$ and some $y_{a,\sigma} \in \reals$.
The variables of the Program \ref{general-dual} are $\{(x_{a,\sigma},y_{a,\sigma}) : a \in A, \sigma \in \bar{\Sigma}\}$.
For presentation, we use the shorthand $h_{a,\sigma}$ to denote the associated affine function defined above.

\begin{alignat}{2}
                          \text{}   \quad \min_{x_{a,\sigma}, y_{a,\sigma} \mid a\in A, \sigma \in \bar{\Sigma}} \sum_{ \sigma \in \Sigma} q(\sigma) \cdot h_{f(\sigma),\sigma}(p_{f(\sigma),\sigma})  &           & \quad & \nonumber \\
  \lplabel[general-dual]{(P5)} \text{s.t.}
                        \quad \sum_{\sigma \in \Sigma} q(\sigma)( h_{f(\sigma),\sigma}(p_{f(\sigma),\sigma}) - c_{f(\sigma)}) - \kappa & \geq x_a \cdot p_a - y_a - c_{a} & \quad & \forall a \in A \label{eqn:general-lp-nosig-constr}\\
                        \quad h_{f(\sigma),\sigma}(p_{f(\sigma),\sigma}) - c_{f(\sigma)} & \geq  h_{a,\sigma}(p_{a,\sigma})  - c_{a}  & \quad & \forall a \in A , \forall \sigma \in \Sigma \label{eqn:general-lp-conditional-constr} \\
                       \quad \sum_{\sigma \in \Sigma} q(\sigma) ( h_{f(\sigma),\sigma}(p_{f(\sigma),\sigma}) - c_{f(\sigma)}) - \kappa & \geq  0  & \quad & \quad \label{eqn:general-lp-partic-constr} \\
                        \quad h_{a,\sigma}(\delta_{\omega})  & \geq  0  & \quad & \forall \omega \in \Omega,\forall a \in A,\forall \sigma \in \bar{\Sigma} \label{eqn:general-lp-LL-constr} \\
                         h_{a,\sigma}(p_{a,\sigma}) & \geq  h_{a',\sigma'}(p_{a,\sigma})  & & \forall a,a' \in A , \forall \sigma,\sigma' \in \bar{\Sigma}  \label{eqn:general-lp-convex-constr}
\end{alignat}
\begin{lemma} \label{lemma:general-equiv}
 Suppose Program \ref{general-main} is feasible. Then Program \ref{general-dual} is also feasible and any optimal solution of Program \ref{general-dual} is also an optimal solution of Program \ref{general-main}.
\end{lemma}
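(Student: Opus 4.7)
The plan is to construct an objective-preserving correspondence in both directions between feasible solutions of \ref{general-main} and \ref{general-dual}. This immediately yields that the two optimal values agree, that feasibility of \ref{general-main} implies feasibility of \ref{general-dual}, and that any optimal LP solution produces an optimal solution of \ref{general-main}.

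First I would show the forward direction: given a feasible $G$ for \ref{general-main}, for each $(a,\sigma) \in A \times \bar{\Sigma}$ pick a subtangent of $G$ at $p_{a,\sigma}$ (using $p_{a,\bot} := p_a$), giving an affine $h_{a,\sigma}(p) = x_{a,\sigma} \cdot p - y_{a,\sigma}$ with $h_{a,\sigma}(p_{a,\sigma}) = G(p_{a,\sigma})$ and $h_{a,\sigma} \leq G$ everywhere. Evaluating $h_{a',\sigma'} \leq G$ at the point $p_{a,\sigma}$ and combining with $h_{a,\sigma}(p_{a,\sigma}) = G(p_{a,\sigma})$ yields constraint (\ref{eqn:general-lp-convex-constr}). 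Substituting the identities $h_{a,\sigma}(p_{a,\sigma}) = G(p_{a,\sigma})$ and $h_{a,\bot}(p_a) = G(p_a)$ into the original constraints (\ref{eqn:general-nosig-constr})--(\ref{eqn:general-partic-constr}) recovers (\ref{eqn:general-lp-nosig-constr})--(\ref{eqn:general-lp-partic-constr}). Limited liability of $G$ means every subtangent contract is nonnegative at every $\delta_\omega$, so in particular $h_{a,\sigma}(\delta_\omega) \geq 0$, giving (\ref{eqn:general-lp-LL-constr}). Finally, the LP objective $\sum_\sigma q(\sigma) h_{f(\sigma),\sigma}(p_{f(\sigma),\sigma})$ equals $\sum_\sigma q(\sigma) G(p_{f(\sigma),\sigma})$, matching the objective in \ref{general-main}.

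Next I would show the converse: given a feasible LP solution $\{(x_{a,\sigma}, y_{a,\sigma})\}$, define $G(p) := \max_{(a,\sigma) \in A \times \bar{\Sigma}} h_{a,\sigma}(p)$. Then $G$ is convex and subdifferentiable as a finite pointwise maximum of affine functions. Constraint (\ref{eqn:general-lp-convex-constr}) says $h_{a,\sigma}(p_{a,\sigma}) \geq h_{a',\sigma'}(p_{a,\sigma})$ for all $a',\sigma'$, which together with the definition of $G$ forces $G(p_{a,\sigma}) = h_{a,\sigma}(p_{a,\sigma})$; hence each $h_{a,\sigma}$ is genuinely a subtangent of $G$ at $p_{a,\sigma}$. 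Substituting these equalities rewrites (\ref{eqn:general-lp-nosig-constr})--(\ref{eqn:general-lp-partic-constr}) as (\ref{eqn:general-nosig-constr})--(\ref{eqn:general-partic-constr}). By Proposition \ref{prop:menu-convex}, the menu associated with $G$ can be taken to be precisely $\{h_{a,\sigma} : a \in A, \sigma \in \bar{\Sigma}\}$; limited liability then reduces to $h_{a,\sigma}(\delta_\omega) \geq 0$, which is (\ref{eqn:general-lp-LL-constr}). The objective values again coincide, and so the chain $\alpha^*_{\ref{general-dual}} \leq \alpha^*_{\ref{general-main}} \leq \alpha^*_{\ref{general-dual}}$ yields equality, with any optimal LP solution inducing an optimal $G$ by this construction.

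The main obstacle is the limited liability step. In \ref{general-main}, limited liability is a condition on \emph{every} subtangent contract in the menu, and in general a convex $G$ may admit infinitely many subtangents at points with non-unique subgradients. The resolution, in both directions, is to invoke Proposition \ref{prop:menu-convex} and specify the menu to consist only of the finitely many chosen pieces $h_{a,\sigma}$; any additional subtangents of $G$ are simply not placed in the menu and are therefore not constrained. Once that bookkeeping is made explicit, the two translations are each a direct substitution, and the equivalence is straightforward.
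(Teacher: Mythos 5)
Your proposal is correct and takes essentially the same route as the paper: in one direction, extract subtangents of a feasible $G$ at the points $p_{a,\sigma}$ and check each LP constraint is the corresponding constraint of \ref{general-main} rewritten via $h_{a,\sigma}(p_{a,\sigma})=G(p_{a,\sigma})$; in the other, set $G'=\max_{a,\sigma}h_{a,\sigma}$, use \eqref{eqn:general-lp-convex-constr} to get $G'(p_{a,\sigma})=h_{a,\sigma}(p_{a,\sigma})$, and translate back, then conclude the two optimal values coincide. You are a bit more explicit than the paper about the limited-liability bookkeeping for $G'$; that care is reasonable given Definition~\ref{def:subtanG-minpay}, though for a piecewise-linear $G'$ one could also note directly that every subtangent is a convex combination of the pieces $h_{a,\sigma}$ and so inherits $h_{a,\sigma}(\delta_\omega)\geq 0$, making the menu-specification step dispensable in this case.
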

\begin{proof}
  To show the first claim, let $G$ be a feasible solution of Program \ref{general-main}. Define  $x_{a,\sigma} \in \partial G(p_{a,\sigma})$ to be the associated subgradient of $G$ at $p_{a,\sigma}$.
  Let $y_{a,\sigma} = x_{a,\sigma} \cdot p_{a,\sigma} - G(p_{a,\sigma})$.
  In other words, the affine function $h : p \mapsto x_{a,\sigma} \cdot p - y_{a,\sigma}$ is a subtangent of $G$ at $p_{a,\sigma}$, i.e. lies everywhere weakly below $G$ and equals $G$ at $p_{a,\sigma}$.
  Constraint (\ref{eqn:general-lp-convex-constr}) follows immediately from convexity of $G$.
  \bo{I think that is an acceptable level of explanation.}
  %It follows from the convexity of $G$ that
   %\begin{align*}
   %    G(p_{a,\sigma}) &\geq G(p_{a',\sigma'}) + dG|_{p_{a',\sigma'}}(p_{a,\sigma}-p_{a',\sigma'})  & \forall a'\in A,\sigma'\in \bar{\Sigma}\\
   %    G(p_{a,\sigma}) - v (p_{a,\sigma}-p_{a,\sigma}) &\geq G(p_{a',\sigma'}) + dG|_{p_{a',\sigma'}}(p_{a,\sigma}-p_{a',\sigma'})\\
   %    x_{a,\sigma}\cdot p_{a,\sigma} - y_{a,\sigma} &\geq x_{a',\sigma'}\cdot p_{a,\sigma} - y_{a',\sigma'} 
   %\end{align*}
   %This shows us that for those values of $x_{a,\sigma}, y_{a,\sigma}$ the condition \ref{eqn:general-lp-convex-constr} of \ref{general-dual} is satisfied.
   Similarly, every other constraint in Program \ref{general-dual} is a direct translation of the corresponding constraint in \ref{general-main}, where $G(p_{a,\sigma})$ has been exchanged for $h_{a,\sigma}(p_{a,\sigma}) = x_{a,\sigma} \cdot p_{a,\sigma} - y_{a,\sigma}$.
   The constraints only depend on the values of $G$ at those points, so they are unchanged.
   The only exception is the limited liability constraint, which depends on the subtangents of $G$.
   As our new program consists of a collection of some of the subtangents of $G$, and $G$ is feasible, all of our subtangents also satisfy limited liability.
   
%   we can show other constraints of Program \ref{general-dual} follow from the constraints of Program \ref{general-dual}.
%   For example as $G$ satisfies constraint (\ref{eqn:general-nosig-constr}), the values of $x_{z,\sigma}, y_{a,\sigma}$ as defined as above satisfy satisfy constraint \ref{eqn:general-lp-nosig-constr}. This shows that $\{(x_{a,\sigma}, y_{a,\sigma}), \forall a\in A,\forall\sigma\in \bar{\Sigma}\}$ is a feasible solution of Program \ref{general-dual}.\\
%

  Now further suppose that $G$ was optimal for Program \ref{general-main}.
  Then the function $G'$ implicitly defined by $G'(p) = \max_{a,\sigma} h_{a,\sigma}(p) =  \max_{a,\sigma} x_{a,\sigma} \cdot p - y_{a,\sigma}$ is also optimal for Program \ref{general-main}.
  By the above arguments, it follows that $G'$ is feasible and has the same objective value as $G$.
  So the optimal value of Program \ref{general-main} is at most as large as that of \ref{general-dual}.

  On the other hand, given \emph{any} solution to Program \ref{general-dual}, we can define $G'$ as above.
  We observe that, by the same arguments as above,  $G'$ is also feasible for Program \ref{general-main} and its objective value is the same as the objective in \ref{general-dual}.
  It follows that the optimal value of Program \ref{general-main} is at least as large as that of \ref{general-dual}.
  So their optimal objective values are equal, and in particular any optimal solution to \ref{general-dual} yields an optimal solution to \ref{general-main} by defining $G'$.

   %Let solution of \ref{general-dual} be $\{x_{a,\sigma}, y_{a,\sigma} | a\in A, \sigma \in \bar{\Sigma}\}$.
   %Define $G(p) = \max_{i \in \{(a,\sigma) | a\in A, \sigma\in\bar{\Sigma}\}} (x_ip - y_i)$.
   %As $G$ is maximum of affine functions, it is subdifferentiable convex.
   %\maneesha{but how to show its subdifferentiable convex? \bo{This follows from the pointwise maximum, I think we mention this in the preliminaries. At any point, the affine function that achives the maximum is a subtangent, i.e. its gradient is a subgradient.}}
   %We can see from the last condition of program \ref{general-dual} that $G(p_{a,\sigma})= x_{a,\sigma}p_{a,\sigma} - y_{a,\sigma}$. Using similar argument as above, other constraints of Program \ref{general-dual} translate to constraints of Program \ref{general-main}, showing that $G$ is feasible. To show optimality, we show below that their objective values are the same - 
   %\begin{align*}
   %         \min_{\{x_{a,\sigma}, y_{a,\sigma} | a\in A, \sigma \ in \Sigma\}}\sum_{a\in A,\sigma\in\Sigma} q(\sigma)f(a|\sigma) (x_{a,\sigma}p_{a,\sigma} - y_{a,\sigma}) &= \min_{G}\sum_{a\in A,\sigma\in\Sigma} q(\sigma)f(a|\sigma)G(p_{a,\sigma})\\
   %                               &= \min_{G}\E_{S \sim q}[G(p_{f(S),S})]\\
   %\end{align*}
\end{proof}

\begin{theorem} \label{thm:general-alg}
 For fixed action space, outcomes and signal realizations, Program \ref{general-main} is solvable in polynomial time.
\end{theorem}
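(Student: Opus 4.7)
The plan is to observe that Program \ref{general-dual} is, by inspection, an explicit linear program whose size is polynomial in $|A|$, $|\Sigma|$, and $|\Omega|$, and then invoke Lemma \ref{lemma:general-equiv}. The heavy lifting has already been done: Lemma \ref{lemma:general-equiv} reduces the optimization over the infinite-dimensional space of subdifferentiable convex functions $G$ to an optimization over a finite collection of affine pieces $\{h_{a,\sigma}\}_{a\in A,\sigma\in\bar\Sigma}$, each parameterized by $(x_{a,\sigma},y_{a,\sigma}) \in \reals^{\Omega} \times \reals$.

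First I would verify that every constraint in Program \ref{general-dual} is affine in the decision variables. Each $h_{a,\sigma}(p) = x_{a,\sigma}\cdot p - y_{a,\sigma}$ is linear in $(x_{a,\sigma},y_{a,\sigma})$ whenever it is evaluated at a fixed point $p$; since the evaluation points in every constraint (namely $p_{a,\sigma}$, $p_a$, and $\delta_\omega$) are determined by the input and do not depend on the variables, constraints (\ref{eqn:general-lp-nosig-constr})–(\ref{eqn:general-lp-convex-constr}) and the objective are all affine. Next, I would count: there are $(|\Omega|+1)\cdot|A|\cdot(|\Sigma|+1)$ variables, and the number of constraints is bounded by $|A| + |A||\Sigma| + 1 + |\Omega||A|(|\Sigma|+1) + (|A|(|\Sigma|+1))^2$, which is polynomial in the input parameters.

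Given polynomial size and rational input data, any standard polynomial-time linear programming algorithm (for instance, the ellipsoid method or an interior-point method) produces an optimal solution to Program \ref{general-dual} in polynomial time. By Lemma \ref{lemma:general-equiv}, such a solution yields an optimal $G$ for Program \ref{general-main} via $G(p) = \max_{a,\sigma} h_{a,\sigma}(p)$. Infeasibility is handled symmetrically: the construction in the proof of Lemma \ref{lemma:general-equiv} shows that any feasible solution to \ref{general-dual} produces a feasible solution to \ref{general-main}, so the two programs are equi-feasible, and the LP solver will correctly detect infeasibility when it occurs.

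The main ``obstacle'' is essentially cosmetic here, since the substantive reduction is already established by Lemma \ref{lemma:general-equiv}; the only care required is to make sure one argues equi-feasibility (not merely that optimal solutions transfer) so that the conclusion extends to instances where \ref{general-main} has no feasible menu at all.
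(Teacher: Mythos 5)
Your proposal is correct and takes essentially the same route as the paper: invoke Lemma \ref{lemma:general-equiv} to reduce Program \ref{general-main} to the explicit LP \ref{general-dual}, observe the LP has polynomially many variables and constraints, and apply a standard polynomial-time LP solver. Your variable and constraint counts are in fact more careful than the paper's (you correctly account for $x_{a,\sigma}\in\reals^{\Omega}$ contributing $|\Omega|$ scalar variables, and your count of the convexity constraints matches the constraint family exactly), and your remark about equi-feasibility is a sensible addition that the paper elides.
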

\begin{proof}
  By Lemma \ref{lemma:general-equiv}, it suffices to solve Program \ref{general-dual} to obtain an optimal solution.
   We can see that \ref{general-dual} is a linear program with $2|A|(|\Sigma|+1)$ variables and $|A|+|A||\Sigma|+ 1 + |\Omega||A|(|\Sigma|+1) +|\Omega||A|^2(|\Sigma|+1)^2 $ constraints, all of them polynomial in given parameters. 
\end{proof}

\subsection{Efficiency and necessary conditions} \label{subsec:general-efficiency}
While we have produced a polynomial-time algorithm, we would like a better understanding of the structure of the solution as well as any computational speedups available.
In this section, we describe some necessary properties of an optimal solution, utilizing insights from the IA and hidden action settings.

We first observe a necessary condition for elicitability of certain plans. 
It extends the relationship of elicitability with the convexified cost curve defined in \ref{sec:cost-curve}.
Namely, for each $\sigma \in \Sigma$, define the \emph{conditional convexified cost curve} $c_{\sigma}(p) = \min_{\lambda} \E_{a \sim \lambda} c_{a,\sigma}$, where the minimum is over $\lambda \in \Delta_A$ such that $\E_{a \sim \lambda} p_a = p$.
\begin{lemma}
  If a plan $f:\Sigma \to A$ is feasible for Program \ref{general-dual}, then for all $\sigma$, $c_{\sigma}(p_{f(\sigma),\sigma}) = c_{f(\sigma)}$.
  Furthermore, an agent observing $\sigma$ selects a contract that is a shift of some subtangent of $c_{\sigma}$ at $p_{f(\sigma),\sigma}$.
\end{lemma}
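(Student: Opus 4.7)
The plan is to view the lemma as a conditional version of Corollary \ref{cor:contracts-elicitable}: once the agent has observed $\sigma$, constraint (\ref{eqn:general-conditional-constr}) is literally a moral-hazard incentive constraint over the actions $A$ with outcome distributions $\{p_{a,\sigma}\}_{a\in A}$ and costs $\{c_a\}_{a\in A}$, with $G$ playing the role of the menu. So I would prove both parts by restricting attention to this conditional subproblem and reusing the geometric reasoning behind the cost-curve characterization.

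For the first claim, fix $\sigma$ and note $c_\sigma(p_{f(\sigma),\sigma})\le c_{f(\sigma)}$ is immediate from $\lambda=\delta_{f(\sigma)}$. For the reverse inequality I would argue by contradiction: suppose some $\lambda\in\Delta_A$ with $\E_{a\sim\lambda}p_{a,\sigma}=p_{f(\sigma),\sigma}$ has $\E_{a\sim\lambda}c_a<c_{f(\sigma)}$. By convexity of $G$,
\[
 G(p_{f(\sigma),\sigma})=G\!\left(\E_{a\sim\lambda}p_{a,\sigma}\right)\le \E_{a\sim\lambda}G(p_{a,\sigma}),
\]
so $G(p_{f(\sigma),\sigma})-c_{f(\sigma)}<\E_{a\sim\lambda}\bigl[G(p_{a,\sigma})-c_a\bigr]\le \max_a\bigl(G(p_{a,\sigma})-c_a\bigr)$, directly contradicting constraint (\ref{eqn:general-conditional-constr}).

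For the second claim, recall from Proposition \ref{prop:expected-G} that the agent observing $\sigma$ selects a subtangent of $G$ at $p_{f(\sigma),\sigma}$, i.e.\ a contract $t$ whose expected-payment function is $\bar t(p)=G(p_{f(\sigma),\sigma})+v\cdot(p-p_{f(\sigma),\sigma})$ for some $v\in\partial G(p_{f(\sigma),\sigma})$. I would show $v\in\partial c_\sigma(p_{f(\sigma),\sigma})$. Given any $p$ in the domain of $c_\sigma$, take an optimal decomposition $p=\E_{a\sim\mu^*}p_{a,\sigma}$ with $c_\sigma(p)=\E_{a\sim\mu^*}c_a$. The subgradient inequality $G(p_{a,\sigma})\ge G(p_{f(\sigma),\sigma})+v\cdot(p_{a,\sigma}-p_{f(\sigma),\sigma})$ combined with (\ref{eqn:general-conditional-constr}) yields $c_a\ge c_{f(\sigma)}+v\cdot(p_{a,\sigma}-p_{f(\sigma),\sigma})$ for each $a$; averaging over $\mu^*$ and using part one gives $c_\sigma(p)\ge c_\sigma(p_{f(\sigma),\sigma})+v\cdot(p-p_{f(\sigma),\sigma})$, as required. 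The affine map $p\mapsto v\cdot p+(c_{f(\sigma)}-v\cdot p_{f(\sigma),\sigma})$ is then a subtangent of $c_\sigma$ at $p_{f(\sigma),\sigma}$, and $\bar t$ differs from it by the constant $G(p_{f(\sigma),\sigma})-c_{f(\sigma)}$, which is exactly the statement that $t$ is a shift of a subtangent of $c_\sigma$.

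I do not anticipate a serious obstacle; the only care needed is bookkeeping between an affine function on $\Delta_\Omega$ and the associated contract on $\Omega$ (via the $\bar t$ correspondence of Definition \ref{def:subtangent}), so that the phrase ``shift of a subtangent of $c_\sigma$'' is interpreted correctly. A small subtlety is that $\partial G(p_{f(\sigma),\sigma})$ may contain multiple subgradients; the argument shows that \emph{every} subgradient of $G$ at that point lies in $\partial c_\sigma(p_{f(\sigma),\sigma})$, which is exactly what is needed to cover all contracts the agent might optimally pick.
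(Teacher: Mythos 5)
Your proof is correct and fills in exactly what the paper intends when it omits the argument as ``almost identical to that of Corollary \ref{cor:contracts-elicitable}'': you restrict to the conditional subproblem given $\sigma$ and redo the convexified-cost-curve reasoning from the hidden-action section, exactly as the paper's remark directs. The only cosmetic difference is that you apply Jensen's inequality directly to get $c_\sigma(p_{f(\sigma),\sigma}) \geq c_{f(\sigma)}$ and prove the subgradient inclusion $\partial G(p_{f(\sigma),\sigma}) \subseteq \partial c_\sigma(p_{f(\sigma),\sigma})$ in one step, rather than routing through the shift-by-$\beta$ reformulation of Proposition \ref{prop:contracts-elicitable} and the subtangent argument of Proposition \ref{prop:contracts-minimal-opt}.
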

The proof is almost identical to that of Corollary \ref{cor:contracts-elicitable} for the hidden action setting and hence omitted.
Again, the intuition is that otherwise, it is possible to more cheaply simulate the action $f(\sigma)$ via a convex combination of other actions.

Next, as we have done several times in this paper, we can ``drop'' from the solution subtangents that do not affect the objective value, without compromising feasibility.
\begin{lemma} \label{lemma:drop}
  Without loss of generality, an optimal solution to \ref{general-dual} has at most $|\Sigma|$ unique parameter pairs, in particular $\{(x_{f(\sigma),\sigma},y_{f(\sigma),\sigma}) : \sigma \in \Sigma\}$.
\end{lemma}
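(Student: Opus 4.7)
The plan is to start with any optimal solution $\{(x_{a,\sigma}, y_{a,\sigma}) : a \in A, \sigma \in \bar{\Sigma}\}$ to Program \ref{general-dual} and perform a ``dominating replacement'': for each index $(a,\sigma)$ that is \emph{not} of the form $(f(\sigma'), \sigma')$ with $\sigma' \in \Sigma$, overwrite its pair by one of the $|\Sigma|$ pairs $\{(x_{f(\sigma'),\sigma'}, y_{f(\sigma'),\sigma'}) : \sigma' \in \Sigma\}$ chosen to preserve feasibility. The objective only depends on the indices $(f(\sigma),\sigma)$ for $\sigma \in \Sigma$, which are untouched, so any such replacement automatically keeps the objective optimal.

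Concretely, let $G'(p) = \max_{\sigma' \in \Sigma} h_{f(\sigma'),\sigma'}(p)$ denote the pointwise maximum of the kept subtangents. For each dropped index $(a,\sigma)$, I would choose $\sigma^*(a,\sigma) \in \arg\max_{\sigma' \in \Sigma} h_{f(\sigma'),\sigma'}(p_{a,\sigma})$ and overwrite $(x_{a,\sigma}, y_{a,\sigma}) := (x_{f(\sigma^*),\sigma^*}, y_{f(\sigma^*),\sigma^*})$. After the substitution, the new $h_{a,\sigma}$ satisfies $h_{a,\sigma}(p_{a,\sigma}) = G'(p_{a,\sigma})$, and the number of distinct pairs is at most $|\Sigma|$ as claimed.

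The verification reduces to checking the five constraint families of \ref{general-dual} under the replacement. Constraints (\ref{eqn:general-lp-nosig-constr}), (\ref{eqn:general-lp-conditional-constr}), (\ref{eqn:general-lp-partic-constr}) each have the form ``kept quantity $\geq h_{a,\sigma}(p_{a,\sigma}) - \text{const}$'' (or only involve kept subtangents in the participation case); the original convexity constraint (\ref{eqn:general-lp-convex-constr}) gave $h_{a,\sigma}(p_{a,\sigma}) \geq h_{f(\sigma'),\sigma'}(p_{a,\sigma})$ for every $\sigma'$, so the new right-hand side is only weakened. The limited-liability constraint (\ref{eqn:general-lp-LL-constr}) survives since each replacement copies a subtangent that already satisfied LL. The remaining convexity constraints (\ref{eqn:general-lp-convex-constr}) split into four cases according to whether each of $(a,\sigma),(a',\sigma')$ is kept or dropped; in every case, after substitution the inequality reduces to $G'(p_{a,\sigma}) \geq h_{a',\sigma'}(p_{a,\sigma})$, which holds either by the original constraint (when $(a',\sigma')$ is kept, $h_{a',\sigma'}$ lies below $G'$ at every point) or because $(a',\sigma')$ has been replaced by some kept $h_{f(\tilde\sigma),\tilde\sigma}$, which again lies below $G'$ pointwise.

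The main obstacle is just the bookkeeping on (\ref{eqn:general-lp-convex-constr}) across the four kept/dropped cases, but each case is a one-line invocation of the original convexity constraint and the definition of $G'$; the definition of $\sigma^*(a,\sigma)$ as an argmax is exactly what is needed to make the left-hand side attain $G'(p_{a,\sigma})$. Since the objective is preserved and feasibility is preserved, the modified solution is an optimal solution of the claimed form.
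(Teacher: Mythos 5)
Your proof is correct and takes the same underlying approach as the paper's: observe that the objective and the left-hand sides of constraints (\ref{eqn:general-lp-nosig-constr})--(\ref{eqn:general-lp-partic-constr}) depend only on the kept pairs $\{(x_{f(\sigma),\sigma},y_{f(\sigma),\sigma})\}$, while the right-hand sides can only decrease once extraneous subtangents are removed. Where you go further is in making ``dropping'' concrete as a \emph{dominating replacement}: overwriting each non-kept index with the kept pair that attains $\arg\max_{\sigma'} h_{f(\sigma'),\sigma'}(p_{a,\sigma})$, so that the overwritten subtangent evaluates to $G'(p_{a,\sigma})$ at its anchor point. This is exactly the detail needed to verify that the self-referential convexity constraints (\ref{eqn:general-lp-convex-constr}) and the limited-liability constraints (\ref{eqn:general-lp-LL-constr}) for the replaced indices survive, a point the paper's two-sentence proof elides by speaking only of the incentive and participation right-hand sides. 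So your argument is a more careful and complete rendering of the same idea rather than a genuinely different route.
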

\begin{proof}
  In particular, the objective function and left-hand side of the incentive and participation constraints in Program \ref{general-dual} only depend on $h_{f(\sigma),\sigma}(p_{f(\sigma),\sigma})$ for each $\sigma$.
  The right-hand sides represent maxima over the subtangents at certain points, so dropping some subtangents from the solution only decrease them, maintaining feasibility.
\end{proof}

In other words, we reduce the optimality problem for plan $f$ to the following procedure:
\begin{enumerate}
  \item For each $\sigma \in \Sigma$, define the conditional convexified cost curve $c_{\sigma}(p)$.
  \item Let $V_{\sigma} = \partial c_{\sigma}(p_{f(\sigma),\sigma})$, where $\partial$ is the subdifferential (set of subgradients).
  \item Define variables $x_{\sigma} \in V_{\sigma}$, $y_{\sigma} \in \reals$ for each $\sigma$, and let $h_{\sigma}(p) = x_{\sigma} \cdot p - y_{\sigma}$.
  \item Let $G(p) = \max_{\sigma} h_{\sigma}(p)$, requiring that $G(p_{f(\sigma),\sigma}) = h_{\sigma}(p_{f(\sigma),\sigma})$ for all $\sigma$.
  \item Solve for the optimal variables subject to limited liability and participation.
\end{enumerate}

Alternatively, one can solve the following simplified linear program.
\begin{corollary} \label{cor:general-simplified}
  If plan $f$ is elicitable, then an optimal solution to the principal's minimum payment problem is given by Program \ref{general-triple}, with variables $x_{\sigma} \in \reals^{\Omega}$, $y_{\sigma} \in \reals$ for all $\sigma$ and the shorthand $h_{\sigma}(p) = x_{\sigma} \cdot p - y_{\sigma}$.
\end{corollary}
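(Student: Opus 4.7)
The plan is to establish the corollary by exhibiting a two-way correspondence between feasible solutions of Program \ref{general-triple} and optimal feasible solutions of Program \ref{general-dual}, then invoking Lemma \ref{lemma:general-equiv} to transfer optimality back to Program \ref{general-main}. The key insight I would exploit is that the preceding discussion already identified what an optimal menu must look like: at most $|\Sigma|$ nontrivial subtangents, one per signal realization, each with subgradient drawn from $V_{\sigma} = \partial c_{\sigma}(p_{f(\sigma),\sigma})$, and with $G$ defined as their pointwise maximum.

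First, I would show that any feasible solution $(x_{\sigma}, y_{\sigma})_{\sigma \in \Sigma}$ of Program \ref{general-triple} yields a feasible solution of Program \ref{general-dual} with the same objective value. Given such a solution, define $G(p) = \max_{\sigma} h_{\sigma}(p)$, which is subdifferentiable and convex by construction. For Program \ref{general-dual}, extend the collection by setting, for any $(a,\sigma) \notin \{(f(\sigma'),\sigma') : \sigma' \in \Sigma\}$, the affine piece $h_{a,\sigma}$ to be a valid subtangent of $G$ at $p_{a,\sigma}$. Conditions (\ref{eqn:general-lp-convex-constr}) and (\ref{eqn:general-lp-LL-constr}) follow directly. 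The conditional incentive constraint (\ref{eqn:general-lp-conditional-constr}) follows from $x_{\sigma} \in V_{\sigma}$ together with the fact that shifted subtangents of $c_{\sigma}$ lie below every point $(p_{a,\sigma}, c_{a,\sigma})$ on the graph of $c_{\sigma}$'s defining set; this is the crucial translation step.

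Second, I would show the reverse direction. Starting from an optimal feasible solution of Program \ref{general-dual}, apply Lemma \ref{lemma:drop} to assume WLOG that only the $|\Sigma|$ parameter pairs indexed by $(f(\sigma),\sigma)$ are used (relabel them $(x_{\sigma}, y_{\sigma})$). The preceding elicitability lemma for conditional cost curves ensures these chosen subtangents must in fact have subgradients in $V_{\sigma}$, since otherwise one of the conditional incentive constraints would be violated. The remaining constraints of Program \ref{general-triple} are immediate translations of (\ref{eqn:general-lp-nosig-constr}), (\ref{eqn:general-lp-partic-constr}), (\ref{eqn:general-lp-LL-constr}), and the representation $G(p_{f(\sigma),\sigma}) = h_{\sigma}(p_{f(\sigma),\sigma})$ (enforced by requiring $h_{\sigma'}(p_{f(\sigma),\sigma}) \leq h_{\sigma}(p_{f(\sigma),\sigma})$ for all $\sigma, \sigma'$). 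Combining both directions, the optima agree; then Lemma \ref{lemma:general-equiv} transfers the conclusion to Program \ref{general-main}.

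The main obstacle, and the step that deserves the most care, is verifying that restricting $x_{\sigma} \in V_{\sigma}$ faithfully encodes the full family of conditional incentive constraints (\ref{eqn:general-conditional-constr}) for every $a \in A$, not just those $a$ in the range of $f$. This requires the observation that a subtangent of the convexified cost curve $c_{\sigma}$ at $p_{f(\sigma),\sigma}$ lies weakly below $c_a - c_{f(\sigma)}$-shifted costs at every $p_{a,\sigma}$, precisely because $c_{\sigma}$ is the lower convex envelope of those points. Once this is pinned down, the remaining bookkeeping (limited liability, participation, the unconditional no-signal constraint expressed via $\max_{\sigma} h_{\sigma}(p_a) = G(p_a)$) proceeds as a direct rewrite of the constraints of Program \ref{general-dual} in the reduced variables.
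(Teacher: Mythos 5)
The paper states this corollary without an explicit proof, intending it to follow directly from Lemma \ref{lemma:general-equiv} (which makes Program \ref{general-dual} equivalent to \ref{general-main}) together with Lemma \ref{lemma:drop} (which says that WLOG an optimal solution of \ref{general-dual} uses only the $|\Sigma|$ pieces indexed by $(f(\sigma),\sigma)$). Your high-level structure --- a two-way feasibility correspondence between \ref{general-triple} and \ref{general-dual}, then invoking Lemma \ref{lemma:general-equiv} --- matches this intended argument. However, the mechanics of your argument contain a genuine confusion.

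You repeatedly invoke the restriction $x_{\sigma} \in V_{\sigma} = \partial c_{\sigma}(p_{f(\sigma),\sigma})$, describing it as ``the crucial translation step'' and ``the main obstacle.'' But Program \ref{general-triple} imposes no such restriction: it optimizes over \emph{all} $x_{\sigma} \in \reals^{\Omega}$. The $V_{\sigma}$-restriction belongs to the five-step ``procedure'' the paper lists above the corollary, which the text explicitly contrasts with the LP via ``\emph{Alternatively}, one can solve the following simplified linear program.'' They are two different formulations of the same reduction, and the corollary is about the second, which does not involve the conditional cost curves at all. In your Step 1 you claim that constraint (\ref{eqn:general-lp-conditional-constr}) ``follows from $x_{\sigma} \in V_{\sigma}$''; but a generic feasible solution of \ref{general-triple} gives you no such membership, so as written the argument has a hole. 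The fix is much simpler and you have actually written all the pieces of it elsewhere in your proposal: given a feasible solution to \ref{general-triple}, set $G = \max_{\sigma} h_{\sigma}$ and, for each dropped pair $(a,\sigma)$, choose $h_{a,\sigma}$ to be a subtangent of $G$ at $p_{a,\sigma}$ (e.g.\ an $h_{\sigma'}$ achieving the max there). Then $h_{a,\sigma}(p_{a,\sigma}) = G(p_{a,\sigma}) = \max_{\sigma'} h_{\sigma'}(p_{a,\sigma})$, and constraint (\ref{eqn:general-lp-conditional-constr}) follows \emph{directly} from the already-present constraint (\ref{eqn:general-triple-conditional-constr}); likewise (\ref{eqn:general-lp-nosig-constr}) from (\ref{eqn:general-triple-nosig-constr}), and limited liability (\ref{eqn:general-lp-LL-constr}) because any subtangent of a piecewise-linear max is a convex combination of the active pieces, each of which already satisfies (\ref{eqn:general-triple-LL-constr}). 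No cost-curve machinery is needed.

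The reverse direction in your Step 2 has a parallel problem: you assert that the retained subtangents ``must in fact have subgradients in $V_{\sigma}$, since otherwise one of the conditional incentive constraints would be violated,'' but this is neither true nor needed --- you only need to observe that restricting to the $|\Sigma|$ retained pieces (Lemma \ref{lemma:drop}) preserves feasibility of the LP constraints and the objective value, and those constraints are precisely the ones in \ref{general-triple}. Once you delete every appeal to $V_{\sigma}$ and replace it with this direct constraint-by-constraint translation, your proposal becomes a correct account of what the paper's terse corollary leaves implicit.
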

\begin{alignat}{2}
\text{}   \quad \min_{x_{\sigma}, y_{\sigma} \mid \sigma \in \Sigma} \sum_{ \sigma \in \Sigma} q(\sigma) \cdot h_{\sigma}(p_{f(\sigma),\sigma})  &           & \quad & \nonumber \\
\lplabel[general-triple]{(P6)} \text{s.t.}
\quad \sum_{\sigma \in \Sigma} q(\sigma)( h_{\sigma}(p_{f(\sigma),\sigma}) - c_{f(\sigma)}) - \kappa & \geq h_{\sigma'}(p_a) - c_{a} & \quad & \forall \sigma' \in \Sigma, \forall a \in A \label{eqn:general-triple-nosig-constr}\\
\quad h_{\sigma}(p_{f(\sigma),\sigma}) - c_{f(\sigma)} & \geq  h_{\sigma'}(p_{a,\sigma})  - c_{a}  & \quad & \forall a \in A , \forall \sigma,\sigma' \in \Sigma \label{eqn:general-triple-conditional-constr} \\
\quad \sum_{\sigma \in \Sigma} q(\sigma) ( h_{\sigma}(p_{f(\sigma),\sigma}) - c_{f(\sigma)}) - \kappa & \geq  0  & \quad & \quad \label{eqn:general-triple-partic-constr} \\
\quad h_{\sigma}(\delta_{\omega})  & \geq  0  & \quad & \forall \omega \in \Omega,\forall \sigma \in \Sigma \label{eqn:general-triple-LL-constr} \\
h_{\sigma}(p_{f(\sigma),\sigma}) & \geq  h_{\sigma'}(p_{f(\sigma),\sigma})  & & \forall \sigma,\sigma' \in \Sigma  \label{eqn:general-triple-convex-constr}
\end{alignat}

\section{Future Work}
There are many open directions.
Even in the case of information acquisition, understanding solutions with more than one available signal is a completely open and exciting direction.
For the general problem, we believe there is likely more structure to be uncovered, such as a geometric characterization of feasible plans.

There are several interesting variants of the general Contracts with Information Acquisition problem to consider.
If outcomes $\Omega$ represent principal utility, we can consider the problem of directly optimizing for a plan that maximizes utility.
This ought to make solutions simpler, because the principal will be risk-neutral.
For example, they should not prefer plans that bias the outcome distribution toward the center of the simplex --- which are harder to incentivize, because convex functions are generally lower there.

Finally, in line with recent robustness work in contract theory~\citep{carroll2015robustness,dutting2019simple}, one can ask for simple, robust solutions to the general problem, and in particular one can ask about the effectiveness of linear contracts. \bo{cite}

\vfill
\break
\bibliographystyle{ACM-Reference-Format}
\bibliography{citations}

%%% -*-BibTeX-*-
%%% Do NOT edit. File created by BibTeX with style
%%% ACM-Reference-Format-Journals [18-Jan-2012].

\begin{thebibliography}{22}

%%% ====================================================================
%%% NOTE TO THE USER: you can override these defaults by providing
%%% customized versions of any of these macros before the \bibliography
%%% command.  Each of them MUST provide its own final punctuation,
%%% except for \shownote{}, \showDOI{}, and \showURL{}.  The latter two
%%% do not use final punctuation, in order to avoid confusing it with
%%% the Web address.
%%%
%%% To suppress output of a particular field, define its macro to expand
%%% to an empty string, or better, \unskip, like this:
%%%
%%% \newcommand{\showDOI}[1]{\unskip}   % LaTeX syntax
%%%
%%% \def \showDOI #1{\unskip}           % plain TeX syntax
%%%
%%% ====================================================================

\ifx \showCODEN    \undefined \def \showCODEN     #1{\unskip}     \fi
\ifx \showDOI      \undefined \def \showDOI       #1{#1}\fi
\ifx \showISBNx    \undefined \def \showISBNx     #1{\unskip}     \fi
\ifx \showISBNxiii \undefined \def \showISBNxiii  #1{\unskip}     \fi
\ifx \showISSN     \undefined \def \showISSN      #1{\unskip}     \fi
\ifx \showLCCN     \undefined \def \showLCCN      #1{\unskip}     \fi
\ifx \shownote     \undefined \def \shownote      #1{#1}          \fi
\ifx \showarticletitle \undefined \def \showarticletitle #1{#1}   \fi
\ifx \showURL      \undefined \def \showURL       {\relax}        \fi
% The following commands are used for tagged output and should be
% invisible to TeX
\providecommand\bibfield[2]{#2}
\providecommand\bibinfo[2]{#2}
\providecommand\natexlab[1]{#1}
\providecommand\showeprint[2][]{arXiv:#2}

\bibitem[\protect\citeauthoryear{Alon, D\"{u}tting, and Talgam-Cohen}{Alon
  et~al\mbox{.}}{2021}]%
        {alon2021contracts}
\bibfield{author}{\bibinfo{person}{Tal Alon}, \bibinfo{person}{Paul
  D\"{u}tting}, {and} \bibinfo{person}{Inbal Talgam-Cohen}.}
  \bibinfo{year}{2021}\natexlab{}.
\newblock \bibinfo{booktitle}{\emph{Contracts with Private Cost per
  Unit-of-Effort}}.
\newblock \bibinfo{publisher}{Association for Computing Machinery},
  \bibinfo{address}{New York, NY, USA}, \bibinfo{pages}{52–69}.
\newblock
\showISBNx{9781450385541}
\urldef\tempurl%
\url{https://doi.org/10.1145/3465456.3467651}
\showURL{%
\tempurl}


\bibitem[\protect\citeauthoryear{Azar and Micali}{Azar and Micali}{2018}]%
        {azar2018computational}
\bibfield{author}{\bibinfo{person}{Pablo~D. Azar} {and} \bibinfo{person}{Silvio
  Micali}.} \bibinfo{year}{2018}\natexlab{}.
\newblock \showarticletitle{Computational principal-agent problems}.
\newblock \bibinfo{journal}{\emph{Theoretical Economics}}  \bibinfo{volume}{13}
  (\bibinfo{year}{2018}), \bibinfo{pages}{553--578}.
\newblock


\bibitem[\protect\citeauthoryear{Babaioff, Blumrosen, Lambert, and
  Reingold}{Babaioff et~al\mbox{.}}{2011}]%
        {babaioff2011only}
\bibfield{author}{\bibinfo{person}{Moshe Babaioff}, \bibinfo{person}{Liad
  Blumrosen}, \bibinfo{person}{Nicholas~S. Lambert}, {and}
  \bibinfo{person}{Omer Reingold}.} \bibinfo{year}{2011}\natexlab{}.
\newblock \showarticletitle{Only Valuable Experts Can Be Valued}. In
  \bibinfo{booktitle}{\emph{Proceedings of the 2011 ACM Conference on Economics
  and Computation}} \emph{(\bibinfo{series}{EC 2011})}.
\newblock


\bibitem[\protect\citeauthoryear{Bacon, Chen, Kash, Parkes, Rao, and
  Sridharan}{Bacon et~al\mbox{.}}{2012}]%
        {bacon2012predicting}
\bibfield{author}{\bibinfo{person}{David~F Bacon}, \bibinfo{person}{Yiling
  Chen}, \bibinfo{person}{Ian Kash}, \bibinfo{person}{David~C Parkes},
  \bibinfo{person}{Malvika Rao}, {and} \bibinfo{person}{Manu Sridharan}.}
  \bibinfo{year}{2012}\natexlab{}.
\newblock \showarticletitle{Predicting your own effort}. In
  \bibinfo{booktitle}{\emph{Proceedings of the 11th International Conference on
  Autonomous Agents and Multiagent Systems-Volume 2 (AAMAS)}}.
  \bibinfo{pages}{695--702}.
\newblock


\bibitem[\protect\citeauthoryear{Bertsimas and Tsitsiklis}{Bertsimas and
  Tsitsiklis}{1997}]%
        {bertsimas1997introduction}
\bibfield{author}{\bibinfo{person}{Dimitris Bertsimas} {and}
  \bibinfo{person}{John~N Tsitsiklis}.} \bibinfo{year}{1997}\natexlab{}.
\newblock \bibinfo{booktitle}{\emph{Introduction to Linear Optimization}}.
  Vol.~\bibinfo{volume}{1}.
\newblock \bibinfo{publisher}{Athena Scientific}.
\newblock


\bibitem[\protect\citeauthoryear{Bolton and Dewatripont}{Bolton and
  Dewatripont}{2004}]%
        {bolton2004contract}
\bibfield{author}{\bibinfo{person}{Patrick Bolton} {and}
  \bibinfo{person}{Maithas Dewatripont}.} \bibinfo{year}{2004}\natexlab{}.
\newblock \bibinfo{booktitle}{\emph{Contract Theory}}.
  Vol.~\bibinfo{volume}{1}.
\newblock \bibinfo{publisher}{The MIT Press}.
\newblock


\bibitem[\protect\citeauthoryear{Boutilier}{Boutilier}{2012}]%
        {boutilier2012eliciting}
\bibfield{author}{\bibinfo{person}{Craig Boutilier}.}
  \bibinfo{year}{2012}\natexlab{}.
\newblock \showarticletitle{Eliciting forecasts from self-interested experts:
  scoring rules for decision makers}. In \bibinfo{booktitle}{\emph{Proceedings
  of the 11th International Conference on Autonomous Agents and Multiagent
  Systems}} \emph{(\bibinfo{series}{AAMAS 2012})}. International Foundation for
  Autonomous Agents and Multiagent Systems, \bibinfo{pages}{737--744}.
\newblock


\bibitem[\protect\citeauthoryear{Carroll}{Carroll}{2015}]%
        {carroll2015robustness}
\bibfield{author}{\bibinfo{person}{Gabriel Carroll}.}
  \bibinfo{year}{2015}\natexlab{}.
\newblock \showarticletitle{Robustness and linear contracts}.
\newblock \bibinfo{journal}{\emph{American Economic Review}}
  \bibinfo{volume}{105}, \bibinfo{number}{2} (\bibinfo{year}{2015}),
  \bibinfo{pages}{536--63}.
\newblock


\bibitem[\protect\citeauthoryear{Chen and Yu}{Chen and Yu}{2021}]%
        {chen2021optimal}
\bibfield{author}{\bibinfo{person}{Yiling Chen} {and} \bibinfo{person}{Fang-Yi
  Yu}.} \bibinfo{year}{2021}\natexlab{}.
\newblock \bibinfo{title}{Optimal Scoring Rule Design}.
\newblock
\newblock
\showeprint[arxiv]{2107.07420}~[cs.GT]


\bibitem[\protect\citeauthoryear{D\"{u}tting, Roughgarden, and
  Cohen}{D\"{u}tting et~al\mbox{.}}{2020}]%
        {duetting2020complexity}
\bibfield{author}{\bibinfo{person}{Paul D\"{u}tting}, \bibinfo{person}{Tim
  Roughgarden}, {and} \bibinfo{person}{Inbal-Talgam Cohen}.}
  \bibinfo{year}{2020}\natexlab{}.
\newblock \showarticletitle{The Complexity of Contracts}. In
  \bibinfo{booktitle}{\emph{Proceedings of the Thirty-First Annual ACM-SIAM
  Symposium on Discrete Algorithms}} (Salt Lake City, Utah)
  \emph{(\bibinfo{series}{SODA '20})}. \bibinfo{publisher}{Society for
  Industrial and Applied Mathematics}, \bibinfo{address}{USA},
  \bibinfo{pages}{2688–2707}.
\newblock


\bibitem[\protect\citeauthoryear{D{\"u}tting, Roughgarden, and
  Talgam-Cohen}{D{\"u}tting et~al\mbox{.}}{2019}]%
        {dutting2019simple}
\bibfield{author}{\bibinfo{person}{Paul D{\"u}tting}, \bibinfo{person}{Tim
  Roughgarden}, {and} \bibinfo{person}{Inbal Talgam-Cohen}.}
  \bibinfo{year}{2019}\natexlab{}.
\newblock \showarticletitle{Simple versus optimal contracts}. In
  \bibinfo{booktitle}{\emph{Proceedings of the 2019 ACM Conference on Economics
  and Computation}} \emph{(\bibinfo{series}{EC 2019})}.
  \bibinfo{pages}{369--387}.
\newblock


\bibitem[\protect\citeauthoryear{Frongillo and Kash}{Frongillo and
  Kash}{2021}]%
        {frongillo2021general}
\bibfield{author}{\bibinfo{person}{Rafael~M. Frongillo} {and}
  \bibinfo{person}{Ian~A. Kash}.} \bibinfo{year}{2021}\natexlab{}.
\newblock \showarticletitle{General truthfulness characterizations via convex
  analysis}.
\newblock \bibinfo{journal}{\emph{Games and Economic Behavior}}
  \bibinfo{volume}{130} (\bibinfo{year}{2021}), \bibinfo{pages}{636--662}.
\newblock
\showISSN{0899-8256}
\urldef\tempurl%
\url{https://doi.org/10.1016/j.geb.2021.09.010}
\showDOI{\tempurl}


\bibitem[\protect\citeauthoryear{Gneiting and Raftery}{Gneiting and
  Raftery}{2007}]%
        {gneiting2007strictly}
\bibfield{author}{\bibinfo{person}{Tilman Gneiting} {and}
  \bibinfo{person}{Adrian~E. Raftery}.} \bibinfo{year}{2007}\natexlab{}.
\newblock \showarticletitle{Strictly proper scoring rules, prediction, and
  estimation}.
\newblock \bibinfo{journal}{\emph{J. Amer. Statist. Assoc.}}
  \bibinfo{volume}{102}, \bibinfo{number}{477} (\bibinfo{year}{2007}),
  \bibinfo{pages}{359--378}.
\newblock


\bibitem[\protect\citeauthoryear{Grossman and Hart}{Grossman and Hart}{1983}]%
        {grossman1983analysis}
\bibfield{author}{\bibinfo{person}{Sanford~J. Grossman} {and}
  \bibinfo{person}{Oliver~D. Hart}.} \bibinfo{year}{1983}\natexlab{}.
\newblock \showarticletitle{An Analysis of the Principal-Agent Problem}.
\newblock \bibinfo{journal}{\emph{Econometrica}} \bibinfo{volume}{51},
  \bibinfo{number}{1} (\bibinfo{year}{1983}), \bibinfo{pages}{7--45}.
\newblock
\showISSN{00129682, 14680262}
\urldef\tempurl%
\url{http://www.jstor.org/stable/1912246}
\showURL{%
\tempurl}


\bibitem[\protect\citeauthoryear{Guruganesh, Schneider, and Wang}{Guruganesh
  et~al\mbox{.}}{2021}]%
        {guruganesh2021contracts}
\bibfield{author}{\bibinfo{person}{Guru Guruganesh}, \bibinfo{person}{Jon
  Schneider}, {and} \bibinfo{person}{Joshua~R. Wang}.}
  \bibinfo{year}{2021}\natexlab{}.
\newblock \bibinfo{booktitle}{\emph{Contracts under Moral Hazard and Adverse
  Selection}}.
\newblock \bibinfo{publisher}{Association for Computing Machinery},
  \bibinfo{address}{New York, NY, USA}, \bibinfo{pages}{563–582}.
\newblock
\showISBNx{9781450385541}


\bibitem[\protect\citeauthoryear{Hartline, Li, Shan, and Wu}{Hartline
  et~al\mbox{.}}{2020}]%
        {li2020optimization}
\bibfield{author}{\bibinfo{person}{Jason~D. Hartline}, \bibinfo{person}{Yingkai
  Li}, \bibinfo{person}{Liren Shan}, {and} \bibinfo{person}{Yifan Wu}.}
  \bibinfo{year}{2020}\natexlab{}.
\newblock \bibinfo{title}{Optimization of Scoring Rules}.
\newblock
\newblock
\showeprint[arxiv]{2007.02905}~[cs.GT]


\bibitem[\protect\citeauthoryear{Hiriart-Urrut and
  Lemar\'{e}chal}{Hiriart-Urrut and Lemar\'{e}chal}{2001}]%
        {hiriarturrut2001fundamentals}
\bibfield{author}{\bibinfo{person}{Jean-Baptiste Hiriart-Urrut} {and}
  \bibinfo{person}{Claude Lemar\'{e}chal}.} \bibinfo{year}{2001}\natexlab{}.
\newblock \bibinfo{booktitle}{\emph{Fundamentals of Convex Analysis}}.
\newblock \bibinfo{publisher}{Springer}.
\newblock


\bibitem[\protect\citeauthoryear{Mas-Colell, Whinston, and Green}{Mas-Colell
  et~al\mbox{.}}{1995}]%
        {mas1995microeconomic}
\bibfield{author}{\bibinfo{person}{Andreu Mas-Colell},
  \bibinfo{person}{Michael~Dennis Whinston}, {and} \bibinfo{person}{Jerry~R
  Green}.} \bibinfo{year}{1995}\natexlab{}.
\newblock \bibinfo{booktitle}{\emph{Microeconomic theory}}.
  Vol.~\bibinfo{volume}{1}.
\newblock \bibinfo{publisher}{Oxford University Press}.
\newblock


\bibitem[\protect\citeauthoryear{Oesterheld and Conitzer}{Oesterheld and
  Conitzer}{2020a}]%
        {oesterheld2020decision}
\bibfield{author}{\bibinfo{person}{Caspar Oesterheld} {and}
  \bibinfo{person}{Vincent Conitzer}.} \bibinfo{year}{2020}\natexlab{a}.
\newblock \showarticletitle{Decision Scoring Rules}. In
  \bibinfo{booktitle}{\emph{Web and Internet Economics}}.
  \bibinfo{publisher}{Springer International Publishing},
  \bibinfo{address}{Cham}, \bibinfo{pages}{468}.
\newblock
\showISBNx{978-3-030-64946-3}


\bibitem[\protect\citeauthoryear{Oesterheld and Conitzer}{Oesterheld and
  Conitzer}{2020b}]%
        {oesterheld2020minimum}
\bibfield{author}{\bibinfo{person}{Caspar Oesterheld} {and}
  \bibinfo{person}{Vincent Conitzer}.} \bibinfo{year}{2020}\natexlab{b}.
\newblock \showarticletitle{Minimum-Regret Contracts for Principal-Expert
  Problems}. In \bibinfo{booktitle}{\emph{Web and Internet Economics}}
  \emph{(\bibinfo{series}{WINE 2020})}.
\newblock


\bibitem[\protect\citeauthoryear{Savage}{Savage}{1971}]%
        {savage1971elicitation}
\bibfield{author}{\bibinfo{person}{Leonard~J. Savage}.}
  \bibinfo{year}{1971}\natexlab{}.
\newblock \showarticletitle{Elicitation of personal probabilities and
  expectations}.
\newblock \bibinfo{journal}{\emph{J. Amer. Statist. Assoc.}}
  \bibinfo{volume}{66}, \bibinfo{number}{336} (\bibinfo{year}{1971}),
  \bibinfo{pages}{783--801}.
\newblock


\bibitem[\protect\citeauthoryear{Tadelis and Segal}{Tadelis and Segal}{2005}]%
        {tadelis2005lectures}
\bibfield{author}{\bibinfo{person}{Steve Tadelis} {and} \bibinfo{person}{Ilya
  Segal}.} \bibinfo{year}{2005}\natexlab{}.
\newblock \bibinfo{booktitle}{\emph{Lectures in Contract Theory}}.
  Vol.~\bibinfo{volume}{1}.
\newblock
\urldef\tempurl%
\url{http://faculty.haas.berkeley.edu/stadelis/Econ_206_notes_2006.pdf}
\showURL{%
\tempurl}


\end{thebibliography}

\appendix

\section{Information Acquisition}
\subsection{Omitted proofs}
At several points, we used the following technical lemma.
Given any point $p$, we show that $p_0$ is in the convex hull of $p$ and at most $|\Omega|-1$ corners of the simplex.
In constructing the optimal solution $\bar{G}^*$ to Program \ref{IA-dual}, this  will allow us to pick out an $h_{\omega}^*$ that achieves the maximum at $p$ using the missing corner.
Lemma \ref{lemma:p-p0-interior} is illustrated in Figure \ref{fig:points-conv-hull}.

\begin{figure}[H]
\centering
\caption{An illustration of Lemma \ref{lemma:p-p0-interior}: given any point $p$ in the simplex, we can represented $p_0$ as a convex combination of $p$ (with positive weight) and all but one corners of the simplex.}
\label{fig:points-conv-hull}
\begin{tikzpicture}
  \fill (7.5,6.2) circle (2pt);
  \node [draw] at (7.5,5.8){\tiny $p_0$};
  \fill (8,7.4) circle (2pt);
  \node [draw] at (8.3,7.4){\tiny $p$};
  \draw[thick,dotted] (8,7.4) -- (5,5);
    \draw[thick, dotted] (8,7.4) -- (10,5);
  \draw (5,5) node[anchor=north]{$\delta_x$}
  -- (10,5) node[anchor=north]{$\delta_y$}
  -- (7.5,10) node[anchor=south]{$\delta_z$}
  -- cycle;
\end{tikzpicture}
\end{figure}
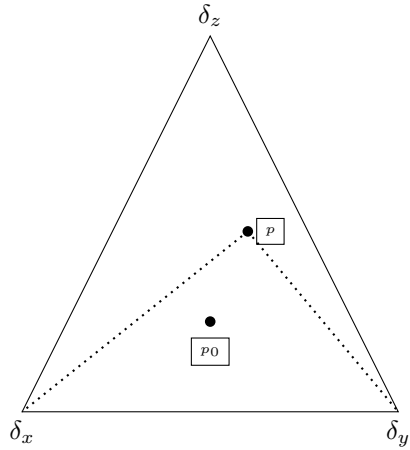

\begin{lemma}\label{lemma:p-p0-interior}
  For any $p \in \Delta_{\Omega}$, there exists $\omega \in \Omega$ s.t. $p_0\in \text{convhull}(\{\delta_{\omega'} : \omega' \neq \omega\} \bigcup \{p\} )$.
  In particular, there exist nonnegative numbers $\beta, \{\beta_{\omega'} : \omega' \in \Omega, \omega' \neq \omega\}$ summing to one with $\beta > 0$ such that $p_0 = \beta p + \sum_{\omega' \neq \omega} \beta_{\omega'} \delta_{\omega'}$.
\end{lemma}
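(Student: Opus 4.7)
The plan is to build the convex combination explicitly. Geometrically, I imagine starting at $p$ and walking along the line through $p_0$; since $p_0$ lies in the interior, $p_0$ is strictly past $p$ on this line, and continuing further must eventually hit the boundary of the simplex because the simplex is bounded. At that boundary point, some coordinate $\omega^*$ vanishes, and the remaining face is spanned by $\{\delta_{\omega'} : \omega' \neq \omega^*\}$, which is exactly the ``missing corner'' we want.

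Algebraically, I would set
\[ \beta \;=\; \min_{\omega \,:\, p(\omega) > 0} \frac{p_0(\omega)}{p(\omega)}, \qquad \omega^* \in \arg\min_{\omega \,:\, p(\omega) > 0} \frac{p_0(\omega)}{p(\omega)}, \]
and then define $\beta_{\omega'} := p_0(\omega') - \beta\, p(\omega')$ for each $\omega' \neq \omega^*$. Full support of $p_0$ immediately gives $\beta > 0$. A one-line summing argument using $\sum_\omega p(\omega) = \sum_\omega p_0(\omega) = 1$ (together with $p_0(\omega) > 0$ wherever $p(\omega) = 0$) shows $\beta \le 1$, so that the coefficients are in $[0,1]$.

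The verification then splits into three routine checks, which I would present in order: (i) $\beta_{\omega'} \geq 0$, which follows from $\beta \leq p_0(\omega')/p(\omega')$ by definition of the minimum (when $p(\omega') > 0$), and trivially from $\beta_{\omega'} = p_0(\omega') \geq 0$ otherwise; (ii) $\beta + \sum_{\omega' \neq \omega^*} \beta_{\omega'} = 1$, which algebraically collapses to the identity $\beta\, p(\omega^*) = p_0(\omega^*)$, true by the choice of $\omega^*$; and (iii) that the convex combination actually equals $p_0$ coordinate by coordinate: for $\omega^*$ the contribution is just $\beta\, p(\omega^*) = p_0(\omega^*)$, and for any $\omega' \neq \omega^*$ the two terms $\beta\, p(\omega') + \beta_{\omega'}$ telescope to $p_0(\omega')$.

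There is essentially no hard step; the only subtlety is the degenerate case $\beta = 1$, which by full support of $p_0$ forces $p = p_0$, and then the identity $p_0 = 1\cdot p + \sum 0 \cdot \delta_{\omega'}$ holds trivially with $\beta > 0$ as required. Thus the main work is just justifying the minimum-ratio choice of $\omega^*$ and $\beta$; everything else is bookkeeping.
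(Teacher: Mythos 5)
Your proof is correct and takes essentially the same approach as the paper: the paper selects $\omega = \arg\max_{\omega'} p(\omega')/p_0(\omega')$ and sets $\beta = p_0(\omega)/p(\omega)$, which is exactly your $\arg\min$/$\min$ of the reciprocal ratio; the definition of $\beta_{\omega'}$ and the three coordinatewise verification steps are identical.
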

\begin{proof}
  Fix $p$ and let $\omega= \arg\max_{\omega' \in \Omega} \frac{p(\omega')}{p_0(\omega')}$, breaking ties arbitrarily.
  (Recall by assumption that $p_0$ has full support, so $\omega$ is well-defined.)
  Let $\beta = \frac{p_0(\omega)}{p(\omega)}$, observing that $0 < \beta \leq 1$.
  For $\omega' \neq \omega$, let $\beta_{\omega'} = p_0(\omega') - \beta p(\omega')$.
  
  First, we show the numbers form a probability distribution.
  We already have $\beta > 0$, and for $\omega' \neq \omega$, we have $\beta_{\omega'} = p_0(\omega') \left(1 - \beta \frac{p(\omega')}{p_0(\omega')}\right) \geq 0$ by definition of $\beta$.
  Meanwhile,
  \begin{align*}
    \beta + \sum_{\omega' \neq \omega} \beta_{\omega'}
    &= \beta + \sum_{\omega' \neq \omega} \left(p_0(\omega') - \beta p(\omega')\right)  \\
    &= \beta + (1 - p_0(\omega)) - \beta(1 - p(\omega))  \\
    &= 1 - p_0(\omega) + \beta p(\omega)  \\
    &= 1 .
  \end{align*}
  Now, all that remains is to show is that $p_0$ is indeed equal to the quantity $r := \beta p + \sum_{\omega' \neq \omega} \beta_{\omega'} \delta_{\omega'}$.
  We have $r(\omega) = \beta p(\omega) = p_0(\omega)$, as desired.
  And for $\omega' \neq \omega$, we have $r(\omega') = \beta p(\omega') + \beta_{\omega'} = p_0(\omega')$.
\end{proof}

\subsection{Clarification for Remark \ref{remark:how-we-IA}} \label{app:IA-how-opt}

In this subsection we illustrate how we arrived at the closed form solution for Program \ref{IA-dual} in the Information Acquisition(IA) model in \ref{sec:info_acq}. The proofs of these claims are omitted as they are not being used in the main text.

Figure \ref{fig:cone} illustrates why pointed polyhedral cones are optimal.
Figure \ref{fig:min-payment} shows that the limited liability constraint is binding at all corners in an optimal solution.
Figure \ref{fig:prior-value} illustrates why an optimal solution has $G(p_0) = 1$.
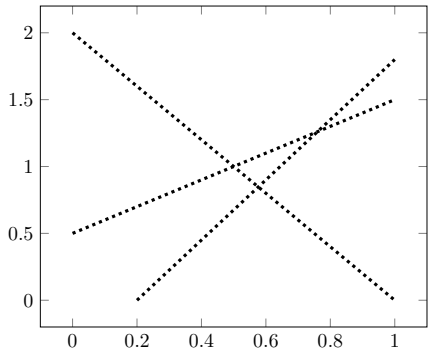
\begin{figure}[H]
\centering
\caption{Claim: There is always an optimal solution to Program \ref{IA-dual} that is a pointed polyhedral cone (PPC).}
\label{fig:cone}
\begin{subfigure}{.5\textwidth}
  \centering
  \captionsetup{width=.9\linewidth,font=small}
\begin{tikzpicture} [scale=0.75]
\begin{axis}
    \addplot [mark=none,  black,   ultra thick, dotted] coordinates { (0,2) (1, 0)};
    \addplot [mark=none,  black,   ultra thick, dotted] coordinates { (0,0.5) (1,1.5)};
    \addplot [mark=none,  black,   ultra thick, dotted] coordinates { (0.2,0) (1,1.8)};
\end{axis}
\end{tikzpicture}
\caption{Here we start with $G$, the pointwise maximum over the three pictured affine functions (black dotted lines).}
\end{subfigure}%
\begin{subfigure}{.5\textwidth}
  \centering
  \captionsetup{width=.9\linewidth,font=small}
\begin{tikzpicture} [scale=0.75]
\begin{axis}
    \addplot [mark=none,  red,   ultra thick] coordinates { (0,2) (1, 0)};
    \addplot [mark=none,  red,   ultra thick] coordinates { (0,0.5) (1,1.5)};
    \addplot [mark=none,  red,   ultra thick] coordinates { (0.1,0.1) (1,2.1)};
    \addplot [mark=none,  black,   ultra thick, dotted] coordinates { (0,2) (1, 0)};
    \addplot [mark=none,  black,   ultra thick, dotted] coordinates { (0,0.5) (1,1.5)};
    \addplot [mark=none,  black,   ultra thick, dotted] coordinates { (0.2,0) (1,1.8)};
\end{axis} 
\end{tikzpicture}
\caption{We modify $G$ by shifting its supporting hyperplanes up up to pass through the point $(p_0,1)$, resulting in the red solid lines. The pointwise maximum over these functions is a cone, as all of them pass through the point $(p_0,1)$. If the original $G$ was feasible, then the resulting $G$ is feasible and its expected value (objective value in Program \ref{IA-dual}) only increases.}
\end{subfigure}%
\end{figure}

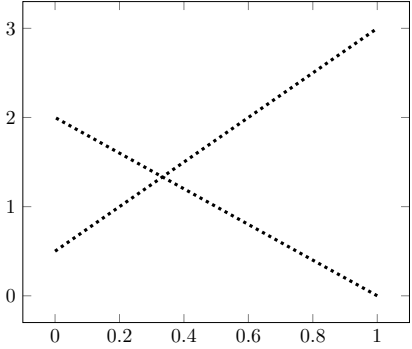
\begin{figure}[H]
\centering
\caption{Claim: At an optimal solution, for all $\omega$, the limited liability constraint is binding.}
\label{fig:min-payment}
\begin{subfigure}{.5\textwidth}
  \centering
  \captionsetup{width=.9\linewidth , font=small}
\begin{tikzpicture} [scale=0.75]
\begin{axis}
    \addplot [mark=none,  black,   ultra thick, dotted] coordinates { (0,0.5) (1,3)};
    \addplot [mark=none,  black,   ultra thick, dotted] coordinates { (1,0) (0,2)};
\end{axis} 
\end{tikzpicture}
\caption{Payment structure, $G$,  having an outcome (left) with minimum payment strictly positive.}
\end{subfigure}%
\begin{subfigure}{.5\textwidth}
  \centering
  \captionsetup{width=.9\linewidth ,font=small}
\begin{tikzpicture} [scale=0.75]
\begin{axis}
    \addplot [mark=none,  red,   ultra thick] coordinates { (1,0) (0,2)};
    \addplot [mark=none,  blue,  ultra thick] coordinates { (0,0) (1,2.5)};
    \addplot [mark=none,  red,  ultra thick] coordinates { (0,0) (1,4)};
    \addplot [mark=none,  black,   ultra thick, dotted] coordinates { (0,0.5) (1,3)};
    \addplot [mark=none,  black,   ultra thick, dotted] coordinates { (1,0) (0,2)};
\end{axis} 
\end{tikzpicture}
\caption{First, shift the offending hyperplanes of $G$ down to make minimum payment equal $0$, resulting in the blue line. Then, scale them up to retain the original expected payment for reporting prior, resulting in the red line. None of the constraints are violated, and the expected value of $G$ increases.}
\end{subfigure}%
\end{figure}

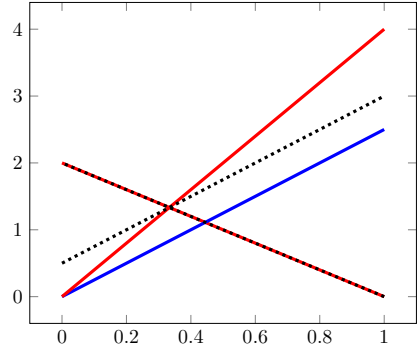
\begin{figure}[H]
\centering
\caption{Claim: at an optimal solution, the constraint $G(p_0) \leq 1$ is binding.}
\label{fig:prior-value}
\begin{subfigure}{.5\textwidth}
  \centering
  \captionsetup{width=.9\linewidth,font=small}
\begin{tikzpicture} [scale=0.75]
\begin{axis}
    \addplot [mark=none,  black,   ultra thick, dotted] coordinates { (0,0) (1,1)};
    \addplot [mark=none,  black,   ultra thick, dotted] coordinates { (1,0) (0,1)};
\end{axis} 
\end{tikzpicture}
\caption{Here $G(p_0) < 1$, i.e. the supporting hyperplanes lie strictly below the point $(p_0,1)$.}
\end{subfigure}%
\begin{subfigure}{.5\textwidth}
  \centering
  \captionsetup{width=.9\linewidth,font=small}
\begin{tikzpicture} [scale=0.75]
\begin{axis}
    \addplot [mark=none,  black,   ultra thick, dotted] coordinates { (0,0) (1,1)};
    \addplot [mark=none,  black,   ultra thick, dotted] coordinates { (1,0) (0,1)};
    \addplot [mark=none,  red,  ultra thick] coordinates { (0,0) (1,2)};
    \addplot [mark=none,  red,  ultra thick] coordinates { (1,0) (0,2)};
\end{axis} 
\end{tikzpicture}
\caption{Scale up all supporting hyperplanes, i.e. scale up $G$, to obtain an expected payment of $1$ for reporting the prior. None of the constraints are violated, but the expected value of payments increase.}
\end{subfigure}%
\end{figure}

\section{Contracts} \label{app:contracts}

\subsection{Computing all optimal contracts}

The next result shows that $a^*$ is elicited by a menu $G$ if and only if $G$ has the following form: take a menu $G'$ that lies everywhere below the convexified cost curve $c$, and shift it upward by some amount $\beta$ to obtain $G$.
Furthermore, $G'$ must equal $c$ at $p_{a^*}$, where both must equal the cost $c_{a^*}$.
Examples appear in Figure \ref{fig:strict-weak-elicit}.
\begin{prop} \label{prop:contracts-elicitable}
  Let a contracts setting be given and let $c$ be the convexified cost curve.
  A subdifferentiable convex menu $G$ elicits an action $a^*$ if and only if there exists $\beta \geq 0$ such that: \emph{(a)} $G(p) - \beta \leq c(p)$ pointwise on $\P$, and \emph{(b)} $G(p_{a^*}) - \beta = c(p_{a^*}) = c_{a^*}$.
\end{prop}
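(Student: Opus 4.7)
The plan is to prove both directions using convexity of $G$ (via Jensen's inequality) together with the definition of the convexified cost curve $c$ as a minimum over $\lambda \in \Delta_A$.

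For the forward direction, assume $G$ elicits $a^*$. I would set $\beta := G(p_{a^*}) - c_{a^*}$, which is nonnegative by the participation constraint. The key subclaim is that $c(p_{a^*}) = c_{a^*}$. Suppose for contradiction that $c(p_{a^*}) < c_{a^*}$; then there is $\lambda \in \Delta_A$ with $\sum_a \lambda(a) p_a = p_{a^*}$ and $\sum_a \lambda(a) c_a < c_{a^*}$. By convexity of $G$, $\sum_a \lambda(a) G(p_a) \geq G(\sum_a \lambda(a) p_a) = G(p_{a^*})$. Subtracting the cost inequality from the payment inequality gives
\[
  \sum_a \lambda(a)\bigl[G(p_a) - c_a\bigr] > G(p_{a^*}) - c_{a^*}.
\]
Since a weighted average strictly exceeds the right-hand side, some $a$ in the support must satisfy $G(p_a) - c_a > G(p_{a^*}) - c_{a^*}$, and this $a$ cannot be $a^*$ itself, contradicting incentive constraint (\ref{eqn:contracts-incent-constr}).

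Still in the forward direction, for part (a) I would fix any $p \in \P$ and write $p = \sum_a \lambda(a) p_a$ where $\lambda$ achieves the minimum defining $c(p)$ (attained because $A$ is finite). Convexity of $G$ gives $G(p) \leq \sum_a \lambda(a) G(p_a)$, and applying the incentive constraint termwise in the form $G(p_a) \leq c_a + \beta$ yields $G(p) \leq \sum_a \lambda(a) c_a + \beta = c(p) + \beta$, i.e., (a). Part (b) reduces to $G(p_{a^*}) - \beta = c_{a^*}$ by the definition of $\beta$, together with the subclaim $c(p_{a^*}) = c_{a^*}$.

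For the reverse direction, assume (a) and (b) with $\beta \geq 0$. Participation is immediate: $G(p_{a^*}) - c_{a^*} = \beta \geq 0$. For the incentive constraint, observe that each point $(p_a, c_a)$ lies in $\mathrm{convhull}(\{(p_{a'}, c_{a'}) : a' \in A\})$, so $c(p_a) \leq c_a$ by definition of the lower envelope. Combining with (a) gives $G(p_a) \leq c(p_a) + \beta \leq c_a + \beta$, hence $G(p_a) - c_a \leq \beta = G(p_{a^*}) - c_{a^*}$, as required.

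The main obstacle is the subclaim $c(p_{a^*}) = c_{a^*}$ in the forward direction: the incentive constraint is only stated for pure actions, but the convexified cost curve is defined via randomized mixtures, so one must use convexity of $G$ to convert a would-be profitable mixture into a profitable pure deviation. The rest of the argument is a clean bookkeeping exercise using Jensen's inequality and the definition of $c$.
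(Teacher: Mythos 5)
Your proof is correct and takes essentially the same route as the paper: both directions rest on the lower-envelope characterization of $c$ combined with convexity of $G$, with the same choice $\beta = G(p_{a^*}) - c_{a^*}$. The only substantive difference is that you spell out the envelope property directly via Jensen's inequality rather than invoking it as a known fact, and you prove the subclaim $c(p_{a^*}) = c_{a^*}$ as a standalone contradiction argument — which is fine, but is actually redundant once you have established~(a), since then $c_{a^*} = G(p_{a^*}) - \beta \leq c(p_{a^*}) \leq c_{a^*}$ forces equality throughout, exactly as the paper observes.
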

\begin{proof}
  ($\implies$)
  Suppose that the subdifferentiable convex $G$ elicits $a^*$.
  Let $\beta = G(p_{a^*}) - c_{a^*}$.
  By the participation constraint (\ref{eqn:contracts-partic-constr}), $\beta \geq 0$.
  %Let $G'(p) = G(p) - \beta$.
  By the incentive constraint (\ref{eqn:contracts-incent-constr}), $G'(p_a) \leq c_a$ for all $a \in A$.
  So $G'$ is a convex function that lies below all points in $X = \{(p_a, c_a) : a \in A\}$.
  By definition, $c$ is the pointwise maximum over all such convex functions, so $G'(p) \leq c(p)$ for all $p$, proving \emph{(a)}.
  Furthermore, we now have $c_{a^*} = G'(p_{a^*}) \leq c(p_{a^*}) \leq c_{a^*}$, so these are all equalities, proving \emph{(b)}.
  
  ($\impliedby$)
  Let $\beta \geq 0$ and let $G'(p) = G(p) - \beta$; suppose \emph{(a)} and \emph{(b)} hold.
  Then \emph{(b)} immediately implies the participation constraint (\ref{eqn:contracts-partic-constr}).
  It also implies $G(p_{a^*}) - c_{a^*} = \beta$.
  Meanwhile, \emph{(a)} and the definition of $c$ imply $G(p_a) - \beta \leq c(p_a) \leq c_a$.
  Rearranging, $G(p_a) - c_a \leq \beta$, which is the incentive constraint (\ref{eqn:contracts-incent-constr}).
\end{proof}

\begin{corollary} \label{cor:contracts-elicitable}
  An action $a^*$ is elicitable if and only if $c(p_{a^*}) = c_{a^*}$, i.e. if the point $(p_{a^*},c_{a^*})$ lies on the lower boundary of the convex hull of $\{(p_a,c_a) : a \in A\}$.
\end{corollary}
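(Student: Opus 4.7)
The corollary follows almost immediately from Proposition \ref{prop:contracts-elicitable}, so my plan is to invoke that characterization and handle each direction with a short argument.

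For the forward direction ($a^*$ elicitable $\Rightarrow$ $c(p_{a^*}) = c_{a^*}$), I would suppose that some subdifferentiable convex $G$ elicits $a^*$, and apply Proposition \ref{prop:contracts-elicitable} to obtain a $\beta \geq 0$ satisfying conditions (a) and (b) of that proposition. Condition (b) is exactly the claim $c(p_{a^*}) = c_{a^*}$, so this direction is immediate.

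For the reverse direction ($c(p_{a^*}) = c_{a^*} \Rightarrow$ $a^*$ elicitable), I need to exhibit some menu $G$ that elicits $a^*$. The natural candidate is simply $G = c$ itself: since $c$ is the lower convex envelope of the finite point set $\{(p_a,c_a) : a \in A\}$, it is convex and subdifferentiable on $\P$, and hence a valid menu by Proposition \ref{prop:menu-convex}. Taking $\beta = 0$, condition (a) of Proposition \ref{prop:contracts-elicitable} becomes $c(p) \leq c(p)$, which is trivially true, and condition (b) becomes $c(p_{a^*}) = c(p_{a^*}) = c_{a^*}$, which holds by hypothesis. Proposition \ref{prop:contracts-elicitable} then certifies that $G = c$ elicits $a^*$.

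There is no substantive obstacle here; all the work is in Proposition \ref{prop:contracts-elicitable}. The only minor subtlety to double-check is that $c$, viewed as a menu, really is subdifferentiable convex on $\P$ (so that the framework of Section \ref{sec:scoring-rules} applies), which is immediate from the definition of $c$ as a pointwise maximum of affine functions lying weakly below the finite point set $\{(p_a,c_a) : a \in A\}$. The geometric statement about the lower boundary of $\text{convhull}(\{(p_a,c_a) : a \in A\})$ is just a restatement of $c(p_{a^*}) = c_{a^*}$ by the definition of $c$ as the lower convex envelope.
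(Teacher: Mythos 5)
Your proof is correct and takes essentially the same route as the paper's: both directions follow directly from Proposition \ref{prop:contracts-elicitable}, with the reverse direction witnessed by the menu $G = c$ and $\beta = 0$. The paper's version is a one-liner, but your expansion, including the check that $c$ is subdifferentiable convex, is accurate and fills in the same content.
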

\begin{proof}
  Condition \emph{(b)} in Proposition \ref{prop:contracts-elicitable} holds if and only if $c(p_{a^*}) = c_{a^*}$, in which case we can take $G = c$ to satisfy condition \emph{(a)} as well.
\end{proof}

If $a^*$ is elicitable, then even when we add the additional requirement of limited liability, there is always a feasible way to incentivize it.
In particular, if we have a $G'$ that satisfies conditions \emph{(a), (b)} in Proposition \ref{prop:contracts-elicitable}, then there exists a $\beta \geq 0$ such that $G' + \beta$ satisfies the limited liability constraint (\ref{eqn:contracts-LL-constr}) as well.
\begin{corollary} \label{cor:contracts-feasible}
  In the contracts setting, given a desired  action $a^*$, there exists a feasible solution $G$ to Program \ref{contracts-main} if and only if  the convexified cost curve satisfies $c(p_{a^*}) = c_{a^*}$.
  In this case, for any subdifferentiable convex $G$, there exists $\beta' \geq 0$ such that $G + \beta'$ is feasible for \ref{contracts-main} if and only if there exists $\beta \geq 0$ such that conditions \emph{(a), (b)} of Proposition \ref{prop:contracts-elicitable} are satisfied.
\end{corollary}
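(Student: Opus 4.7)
The plan is to derive this corollary directly from Corollary \ref{cor:contracts-elicitable} and Proposition \ref{prop:contracts-elicitable}, combined with a short ``shift lemma'' about vertical translations of $G$. Specifically: for any subdifferentiable convex $G$ and any $\beta'\in\reals$, the function $G+\beta'$ is also subdifferentiable convex, its subtangents at each $p$ are precisely the subtangents of $G$ at $p$ shifted up by $\beta'$, and consequently $M_{G+\beta'}(\omega)=M_G(\omega)+\beta'$ for every $\omega\in\Omega$. The incentive constraint (\ref{eqn:contracts-incent-constr}) is invariant under such shifts, since both sides shift by the same amount, and an upward shift only loosens the participation constraint (\ref{eqn:contracts-partic-constr}).

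For the first ``iff'', the forward direction is immediate: a feasible $G$ satisfies the incentive and participation constraints of Program \ref{contracts-main}, hence elicits $a^*$, and Corollary \ref{cor:contracts-elicitable} then forces $c(p_{a^*})=c_{a^*}$. For the converse, given $c(p_{a^*})=c_{a^*}$ I take $G=c$ itself, which is piecewise linear convex. By definition $c(p_a)\leq c_a$ for every $a$ with equality at $a^*$, so Proposition \ref{prop:contracts-elicitable} with $\beta=0$ shows $c$ elicits $a^*$; then the shift lemma applied with $\beta':=\max\{0,\max_\omega(-M_c(\omega))\}$ produces a fully feasible $c+\beta'$.

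For the second ``iff'', the $(\Leftarrow)$ direction repeats the same pattern: Proposition \ref{prop:contracts-elicitable} translates conditions (a), (b) with $\beta\geq 0$ into ``$G$ elicits $a^*$'', and the shift lemma supplies the nonnegative $\beta':=\max\{0,\max_\omega(-M_G(\omega))\}$ needed to reach limited liability without disturbing the other constraints. For $(\Rightarrow)$, I would apply Proposition \ref{prop:contracts-elicitable} to the feasible $G+\beta'$ to extract some $\tilde\beta\geq 0$ witnessing (a), (b) for $G+\beta'$, then take $\beta:=\tilde\beta-\beta'$, which transfers both identities to $G$ by direct substitution (the subtangent-shift half of the shift lemma handles (a); for (b) one simply subtracts $\beta'$ from both sides of $(G+\beta')(p_{a^*})-\tilde\beta=c_{a^*}$).

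The main delicate point I anticipate is ensuring $\beta\geq 0$ in this last step: condition (b) pins $\beta=G(p_{a^*})-c_{a^*}$, so the task reduces to showing that any $G$ admitting a nonnegative shift to feasibility must already satisfy $G(p_{a^*})\geq c_{a^*}$. I plan to handle this by taking $\beta'$ to be the \emph{minimum} nonnegative shift that makes $G+\beta'$ feasible and doing a short case analysis on which of the participation and limited-liability constraints is binding at that minimum, from which the sign of $\tilde\beta-\beta'$ can be read off directly.
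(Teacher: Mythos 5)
Your approach matches the paper's at a high level: derive the first ``iff'' from Corollary \ref{cor:contracts-elicitable} and Proposition \ref{prop:contracts-elicitable}, use the shift observations (incentive constraint~(\ref{eqn:contracts-incent-constr}) is invariant under vertical shifts, participation~(\ref{eqn:contracts-partic-constr}) only loosens, and a large enough shift achieves limited liability because the set of subtangents is bounded), and handle the $(\Leftarrow)$ half of the second ``iff'' by shifting a $G$ satisfying (a),~(b) upward to feasibility. That part is correct and essentially what the paper does.

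The $(\Rightarrow)$ half of the second ``iff'' is where the genuine gap lies, and it is exactly the ``delicate point'' you flag. The intermediate claim your plan hinges on---that any $G$ admitting a nonnegative shift to feasibility must already have $G(p_{a^*}) \geq c_{a^*}$---is false. Take $G = c - 1$, with $c$ the convexified cost curve: since $a^*$ is elicitable, $c$ satisfies constraints~(\ref{eqn:contracts-incent-constr}) and~(\ref{eqn:contracts-partic-constr}), so some upward shift $c + \gamma$ with $\gamma \geq 0$ is feasible, hence $G + (1+\gamma)$ is feasible with $1+\gamma \geq 0$; yet $G(p_{a^*}) = c_{a^*} - 1 < c_{a^*}$, and condition~(b) pins $\beta = G(p_{a^*}) - c_{a^*} < 0$, so no nonnegative $\beta$ satisfies (a),~(b) for $G$. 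Choosing $\beta'$ minimal and casing on which constraint binds does not escape this: condition~(b) fixes $\beta = G(p_{a^*}) - c_{a^*}$ independently of $\beta'$, and its sign is simply not controlled by the hypothesis. For what it is worth, the paper's own short proof of this corollary also establishes only $(\Leftarrow)$; the direction of this sentence actually invoked downstream (e.g.\ in the proof of Proposition \ref{prop:contracts-minimal-opt}) is ``$G$ feasible $\Rightarrow \exists \beta \geq 0$ with (a),~(b)'', which follows directly from Proposition \ref{prop:contracts-elicitable} because a feasible $G$ elicits $a^*$ with no shift at all---this is weaker than, and not the same as, the $(\Rightarrow)$ you are attempting. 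So you should not expect a finer case analysis to close the gap; the obstruction you noticed is real.
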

\begin{proof}
  We have from Proposition \ref{prop:contracts-elicitable} that $G$ satisfies constraints (\ref{eqn:contracts-incent-constr}) and (\ref{eqn:contracts-partic-constr}) if and only if it is any upward shift of some such $G'$.
  Next, we observe that because the set of subtangents is bounded, there exists a sufficiently large upward shift $\beta$ such that the final constraint, limited liability (\ref{eqn:contracts-LL-constr}), is satisfied as well.\bo{wouldn't hurt to have more detail, but this is ok}
\end{proof}

Now that we have a geometric characterization of feasibility, we turn to optimality.
Here the key point, as discussed above, is that there are optimal menus consisting of just one contract.
That contract must be a shift of a subtangent of $c$ at $p_{a^*}$, and can be found by optimizing over the set of subgradients of $c$ at $p_{a^*}$.
\begin{prop} \label{prop:contracts-minimal-opt}
  If $a^*$ is elicitable, then Algorithm \ref{alg:contracts-minimal} computes an optimal solution to Program \ref{contracts-main} consisting of a single contract.
\end{prop}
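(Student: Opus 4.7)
The plan is to establish feasibility of the single-contract menu $G = \bar{t}$ produced by the algorithm, and then show a matching lower bound on $G(p_{a^*})$ over all feasible menus, invoking Proposition \ref{prop:contracts-elicitable} to characterize feasibility.

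For feasibility, I would observe that $G = \bar{t}$ is affine (it is the expected payment of a single contract), hence subdifferentiable convex with unique subgradient $v$ everywhere. The incentive and participation constraints, (\ref{eqn:contracts-incent-constr}) and (\ref{eqn:contracts-partic-constr}), follow from Proposition \ref{prop:contracts-elicitable} with shift $\beta' = \max\{0,\beta\}$: the function $G(p) - \beta' = c(p_{a^*}) + v \cdot (p - p_{a^*})$ lies pointwise weakly below $c$ because $v \in \partial c(p_{a^*})$, with equality at $p_{a^*}$ where both equal $c_{a^*}$ by elicitability. For limited liability (\ref{eqn:contracts-LL-constr}), I compute $t(\omega) = c(p_{a^*}) + v(\omega) - v \cdot p_{a^*} + \max\{0,\beta\} \geq \left(c(p_{a^*}) + \min_{\omega'} v(\omega') - v \cdot p_{a^*}\right) + \max\{0,\beta\} = -\beta + \max\{0,\beta\} \geq 0$, using the definition of $\beta$.

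For optimality, I take an arbitrary feasible $G'$ and show $G'(p_{a^*}) \geq c_{a^*} + \max\{0,\beta\}$. By Proposition \ref{prop:contracts-elicitable}, $G'(p_{a^*}) = c_{a^*} + \beta''$ for some $\beta'' \geq 0$ with $G' - \beta'' \leq c$ pointwise and equality at $p_{a^*}$. The key step is that any subgradient $v'$ of $G'$ at $p_{a^*}$ is also a subgradient of $c$ at $p_{a^*}$, since for all $p$, $c(p) \geq G'(p) - \beta'' \geq G'(p_{a^*}) - \beta'' + v' \cdot (p - p_{a^*}) = c_{a^*} + v' \cdot (p - p_{a^*})$. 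The subtangent $t'$ of $G'$ at $p_{a^*}$ is the contract the agent selects, hence a member of the menu, so limited liability on $t'$ forces $\beta'' \geq v' \cdot p_{a^*} - \min_{\omega} v'(\omega) - c_{a^*}$. Since by definition $V_{a^*}$ minimizes exactly this quantity over $\partial c(p_{a^*})$, we obtain $\beta'' \geq \beta$; combined with $\beta'' \geq 0$, this gives $\beta'' \geq \max\{0,\beta\}$, matching the algorithm's objective value.

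The main obstacle I anticipate is the subtle lifting step showing that any subgradient of $G'$ at $p_{a^*}$ is also a subgradient of $c$ at $p_{a^*}$; this chaining uses both halves of the ``sandwich'' condition from Proposition \ref{prop:contracts-elicitable} (the pointwise inequality and the equality at $p_{a^*}$). A secondary point that needs care is verifying that the agent's chosen subtangent at $p_{a^*}$ is indeed a contract in the menu, so that limited liability constrains it; this follows from the representation of a menu as $G$ together with specified subtangents in Proposition \ref{prop:menu-convex}.
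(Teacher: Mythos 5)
Your proof is correct and follows essentially the same approach as the paper's, resting on the same key ideas: the sandwich characterization of Proposition \ref{prop:contracts-elicitable}, the observation that a subtangent of a feasible menu at $p_{a^*}$ (after shifting down by $\beta''$) lifts to a subtangent of the convexified cost curve $c$, and the definition of $V_{a^*}$ as the set of subgradients minimizing the required limited-liability shift. The main organizational difference is cosmetic: the paper starts from an arbitrary optimal menu $G$, drops to a single-contract submenu while preserving the objective, and then argues the algorithm's choice does at least as well; you instead verify feasibility of the algorithm's output directly and separately prove a matching lower bound $G'(p_{a^*}) \geq c_{a^*} + \max\{0,\beta\}$ for every feasible $G'$. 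Your route is arguably cleaner since it avoids the intermediate ``drop contracts'' step and makes explicit the subgradient-lifting argument that the paper treats as immediate; the paper's route buys a structural statement (every optimal menu has a single-contract optimal restriction) that yours proves only implicitly.
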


% Bo
% 1. find an optimal subgradient v, contract t
% 2. get its shift beta, check if it violates limited liability
% 3. yes? then LL binds, participation does not.
%    a. shift t up so LL is satisfied
%    b. add any other LL, below-c contracts
% 4. no? then participation binds, LL does not (in a sense).
%    a. allow replacing t with any subtangent that satisfies LL.
%    b. again, add any other LL, below-c contracts.

%% In the main text now
%\begin{algorithm}
%  \caption{Computing a a minimal optimal contracts menu $T$.}
%  \label{alg:contracts-minimal}
%  Given a contracts problem, let $c$ be the convexified cost curve and $a^*$ the desired elicitable action. \\
%  Define $V_{a^*} = \arg\max_{v \in \partial c(p_{a^*})} \min_{\omega}   v(\omega) - v \cdot p_{a^*} .$ \tcp*{the optimal subgradients of $c$ at $p_{a^*}$}
%  Let $v \in V_{a^*}$. \\
%  Define $\beta = -\left( c(p_{a^*}) + \min_{\omega} v(\omega) - v \cdot p_{a^*} \right)$. \tcp*{the shift required for limited liability}
%  Let $t(\omega) = c(p_{a^*}) + v \cdot (\delta_{\omega} - p_{a^*}) + \max\{0, \beta\}$. \\
%  Let $T = \{t\}$ and $G = \bar{t}$. \tcp*{$\bar{t}$ is the expected payment function, Definition \ref{def:subtangent}}
%\end{algorithm}

\begin{proof}
  \bo{Am sure we can improve this presentation}
  Let $G$ be any optimal menu.
  By Corollary \ref{cor:contracts-feasible}, $G$ is feasible if and only if, for some $\beta \geq 0$, letting $G' = G - \beta$, we have $G'(p) \leq c(p)$ pointwise with $G'(p_{a^*}) = c(p_{a^*}) = c_{a^*}$.
  In particular, $G'$ has some subtangent contract $t'$ at $p_{a^*}$, and by definition, $t'$ must be a subtangent of $c$ at $p_{a^*}$ as well, i.e. for some $v \in \partial c(p_{a^*})$,
  \[ t'(\omega) = c(p_{a^*}) + v \cdot (\delta_{\omega} - p_{a^*}) . \]
  So $G$ has a subtangent $t = t' + \beta$.
  
  Observe that the objective value for $G$ is $G(p_{a^*}) = \bar{t}(p_{a^*})$.
  Also observe that if we drop all other contracts from the menu besides $t$, all constraints continue to hold, and the objective value is unchanged.
  So there are optimal menus of the form $T = \{t\}$ where $t$ is an upward shift of a subtangent of $c$ at $p_{a^*}$.
  
  We now argue that there is an optimal single-contract menu where the subgradient $v$ is in $V_{a^*}$ and that $\beta$ is as calculated in Algorithm \ref{alg:contracts-minimal}.
  In particular, by limited liability, $\min_{\omega} t(\omega) \geq 0$, i.e.
  \begin{align*}
    -\beta &\leq c(p_{a^*})  + \min_{\omega} v(\omega) - v\cdot p_{a^*} .
  \end{align*}
  \bo{this is where it could use more clarity}
  By definition, if $v \in V_{a^*}$, the shift $\beta$ is minimized over all $\partial c(p_{a^*})$, so in particular the above inequality holds.
  And the other constraints continue to hold as well, so the resulting contract is also optimal.
\end{proof}

Additionally, one can generally add a number of additional contracts to the menu without compromising optimality.
We can view this process as taking the optimal subtangent, an affine $G$, and ``convexifying'' it further by adding other contracts that do not violate limited liability and do not lie above $c$ (after shifting down by the appropriate $\beta$).
The result is Algorithm \ref{alg:contracts-full}.
We note that Algorithm 2 can in general add contracts to the menu that are strictly dominated (i.e. never chosen by the agent), so one can optionally post-process the output to remove any dominated contracts.

\begin{prop} \label{prop:contracts-all-opt}
  If $a^*$ is elicitable, then every optimal solution to Program \ref{contracts-main} is computed by Algorithm \ref{alg:contracts-full}, for some choices of tiebreakers.
\end{prop}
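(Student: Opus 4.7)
My plan is to prove the forward direction (every optimal $G^*$ is produced by the algorithm) by decomposing $G^*$ into its constituent subtangents and matching them to the algorithm's choices. The reverse direction (each menu the algorithm outputs is optimal) is essentially the same verification that appears in Proposition \ref{prop:contracts-minimal-opt}, extended to include the additional subtangents, and I would dispatch it briefly; the bulk of the proof is the forward direction.

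Let $G^*$ be an optimal solution and, by Proposition \ref{prop:menu-convex}, write it as the pointwise maximum of its subtangent contracts $\{h_i\}_{i \in I}$. Set $\beta^* = G^*(p_{a^*}) - c_{a^*}$. By Corollary \ref{cor:contracts-feasible} and Proposition \ref{prop:contracts-elicitable} we have $\beta^* \geq 0$, $G^* - \beta^* \leq c$ pointwise on $\P$, and $G^*(p_{a^*}) - \beta^* = c_{a^*}$. By optimality and Proposition \ref{prop:contracts-minimal-opt}, $\beta^* = \max\{0, \beta_{v^*}\}$ where $v^* \in V_{a^*}$ and $\beta_v = -\bigl(c(p_{a^*}) + \min_\omega v(\omega) - v \cdot p_{a^*}\bigr)$ is the shift Algorithm \ref{alg:contracts-minimal} applies to subgradient $v$.

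Next I would show that each $h_i$ in the decomposition satisfies the admissibility conditions implicit in Algorithm \ref{alg:contracts-full}: (a) $h_i(\delta_\omega) \geq 0$ for every $\omega$, because $G^*$ satisfies limited liability (\ref{eqn:contracts-LL-constr}) and $h_i$ is a subtangent of $G^*$ at some point, so the LL bound at a corner is inherited; (b) $h_i - \beta^* \leq c$ on $\P$, since $h_i \leq G^*$ everywhere and $G^* - \beta^* \leq c$; and (c) at least one $h_{i_0}$ achieves $h_{i_0}(p_{a^*}) = G^*(p_{a^*}) = c_{a^*} + \beta^*$, so $h_{i_0} - \beta^*$ is an affine subtangent of $c$ at $p_{a^*}$ whose slope is some $v_{i_0} \in \partial c(p_{a^*})$. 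Compatibility with $\beta^*$ forces $\beta_{v_{i_0}} \leq \beta^*$: when $\beta^* > 0$ (the LL-binding regime) this means $v_{i_0}$ is an optimal subgradient, i.e.\ $v_{i_0} \in V_{a^*}$; when $\beta^* = 0$ (the participation-binding regime) $v_{i_0}$ may be any element of $\partial c(p_{a^*})$ with $\beta_{v_{i_0}} \leq 0$. Either way, these are exactly the ``anchor'' subtangents Algorithm \ref{alg:contracts-full} is allowed to select, and the remaining $h_i$ are exactly the extra contracts it would add. Identifying the tiebreakers with the choice of $h_{i_0}$ and of the other $h_i$'s completes the match.

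The main obstacle is the case split on which constraint is binding. In the LL-binding regime the anchor slope must lie in $V_{a^*}$, whereas in the participation-binding regime a strictly larger subset of $\partial c(p_{a^*})$ is allowed, and the admissibility cut $h_i - \beta^* \leq c$ for additional contracts is looser or tighter depending on which regime we are in. A careful statement of Algorithm \ref{alg:contracts-full} must branch on the sign of $\beta^*$, and the proof must verify in each branch both (i) that every menu built by the algorithm attains the optimum identified in Proposition \ref{prop:contracts-minimal-opt}, and (ii) that every optimal $G^*$ corresponds to some legal tiebreaker choice via the decomposition above. The remaining work is routine bookkeeping with subgradients and the fact that adding extra affine functions weakly below $G^*$ preserves feasibility of all constraints, since incentive and participation depend only on values at specific posteriors $p_a$, while LL is pointwise non-negativity at the corners of $\Delta_{\Omega}$.
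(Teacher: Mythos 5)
Your proposal takes essentially the same route as the paper: decompose the optimal $G^*$ into subtangent contracts, identify the anchor subtangent at $p_{a^*}$ and match its slope to $V_{a^*}$ (when limited liability binds, $\beta^* > 0$) or to $U_{a^*}$ (when participation binds, $\beta^* = 0$), and then verify the remaining subtangents satisfy exactly the admissibility conditions that Algorithm \ref{alg:contracts-extras} imposes. The only meaningful difference is organizational: the paper factors this into a separate Lemma (\ref{lemma:contracts-alg-subset}) showing that, given an optimal singleton $\{t\}$, the optimal menus containing $t$ are exactly $\{t\}\cup$ Algorithm \ref{alg:contracts-extras}$(c,\beta)$, and then shows that the optimal singletons are exactly the anchors Algorithm \ref{alg:contracts-full} can select, whereas you carry out both pieces inline within the decomposition argument.
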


\begin{algorithm}
  \caption{Computing any optimal contracts menu $T$ and its convex representation $G$.}
  \label{alg:contracts-full}
  Given a contracts problem, let $c$ be the convexified cost curve and $a^*$ the desired elicitable action. \\
  Define $V_{a^*} = \arg\max_{v \in \partial c(p_{a^*})} \min_{\omega}   v(\omega) - v \cdot p_{a^*} .$ \tcp*{the optimal subgradients of $c$ at $p_{a^*}$}
  Let $v \in V_{a^*}$. \\
  Define $\beta = -\left( c(p_{a^*}) + \min_{\omega} v(\omega) - v \cdot p_{a^*} \right)$. \tcp*{the shift required for limited liability}
  If $\beta < 0$: \tcp*{limited liability does not bind}
    \quad Let $\beta = 0$. \\
    \quad Let $v \leftarrow $ Algorithm \ref{alg:contracts-partic-subtans}$(c,p_{a^*})$. \tcp*{subgradients outside $V_{a^*}$ are also optimal}
  Let $t(\omega) = c(p_{a^*}) + v \cdot (\delta_{\omega} - p_{a^*}) + \beta$. \\
  Let $T = \{t\} \cup $ Algorithm \ref{alg:contracts-extras}$(c,\beta)$. \\
  Define $G(p) = \max_{t \in T} \bar{t}(p)$.
\end{algorithm}

\begin{algorithm}
  \caption{On input $c,p_{a^*}$, return subtangents not violating LL.}
  \label{alg:contracts-partic-subtans}
  Define $f(v) = c(p_{a^*}) + \min_{\omega} v(\omega) - v \cdot p_{a^*}$.  \tcp*{minimum payment of subtangent with slope $v$}
  Define $U_{a^*} = \{v \in \partial c(p_{a^*}) ~:~ f(v) \geq 0 \}$.  \tcp*{subgradients whose contracts don't violate LL}
  Let $v \in U_{a^*}$. \\
  Return $v$.
\end{algorithm}

\begin{algorithm}
  \caption{On input $c,\beta$, return subset of contracts satisfying LL and whose downward-$\beta$ shifts lie below $c$.}
  \label{alg:contracts-extras}
  Let $A = \{t ~:~ \bar{t}(p) \leq c(p) ~ (\forall p)\}$.  \tcp*{contracts lying below $c$}
  Let $A' = \{t + \beta ~:~ t \in A\}$. \\
  Let $B = \{t \in A' ~:~ \min_{\omega} t(\omega) \geq 0\}$. \tcp*{shifted contracts satisfying limited liability}
  Return an arbitrary subset of $B$.
\end{algorithm}
%\maneesha{can replace the whole algorithm 3 to just say select v from set $\{v\in \partial c(p_{a^*}) | c(p_{a^*})+v(\omega)-v \cdot p_{a^*}\geq 0 \forall\omega\in\Omega\}$.\bo{we discussed this, matter of taste}}
%\maneesha{First line of Alg 4 can include random contracts much below c. To avoid them we need to say $t(p_{a^*})=c(p_{a^*})$??\bo{Yes, great point. The issue is that we want to allow other contracts in other parts of the space, not tangent at $p_{a^*}$. I'll just add a note that this produces contracts that are dominated sometimes, and these should be trimmed.}}
First, we give a lemma.
\begin{lemma} \label{lemma:contracts-alg-subset}
  Let $T = \{t\}$ be an optimal menu for Program \ref{contracts-main}, where $t(\omega) = c(p_{a^*}) + v \cdot (\delta_{\omega} - p_{a^*}) + \beta$ for some $\beta \geq 0$.
  Then a menu $T'$ containing $t$ is optimal if and only if $T' = T \cup $ Algorithm \ref{alg:contracts-extras}$(c,\beta)$ for some choice of tiebreaker.
\end{lemma}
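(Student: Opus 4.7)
The plan is to first unpack what Algorithm \ref{alg:contracts-extras} actually returns: the set $B$ consists precisely of contracts $t''$ whose expected payment function satisfies $\bar{t''}(p) \le c(p) + \beta$ for every $p \in \P$ and whose payments satisfy $t''(\omega) \ge 0$ for every $\omega$. (Membership in $A'$ unshifts to a contract below $c$, and the limited liability filter is applied afterward.) The lemma is then equivalent to showing that $T'$ is optimal if and only if every $t'' \in T' \setminus T$ lies in this set.

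For the ($\impliedby$) direction, suppose $T' = T \cup S$ with $S \subseteq B$, and let $G' = \max_{t'' \in T'} \bar{t''}$. At each action vertex, $G'(p_a) \le c(p_a) + \beta \le c_a + \beta$ by the defining property of $B$ and Corollary \ref{cor:contracts-elicitable}. In particular $G'(p_{a^*}) \le c_{a^*} + \beta$, and since $t \in T'$ gives $G'(p_{a^*}) \ge \bar{t}(p_{a^*}) = c_{a^*} + \beta$, we have equality. This yields (i) the incentive constraint (\ref{eqn:contracts-incent-constr}), (ii) participation (since $\beta \ge 0$), (iii) limited liability directly from $B$, and (iv) the same objective value as $T$. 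So $T'$ is optimal.

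For the ($\implies$) direction, let $T'$ be any optimal menu containing $t$ and let $G'$ be its convex representation. Optimality pins $G'(p_{a^*}) = c_{a^*} + \beta$, and the incentive constraint then gives $G'(p_a) \le c_a + \beta$ for every $a \in A$. Since $G'(p_a) = \max_{t'' \in T'} \bar{t''}(p_a)$, each individual contract obeys $\bar{t''}(p_a) \le c_a + \beta$, and by the limited liability constraint on $T'$, each satisfies $t''(\omega) \ge 0$. The crucial step is to upgrade the finite-vertex bound to a pointwise bound on $\P$: write an arbitrary $p \in \P$ as $p = \sum_a \lambda_a p_a$ for some $\lambda \in \Delta_A$; by linearity of $\bar{t''}$ we obtain
\[
\bar{t''}(p) = \sum_a \lambda_a \bar{t''}(p_a) \le \sum_a \lambda_a c_a + \beta.
\]
Taking the infimum over all such $\lambda$ and using Definition \ref{def:cost-curve} of $c$, we conclude $\bar{t''}(p) \le c(p) + \beta$ for every $p \in \P$, hence $t'' \in B$. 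Therefore $T' \setminus T \subseteq B$, and $T'$ arises from Algorithm \ref{alg:contracts-extras} under the tiebreaker that returns exactly $T' \setminus T$.

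The main obstacle is the transition from the finite incentive inequalities at the vertices $\{p_a\}$ to the pointwise inequality on all of $\P$; everything else is bookkeeping. That step rests on two facts working together: $\bar{t''}$ is affine, so it respects convex combinations exactly, and the convexified cost curve $c$ is defined as the minimum cost attainable by any such convex combination, which is precisely what lets the bound $\sum_a \lambda_a c_a + \beta$ collapse to $c(p) + \beta$ after taking the infimum.
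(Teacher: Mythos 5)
Your proof is correct and follows essentially the same structure as the paper's: you unpack the output of Algorithm \ref{alg:contracts-extras} into the set $B$ of contracts $t''$ satisfying $\bar{t''} \le c + \beta$ pointwise and limited liability, then show optimality of $T'$ is equivalent to $T' \setminus T \subseteq B$. The one place you diverge is minor: for the $(\implies)$ direction the paper routes the pointwise inequality $\bar{t''}(p) \le c(p) + \beta$ through Corollary \ref{cor:contracts-feasible} (which uses the lower-convex-envelope characterization of $c$ applied to the convex $G$), whereas you establish it contract-by-contract via affinity of $\bar{t''}$ and the $\min$-over-convex-combinations definition of $c$; these are two packagings of the same fact. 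One tiny point worth making explicit in your $(\impliedby)$ direction: when you write $G'(p_a) \le c(p_a) + \beta$ ``by the defining property of $B$,'' you also need $\bar{t}(p_a) \le c(p_a) + \beta$ for the base contract $t$ itself, which holds because $\bar{t} - \beta$ is a subtangent of $c$ at $p_{a^*}$ (so $t \in B$ too), but this should be stated.
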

\begin{proof}
  First, we show that $T \cup $Algorithm \ref{alg:contracts-extras}$(c,\beta)$ is always optimal.
  
  Let $G(p) = \bar{t}(p)$ be the convex representation, let $T' = T \cup $ Algorithm \ref{alg:contracts-extras}$(c,\beta)$, and let $G'(p) = \max_{t' \in T'} \bar{t'}(p)$ be its convex representation.
  
  We have $G(p_{a^*}) = \bar{t}(p_{a^*}) = c(p_{a^*}) + \beta$.
  Further, by definition of $A,A',B$ in Algorithm \ref{alg:contracts-extras}, for all $t' \in T'$, we have $\bar{t'}(p_{a^*}) \leq c(p_{a^*}) + \beta$.
  So the objective values of $G'$ and $G$ are the same.
  Now we just need to show that $G'$ is feasible.
  By construction of $B$, all elements of $T'$ satisfy the limited liability constraint.
  And it follows immediately that $G'$ satisfies the conditions of Proposition \ref{prop:contracts-elicitable}, so it elicits $p_{a^*}$, i.e. satisfies the incentive and participation constraints.
  
  Now, we show any optimal menu containing $t$ can be produced.
  
  Let $G(p) = \max_{i \in I} h_i(p)$ be an optimal solution of Program \ref{contracts-main} where, for some $j \in I$, we have $h_j(p) = \bar{t}(p)$.
  As it is feasible, $\min_\omega h_i(\omega) \geq 0 ,\forall i$.
  Because $\{t\}$ is optimal, we have $h_j(p_{a^*})\geq h_i(p_{a^*}), \forall i$, that is, $h_j$ is tangent to $G$ at $p_{a^*}$.
  
  Corollary \ref{cor:contracts-feasible} implies that there exists a $\beta'$ such that $h_i(p)\leq G(p) \leq c(p) +\beta' ,\forall i\in I, \forall p$ and $h_j(p_{a^*}) = c(p_{a^*}) +\beta'$.
  We observe that $\bar{t}(p_{a^*}) = c(p_{a^*}) + \beta$.
  Therefore, $\beta' = \beta$.
  This gives for each $i$, $h_i\in A'$ and $h_i\in B$, which implies it is one possible among $T\cup$ Algorithm \ref{alg:contracts-extras}$(c,\beta)$.
  As $G$ was any optimal solution containing $t$, we proved the lemma.
\end{proof}

\begin{proof}[Proof of Proposition \ref{prop:contracts-all-opt}]
  We show that a contract $t$ is produced in Algorithm \ref{alg:contracts-full} if and only if $\{t\}$ is an optimal solution of Program \ref{contracts-main}.
  This combined with Lemma \ref{lemma:contracts-alg-subset} proves the result.
  
  By Proposition \ref{prop:contracts-minimal-opt}, Algorithm \ref{alg:contracts-minimal} computes an optimal single-contract menu $T = \{t\}$.
  In particular, the optimal objective value is $\bar{t}(p_{a^*}) = c(p_{a^*}) + \max\{0, \beta\}$.
  There are two cases in Algorithm \ref{alg:contracts-full}.
  
  In the case that $\beta \geq 0$, we observe that $\min_{\omega} t(\omega) = 0$, i.e. limited liability is binding.
  Using Corollary \ref{cor:contracts-feasible}, a singleton optimal is of the form $G(p) = \bar{t}(p) = c(p_{a^*})+ dt\cdot (p-p_{a^*}) + \beta$, where $dt\in \partial c(p_{a^*})$ and $\beta =  - (c(p_{a^*}) + \min_\omega dt(\omega) - dt\cdot p_{a^*} )$.
  We argue in this case that any optimal menu must contain some $t'$ with a subgradient in $V_{a^*}$.
  Let us assume the contrary.
  So $dt\in \partial c(p_{a^*}) \setminus V_{a^*}$.
  By the definition of $V_{a^*}$, there exists $v \in V_{a^*}$ s.t. $\min_\omega v(\omega) - v \cdot p_{a^*} > \min_\omega dt(\omega) - dt \cdot p_{a^*}$.
  This implies
  \begin{align*}
        c(p_{a^*}) - \min_\omega (c(p_{a^*}) + v(\omega) - v\cdot p_{a^*}) \ &< c(p_{a^*}) - \min_\omega (c(p_{a^*}) + dt(\omega) - dt\cdot p_{a^*})\\
        \implies c(p_{a^*}) + \beta & < G(p_{a^*})
  \end{align*}
  which is a contradiction.
  Therefore, the optimal singleton menus are exactly of the form $T$ computed in Algorithm \ref{alg:contracts-full}.
  
  In the case that $\beta < 0$, intuitively, no shift is required to satisfy limited liability.
  In this case, the participation constraint binds, i.e. we observe that at the optimal solution $\{t\}$ of Algorithm \ref{alg:contracts-minimal}, we have $\bar{t}(p_{a^*}) = c(p_{a^*})$.
  Thus, a singleton menu $\{t'\}$ is optimal if and only if $\bar{t'}(p_{a^*}) = c(p_{a^*})$ and $t'$ satisfies limited liability.
  This holds if and only if Algorithm \ref{alg:contracts-partic-subtans} can return $\nabla t'$.
  So in the case $\beta < 0$, it follows a singleton menu $\{t\}$ is optimal if and only if it is defined in Algorithm \ref{alg:contracts-full} for some tiebreaker.
\end{proof}

\subsection{Strict elicitation} \label{app:contracts-strict}

Recall that $v$ is a \emph{strict} subgradient of $G$ at $p$ if $G(p') > G(p) + v \cdot (p' - p)$ for all $p' \neq p$.
We let $\bar{\partial} G(p)$ denote the set of strict subgradients of $G$ at $p$.

\begin{lemma} \label{lemma:contracts-strict-iff}
  Let $a^*$ be elicitable.
  Then $(p_{a^*},c_{a^*})$ is a lower vertex and $v \in \bar{\partial} c(p_{a^*})$ $\iff$ the contract $t(\omega) = c(p_{a^*}) + v \cdot (\delta_{\omega} - p_{a^*})$ satisfies all incentive constraints (\ref{eqn:contracts-incent-constr}) for $a \neq a^*$ with strict inequality.
\end{lemma}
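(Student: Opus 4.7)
The plan is to translate the strict incentive inequalities into a geometric statement about the affine function $\bar{t}$ versus the set $X = \{(p_a, c_a) : a \in A\}$, and then analyze each direction by case-splitting on whether $p_a = p_{a^*}$ and, in the reverse direction, on the support of an optimizing mixture for $c$. Because $a^*$ is elicitable, Corollary \ref{cor:contracts-elicitable} gives $c(p_{a^*}) = c_{a^*}$, so $\bar{t}(p_{a^*}) = c_{a^*}$ and the incentive constraint (\ref{eqn:contracts-incent-constr}) for $a$ becomes $\bar{t}(p_a) < c_a$. So the lemma reduces to: $(p_{a^*},c_{a^*})$ is a lower vertex and $v \in \bar{\partial} c(p_{a^*})$ iff $\bar{t}(p_a) < c_a$ for every $a \neq a^*$.

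For the ($\Rightarrow$) direction, fix $a \neq a^*$. If $p_a \neq p_{a^*}$, the strict subgradient property of $v$ gives $c(p_a) > c(p_{a^*}) + v \cdot (p_a - p_{a^*}) = \bar{t}(p_a)$, and combining with $c(p_a) \leq c_a$ yields $\bar{t}(p_a) < c_a$. If $p_a = p_{a^*}$, then $\bar{t}(p_a) = c_{a^*}$, and applying the lower vertex property to $\lambda = \delta_a \in \Delta_{A \setminus \{a^*\}}$ (which satisfies $\E_\lambda p_{a'} = p_{a^*}$) yields $c_a > c_{a^*}$, giving the strict inequality.

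For the ($\Leftarrow$) direction, assume $\bar{t}(p_a) < c_a$ for all $a \neq a^*$. To prove $(p_{a^*},c_{a^*})$ is a lower vertex, take any $\lambda \in \Delta_{A \setminus \{a^*\}}$ with $\E_\lambda p_a = p_{a^*}$; averaging the strict inequalities and using affinity of $\bar{t}$ yields $c_{a^*} = \bar{t}(p_{a^*}) = \E_\lambda \bar{t}(p_a) < \E_\lambda c_a$, as required. To prove $v \in \bar{\partial} c(p_{a^*})$, pick any $p \neq p_{a^*}$ and choose an optimizer $\mu \in \Delta_A$ for $c(p)$, so that $p = \E_\mu p_a$ and $c(p) = \E_\mu c_a$. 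Either $\mu$ has no mass on $a^*$, in which case averaging the strict inequalities gives $\bar{t}(p) < c(p)$; or $\mu(a^*) = \alpha \in (0,1)$, in which case we split $\mu = \alpha \delta_{a^*} + (1-\alpha)\mu'$ with $\mu' \in \Delta_{A \setminus \{a^*\}}$ and apply the same averaging to $\mu'$ (the $\alpha \delta_{a^*}$ part contributes equally to $\bar{t}(p)$ and $c(p)$, namely $\alpha c_{a^*}$, so the strictness is inherited). The case $\mu = \delta_{a^*}$ is excluded because it would force $p = p_{a^*}$.

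The mostly straightforward content is the averaging argument in the reverse direction; the main subtlety is the $p_a = p_{a^*}$ case of the forward direction, which is exactly what the lower vertex hypothesis (as opposed to mere vertexhood) is there to handle, and the case split on $\mu(a^*)$ in the reverse direction, which is why the full strength of ``strict subgradient everywhere, not just at the $p_a$'s'' falls out of a hypothesis that only constrains the values $\{p_a\}$.
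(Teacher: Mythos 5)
Your proof is correct and follows essentially the same approach as the paper: reduce the strict incentive constraints to $\bar{t}(p_a) < c_a$, use the strict subgradient and lower vertex properties for the forward direction, and average the strict inequalities over mixtures $\lambda$ supported off $a^*$ for the reverse direction. One small point worth noting is that your explicit case split on $p_a = p_{a^*}$ in the forward direction is a genuine improvement in rigor: the paper's one-line claim $\bar{t}(p_a) < c(p_a) \leq c_a$ silently fails when $p_a = p_{a^*}$, and your invocation of the lower vertex property with $\lambda = \delta_a$ is exactly what is needed to close that gap.
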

\begin{proof}
  ($\implies$)
  Suppose $(p_{a^*},c_{a^*})$ is a lower vertex and suppose $v \in \bar{\partial} c(p_{a^*})$.
  Let $t(\omega) = c(p_{a^*}) + v \cdot (\delta_{\omega} - p_{a^*})$.
  By definition, $\bar{t}(p_a) < c(p_a) \leq c_a$ for all $a \neq a^*$, showing that the incentive constraints are satisfied strictly.
  
  ($\impliedby$)
  We have $\bar{t}(p_{a^*}) = c(p_{a^*})$ and $\bar{t}(p_a) = c(p_{a^*}) + v \cdot (p_a - p_{a^*})$.
  So the incentive constraint implies
  \[ c(p_{a^*}) + v \cdot (p_a - p_{a^*}) \leq c_a , \]
  with strict inequality unless $a=a^*$.
  Consider any $p \neq p_{a^*}$.
  We can write $p = \E_{a \sim \lambda} p_a$, where $\lambda(a^*) \neq 1$.
  Taking the expectation of both sides of the above equation, we obtain $c(p_{a^*}) + v \cdot (p - p_{a^*}) < \E_{a \sim \lambda} c_a$.
  The minimum of the right hand side over all valid $\lambda$ is $c(p)$, so $v$ is a strict subgradient.
  Now if $\lambda \in \Delta_{A \setminus \{a^*\}}$ with $\E_{a \sim \lambda} p_a = p_{a^*}$, then taking the expectation of both sides yields $c(p_{a^*}) < \E_{a \sim \lambda} c_a$.
  This implies $c(p_{a^*}) = c_a$ (otherwise, we obtain a contradiction) and that $(p_{a^*},c_{a^*})$ is a lower vertex.
\end{proof}

\begin{corollary} \label{cor:strict-char}
  An action $a^*$ is strictly elicitable if and only if $(p_{a^*}, c_{a^*})$ is a lower vertex of $\mathrm{convhull}(\{(p_a,c_a) : a \in A\})$.
\end{corollary}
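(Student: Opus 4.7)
The plan is to prove the two directions separately, leaning on Lemma \ref{lemma:contracts-strict-iff} for the reverse direction and on a convexity argument together with Corollary \ref{cor:contracts-elicitable} for the forward direction.

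For the ($\Longleftarrow$) direction, suppose $(p_{a^*},c_{a^*})$ is a lower vertex. Being a lower vertex in particular means $(p_{a^*},c_{a^*})$ lies on the lower boundary of $\mathrm{convhull}(\{(p_a,c_a):a\in A\})$, so Corollary \ref{cor:contracts-elicitable} implies $a^*$ is elicitable and $c(p_{a^*})=c_{a^*}$. Because the point is a vertex, $c$ has a \emph{strict} subgradient at $p_{a^*}$, so one can pick some $v \in \bar\partial c(p_{a^*})$. Apply Lemma \ref{lemma:contracts-strict-iff} with this $v$ to produce a contract $t(\omega)=c(p_{a^*})+v\cdot(\delta_\omega-p_{a^*})$ whose incentive constraints (\ref{eqn:contracts-incent-constr}) hold strictly for every $a\neq a^*$. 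Finally, shift $t$ upward by the smallest $\beta\ge 0$ needed to make $M_{\bar t+\beta}(\omega)\ge 0$ and $\bar t(p_{a^*})+\beta\ge c_{a^*}$, which is finite since $\Omega$ is finite. Adding a constant $\beta$ preserves strictness of the incentive constraints and gives a feasible menu $G=\overline{t+\beta}$ that strictly elicits $a^*$.

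For the ($\Longrightarrow$) direction, suppose $a^*$ is strictly elicitable via some subdifferentiable convex $G$. Since strict elicitability implies elicitability, Corollary \ref{cor:contracts-elicitable} gives $c(p_{a^*})=c_{a^*}$, so the point lies on the lower boundary. It remains to show it is in fact a lower vertex, i.e.\ that for every $\lambda\in\Delta_{A\setminus\{a^*\}}$ with $\E_{a\sim\lambda}p_a=p_{a^*}$ we have $\E_{a\sim\lambda}c_a>c_{a^*}$. Suppose for contradiction such a $\lambda$ exists with $\E_{a\sim\lambda}c_a\le c_{a^*}$. Strict incentive compatibility gives, for every $a\neq a^*$,
\[
G(p_{a^*})-c_{a^*} > G(p_a)-c_a.
\]
Taking the expectation over $a\sim\lambda$ yields $G(p_{a^*})-c_{a^*} > \E_{a\sim\lambda}G(p_a)-\E_{a\sim\lambda}c_a \ge \E_{a\sim\lambda}G(p_a)-c_{a^*}$, so $G(p_{a^*})>\E_{a\sim\lambda}G(p_a)$. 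This contradicts convexity of $G$, since $p_{a^*}=\E_{a\sim\lambda}p_a$ gives $G(p_{a^*})\le\E_{a\sim\lambda}G(p_a)$ by Jensen's inequality.

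The main obstacle is the forward direction: one has to realize that the lower-vertex condition is exactly what rules out a convex combination collapsing the strict incentive slack, and to combine that with Jensen's inequality on $G$. The reverse direction is comparatively mechanical once Lemma \ref{lemma:contracts-strict-iff} is in hand; the only care needed is ensuring the upward shift used to enforce limited liability and participation does not destroy the strictness already obtained.
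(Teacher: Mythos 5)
Your proof is correct, and both directions track the paper's structure closely. The reverse direction ($\Longleftarrow$) is essentially the paper's: lower vertex gives a nonempty set of strict subgradients $\bar\partial c(p_{a^*})$, Lemma \ref{lemma:contracts-strict-iff} supplies a contract with strict incentive constraints, and participation holds since $\bar t(p_{a^*})=c(p_{a^*})=c_{a^*}$. (One small caution: the upward shift you add to enforce limited liability is harmless but not needed here, since Definition \ref{def:strict-elicit} requires only participation and strict incentives, not (\ref{eqn:contracts-LL-constr}).) The forward direction ($\Longrightarrow$) is where you genuinely deviate: the paper passes to the subtangent of $G$ at $p_{a^*}$ and invokes the ($\impliedby$) half of Lemma \ref{lemma:contracts-strict-iff}, whereas you argue directly on $G$, taking $\E_{a\sim\lambda}$ of the strict incentive inequality and deriving a contradiction with Jensen's inequality. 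The kernel (averaging the strict IC inequality over $\lambda$) is the same one that appears inside Lemma \ref{lemma:contracts-strict-iff}, but your formulation avoids the detour through an affine subtangent entirely. That is a mild advantage: the paper's route technically requires shifting the subtangent down so that $\bar t(p_{a^*})=c(p_{a^*})$ (rather than $G(p_{a^*})$, which may be strictly larger) before Lemma \ref{lemma:contracts-strict-iff} can be applied, a step the paper leaves implicit. Your Jensen argument sidesteps that subtlety cleanly.
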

\begin{proof}
  ($\implies$)
  If $a^*$ is strictly elicited by $G$, then in particular $G$ has a contract $t$ satisfying all incentive constraints for $a \neq a^*$ with strict inequality, and Lemma \ref{lemma:contracts-strict-iff} implies the claim.
  
  ($\impliedby$)
  If $(p_{a^*},c_{a^*})$ is a lower vertex, then $\bar{\partial} c(p_{a^*})$ is nonempty by \cite{bertsimas1997introduction} (Definition 2.7 and Theorem 2.3).
 %\maneesha{I think it follows from Def 2.7 of of "Introduction to Linear Optimization" by Dimitris Bertsimas and John N. Tsitsiklis. -  might be good to refer this for technical definitions of vertex and extreme point also }
  Therefore, Lemma \ref{lemma:contracts-strict-iff} supplies a contract $t$ satisfying all incentive constraints strictly for $a \neq a^*$.
  By construction $\bar{t}(p_{a^*}) = c(p_{a^*})$, so participation is satisfied.
  So the menu $\{t\}$ strictly elicits $a^*$.
\end{proof}

\begin{algorithm}
  \caption{Computing a contract that strictly elicits $a^*$ almost-optimally.}
  \label{alg:contracts-strict}
  \bo{extend this to add a small quadratic as well?}
  Given a contracts problem and $\epsilon > 0$, let $c$ be the convexified cost curve and $a^*$ the desired elicitable action. \\
  Define $V_{a^*} = \arg\max_{v \in \partial c(p_{a^*})} \min_{\omega}   v(\omega) - v \cdot p_{a^*} .$ \tcp*{the optimal subgradients of $c$ at $p_{a^*}$}
  Let $v_1 \in V_{a^*}$. \\
  Define $\beta_1 = \max\left\{0 ~,~ -\left( c(p_{a^*}) + \min_{\omega} v_1(\omega) - v_1 \cdot p_{a^*} \right) \right\}$. \tcp*{the shift required for limited liability}
  Define $t_1(\omega) = c(p_{a^*}) + v_1 \cdot (\delta_{\omega} - p_{a^*}) + \beta_1$. \\
  Let $v_2 \in \bar{\partial} c(p_{a^*})$. \tcp*{a strict subgradient}
  Define $\beta_2 = \max\left\{0 ~,~ -\left( c(p_{a^*}) + \min_{\omega} v_2(\omega) - v_2 \cdot p_{a^*} \right) \right\}$. \tcp*{the shift required for limited liability}
  Define $t_2(\omega) = c(p_{a^*}) + v_2 \cdot (\delta_{\omega} - p_{a^*}) + \beta_2$. \\
  Define $\alpha = \min\left\{\frac{1}{2} , \frac{\epsilon}{\beta_2 - \beta_1} \right\}$. \\
  Define $t(\omega) = (1-\alpha) t_1(\omega) + \alpha t_2(\omega)$ . \\
  Let $T = \{t\}$ and $G = \bar{t}$. \tcp*{$\bar{t}$ is the expected payment function, Definition \ref{def:subtangent}}
\end{algorithm}

\begin{remark} \label{remark:contracts-strict-alg}
  One can find a strict subgradient, i.e. a member of $\bar{\partial} c(p_{a^*})$, relatively easily if $(p_{a^*},c_{a^*})$ is a lower vertex.
  The subdifferential $\partial c(p_{a^*})$ is a polytope, and any point in its relative interior will suffice.
  For example, if the polytope is bounded, one can take a uniform convex combination of its vertices.
\end{remark}

Say that a solution is \emph{$\epsilon$-optimal} if it is feasible and has objective value within $\epsilon$ of the optimal solution.
\begin{prop} \label{prop:contracts-compute-strict}
  In a contracts setting, if $a^*$ is strictly elicitable and $\epsilon > 0$, then Algorithm \ref{alg:contracts-strict} computes a menu $G$ that is feasible, that is $\epsilon$-optimal, and that strictly elicits $a^*$.
\end{prop}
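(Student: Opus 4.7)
The plan is to verify the three required properties (feasibility, strict elicitation, $\epsilon$-optimality) of the contract $t = (1-\alpha) t_1 + \alpha t_2$ produced by Algorithm \ref{alg:contracts-strict}, by exploiting the fact that the relevant quantities ($\bar{t}(p_a) - c_a$, minimum payment, etc.) are linear in the convex combination of $t_1$ and $t_2$.

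First, I would establish the properties of the two building blocks. The contract $t_1$ is exactly the output of Algorithm \ref{alg:contracts-minimal}, so by Proposition \ref{prop:contracts-minimal-opt} it is an optimal single-contract solution; in particular it elicits $a^*$, satisfies limited liability, and has objective value $\bar{t_1}(p_{a^*}) = c(p_{a^*}) + \beta_1$. The contract $t_2$ is built from a strict subgradient $v_2 \in \bar{\partial} c(p_{a^*})$, which exists by Corollary \ref{cor:strict-char} and the discussion in Remark \ref{remark:contracts-strict-alg} (using that $(p_{a^*}, c_{a^*})$ is a lower vertex by strict elicitability). By Lemma \ref{lemma:contracts-strict-iff} the unshifted version $t_2 - \beta_2$ satisfies all incentive constraints strictly, and shifting upward by $\beta_2 \geq 0$ preserves this; the shift is chosen precisely so that $t_2$ satisfies limited liability. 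Its objective value is $c(p_{a^*}) + \beta_2$, with $\beta_2 \geq \beta_1$ by optimality of $v_1$.

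Next I would show that all three properties of $t$ follow by linearity. For limited liability: $t(\omega) = (1-\alpha) t_1(\omega) + \alpha t_2(\omega) \geq 0$ since both summands are nonnegative and $\alpha \in [0,1]$. For strict elicitation: for every $a \neq a^*$,
\begin{align*}
    \bar{t}(p_a) - c_a &= (1-\alpha)(\bar{t_1}(p_a) - c_a) + \alpha(\bar{t_2}(p_a) - c_a) \\
                        &\leq (1-\alpha)\beta_1 + \alpha(\bar{t_2}(p_a) - c_a) \\
                        &< (1-\alpha)\beta_1 + \alpha \beta_2 \\
                        &= \bar{t}(p_{a^*}) - c_{a^*},
\end{align*}
where the last equality uses $\bar{t_1}(p_{a^*}) = c(p_{a^*}) + \beta_1$ and $\bar{t_2}(p_{a^*}) = c(p_{a^*}) + \beta_2$, and the strict inequality comes from Lemma \ref{lemma:contracts-strict-iff} applied to $t_2$ (and uses $\alpha > 0$). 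Participation follows because $\bar{t}(p_{a^*}) - c_{a^*} = (1-\alpha)\beta_1 + \alpha\beta_2 \geq 0$.

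For $\epsilon$-optimality: the objective value of $\{t\}$ is $\bar{t}(p_{a^*}) = c(p_{a^*}) + (1-\alpha)\beta_1 + \alpha\beta_2$, and the optimal value of Program \ref{contracts-main} is $c(p_{a^*}) + \beta_1$ by Proposition \ref{prop:contracts-minimal-opt}. The gap is $\alpha(\beta_2 - \beta_1) \leq \epsilon$ by the choice of $\alpha$ (treating $\beta_2 = \beta_1$ as the trivially $0$-optimal case). I anticipate that the main obstacle will just be this edge-case bookkeeping in defining $\alpha$ when $\beta_2 = \beta_1$, and the careful citation of Lemma \ref{lemma:contracts-strict-iff} to justify that the subtangent built from a strict subgradient gives \emph{strict} incentive inequalities; the linearity argument itself is straightforward.
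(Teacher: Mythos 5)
Your proof is correct and follows essentially the same approach as the paper: decompose $t$ as a convex combination of $t_1$ (optimal, feasible) and $t_2$ (strictly eliciting, feasible), verify all three properties by linearity, and bound the objective gap by $\alpha(\beta_2 - \beta_1) \leq \epsilon$. Your version is somewhat more explicit than the paper's -- in particular you spell out the strict-elicitation chain of inequalities that the paper leaves to a one-line remark -- but the structure, the key lemmas invoked, and the edge-case handling of $\beta_2 = \beta_1$ are the same.
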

\begin{proof}
  \bo{Consider adding the quadratic?}
  Observe that the optimal objective value is $t_1(p_{a^*}) = c(p_{a^*}) + \beta_1$.
  Observe that $\beta_1 \leq \beta_2$ and that the objective value of Algorithm \ref{alg:contracts-strict}'s output is
  \begin{align*}
    t(p_{a^*})
    &= c(p_{a^*}) + (1-\alpha) \beta_1 + \alpha \beta_2  \\
    &=    c(p_{a^*}) + \beta_1 + \alpha (\beta_2 - \beta_1)  \\
    &\leq c(p_{a^*}) + \beta_1 + \epsilon .
  \end{align*}
  (In the case $\beta_2 = \beta_1$, we obtain the same conclusion.)
  It only remains to show that the output is feasible and strictly elicits $a^*$.
  For feasibility, use that the menus $\{t_1\}$ and $\{t_2\}$ are both feasible by construction, and $t$ is a convex combination.
  Strict elicitation follows because $\{t_1\}$ elicits $a^*$, $\{t_2\}$ strictly elicits $a^*$, and $t$ is a convex combination with positive weight on $t_2$.
\end{proof}

\begin{remark} \label{remark:contracts-quadratic}
  In the scoring rule setting, it is natural to prefer \emph{strictly} convex menus $G$ because they lead to \emph{strictly} proper scoring rules.
  If one wishes to avoid ambivalence in the contract setting, and if $a^*$ is strictly elicitable, one can modify the output of Algorithm \ref{alg:contracts-strict} by adding an arbitrarily small quadratic to $G$.
  This results in strict convexity without changing the incentives to elicit $a^*$ strictly.
\end{remark}

\section{Additional figures}

%\begin{figure}[H]
%\centering
%  \centering
%  \captionsetup{width=.9\linewidth,font=small}
%\begin{tikzpicture} [scale=0.75]
%\begin{axis}
%    \addplot [mark=none,  blue,   ultra thick, dotted] coordinates { (0.5,-1.1) (0.7,1.35)};
%    \addplot [mark=none,  red,   ultra thick, dotted] coordinates { (0.5,-3) (0.7,1.5)};
%\end{axis}
%\draw[thick, red, smooth] (0.5, 5) .. controls (2, 3) and (4.1, 2.2) .. (5.8, 5.5) ;     
     %\node [draw] at (5.3,5.2){\tiny G}; 
%\draw[thick, blue, smooth] (0.5, 4.5) .. controls (2, 3.9) and (4.1, 3.7) .. (5.8, 5);     
     %\node [draw] at (5.3,5.2){\tiny G}; 
%\end{tikzpicture}
%\caption{Here we show an illustration of how making $\bar{G}$ taller worsens limited liability.\bo{TODO showing the constraint point $(p_0,1)$ and maybe making both curves nonnegative and taller, which is more realistic.}}
%\label{fig:extreme-tangent}
%\end{figure}

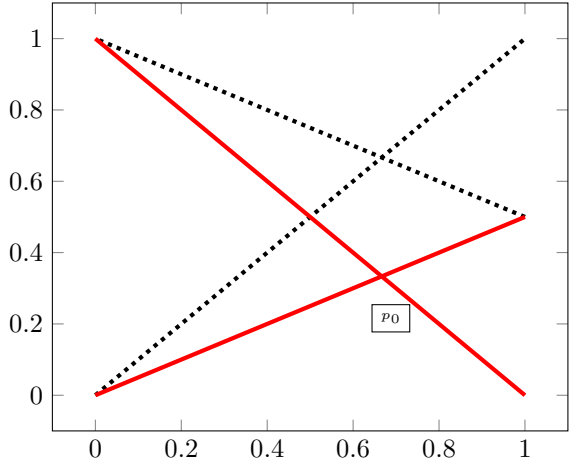
\begin{figure}[H]
\centering
\caption{Comparing our solution for information acquisition to related work. In this example, the signal $S$ completely reveals $\Omega \in \{0,1\}$, which is distributed as a biased coin flip.The black dotted curves are the optimal solutions of related works of \citet{li2020optimization,chen2021optimal}, when the scores are bounded by $1$. The solid red curve is the optimal solution in our model with limited liability when the cost is $\kappa =1/2$.}
\label{fig:compare-related-work}
\begin{subfigure}{.5\textwidth}
  \centering
  \captionsetup{width=.9\linewidth,font=small}
\begin{tikzpicture}
\begin{axis}
    \addplot [mark=none,  black,   ultra thick, dotted] coordinates { (0,0) (1,1)};
    \addplot [mark=none,  black,   ultra thick, dotted] coordinates { (1,0.5) (0,1)};
    \addplot [mark=none,  red,  ultra thick] coordinates { (0,0) (1,0.5)};
    \addplot [mark=none,  red,  ultra thick] coordinates { (1,0) (0,1)};
\end{axis} 
\node [draw] at (4.5,1.5){\tiny $p_0$};
\end{tikzpicture}
\caption{ Our work requires minimizing expected payment with limited liability, i.e. all payments are non-negative. \citet{li2020optimization} places an upper bounded on the payments and maximizes the difference in expected payment of reporting posterior to reporting prior. It so happens in this example that our score is also within the limits but there can be cases where it is not.}
\end{subfigure}%
\begin{subfigure}{.5\textwidth}
  \centering
  \captionsetup{width=.9\linewidth,font=small}
\begin{tikzpicture}
\begin{axis}
    \addplot [mark=none,  black,   ultra thick, dotted] coordinates { (0.65,0) (1,1)};
    \addplot [mark=none,  black,   ultra thick, dotted] coordinates { (0.65,0) (0,1)};
    \addplot [mark=none,  red,  ultra thick] coordinates { (0,0) (1,0.5)};
    \addplot [mark=none,  red,  ultra thick] coordinates { (1,0) (0,1)};
\end{axis} 
\node [draw] at (4.5,0.5){\tiny $p_0$};
\end{tikzpicture}
\caption{ The comparable part of \citet{chen2021optimal} places bounds on the payments and maximizes the agent’s worst-case payoff increment between reporting his posterior prediction and reporting his prior prediction.}
\end{subfigure}%
\end{figure}

\end{document}